\definecolor{DarkRed}{rgb}{0.5,0.1,0.1}
\definecolor{DarkBlue}{rgb}{0.1,0.1,0.5}
\newtheorem{theorem}{Theorem}
\newtheorem{lemma}{Lemma}
\newtheorem{corollary}{Corollary}
\newtheorem*{definition*}{Definition}
\theoremstyle{definition}
\newtheorem{remark}{Remark}
\newtheorem{definition}{Definition}
\newtheorem*{claim*}{Claim}
\newtheorem*{proposition*}{Proposition}
\newtheorem*{lemma*}{Lemma}
\newtheorem*{problem*}{Problem}
\newtheorem{mdresult}[theorem]{Theorem}
\newcommand{\ignore}[1]{}
\newcommand{\ceil}[1]{\left\lceil{#1}\right\rceil}
\newcommand{\floor}[1]{\left\lfloor{#1} \right\rfloor}
\newcommand{\bydef}{\stackrel{\mathrm{def}}{=}}
\newcommand{\poly}{\operatorname{poly}}
\newcommand{\abs}[1]{\left|#1\right|} 
\newcommand{\Side}[1]{\operatorname{Side}_{#1}}
\newcommand{\Region}[1]{\operatorname{Region}_{#1}}
\newcommand{\Match}[1]{\operatorname{Match}_{#1}}
\newcommand{\Cuts}[1]{\operatorname{Cuts}_{#1}}
\newcommand{\MinCut}[1]{\operatorname{MinCut}_{#1}}
\newcommand{\Small}[1]{\operatorname{Sm}_{#1}}
\newcommand{\FindSmall}[0]{\mathsf{FindSmall}}
\newcommand{\Update}[0]{\mathsf{Update}}
\newcommand{\Expand}[0]{\mathsf{Expand}}
\newcommand{\cir}{$^{\circ}$}
\newcommand{\rmnum}[1]{\romannumeral #1}
\newcommand{\Rmnum}[1]{\expandafter\@slowromancap\romannumeral #1@}
\title{The Structure of Minimum Vertex Cuts\thanks{This work was supported by NSF grants CCF-1637546 and CCF-1815316.}}
\author{Seth Pettie\\ University of Michigan \\ \footnotesize
\texttt{pettie@umich.edu}
\and
Longhui Yin\\ Tsinghua University\\
\footnotesize \texttt{ylh17@mails.tsinghua.edu.cn}
}
\date{}
\begin{document}
\maketitle

\begin{abstract}
In this paper we continue a long line of work 
on representing the \emph{cut structure} of graphs.  
We classify the types minimum \emph{vertex} cuts, 
and the possible relationships between multiple minimum vertex cuts.

As a consequence of these investigations, we exhibit a
simple $O(\kappa n)$-space data structure that can quickly 
answer pairwise $(\kappa+1)$-connectivity queries in a 
$\kappa$-connected graph.  We also show how to compute the 
``closest'' $\kappa$-cut to every vertex in near linear $\tilde{O}(m+\poly(\kappa)n)$ time.

\end{abstract}

\thispagestyle{empty}
\setcounter{page}{0}

\newpage

\section{Introduction}

One of the strong themes running through graph theory is to understand the \emph{cut structure} of graphs and to apply these structural theorems to solve algorithmic and data structural problems.
Consider the following exemplars of this line of work:
\begin{description}
  \item[Gomory-Hu Tree.] Gomory and Hu (1961)~\cite{GomoryH61}
  proved that any weighted, undirected graph 
  $G=(V,E)$ can be replaced by a weighted, undirected tree
  $T=(V,E_T)$ such that for every $s,t\in V$, 
  the minimum $s$-$t$ cut partition in $T$ 
  (removing a single edge, partitioning $V$ into two sets) 
  corresponds to a minimum $s$-$t$ cut partition in $G$.  These
  are sometimes called \emph{cut-equivalent trees}~\cite{AbboudKT20}.
  \item[Cactus Representations.] Dinitz, Karzanov, and Lomonosov (1976)~\cite{DinicKL76} proved that all the \emph{global} 
  minimum edge-cuts of any weighted, undirected graph $G=(V,E)$ 
  could be succinctly encoded as an (unweighted) \emph{cactus graph}.
  A cactus is a multigraph in which every edge participates in 
  exactly one cycle.  It was proved that there exists a cactus
  $C=(V_C,E_C)$ and an embedding $\phi : V\rightarrow V_C$ such
  that the minimum edge-cuts in $C$ (2 edges in a common cycle)
  are in 1-1 correspondence with the minimum edge-cuts of $G$.
  A corollary of this theorem is that there are at most $n \choose 2$
  minimum edge-cuts.
  \item[Picard-Queyrenne Representation.] In a \emph{directed} $s$-$t$
  flow network there can be exponentially many min $s$-$t$ cuts.
  Picard and Queyrenne (1980)~\cite{PicardQ80} proved that the family 
  $\mathscr{S}=\{S \mid (S,\overline{S}) \mbox{ is a min $s$-$t$}\}$ 
  corresponds 1-1 with the downward-closed sets of a partial order,
  and is therefore closed under union and intersection.
  \item[Block Trees, SPQR Trees, and Beyond.] Whitney (1932)~\cite{Whitney32a,Whitney32b}
  proved that the cut vertices (articulation points) of an undirected graph $G=(V,E)$ partition $E$ into single edges and 2-edge connected components (blocks).  This yields the \emph{block tree} representation.  Di Battista and Tamassia (1989)~\cite{BattistaT89,DiBattistaT96} formally defined the \emph{SPQR tree}, which succinctly encodes all 2-vertex cuts in a biconnected graph,
  and Kanevsky, Tamassia, Di Battista, and Chen~\cite{KanevskyTBC91}
  extended this structure to represent 3-vertex cuts in a triconnected 
  graph.\footnote{Many of the structural insights behind~\cite{DiBattistaT96,KanevskyTBC91} were latent 
  in prior work. See, for example. Mac Lane~\cite{MacLane37} (1937), 
  Tutte~\cite{Tutte61,Tutte66} (1961-6), Hopcroft and Tarjan~\cite{HopcroftT73},
  and Cunningham and Edmonds~\cite{CunninghamE80}.}
\end{description}

\begin{figure}
    \centering

\begin{tabular}{c@{\hspace{1.5cm}}c}
\scalebox{.5}{\includegraphics{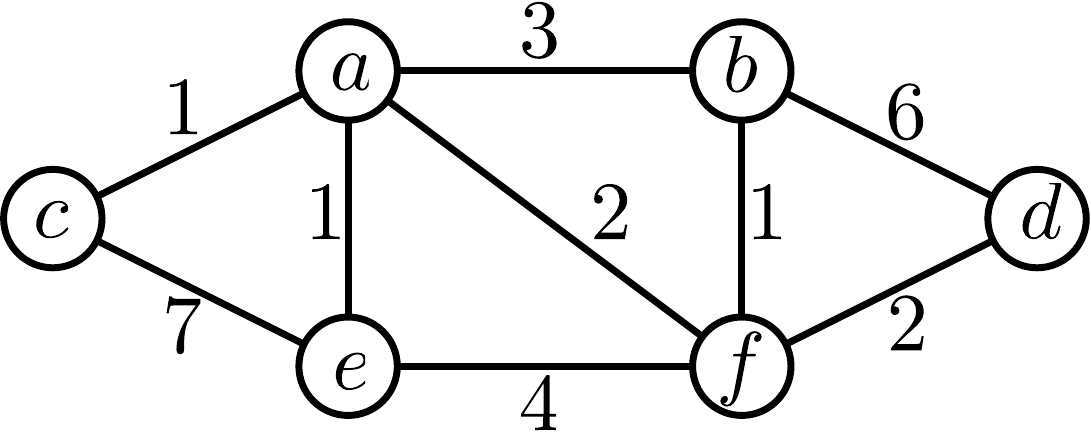}} 
&  
\scalebox{.5}{\includegraphics{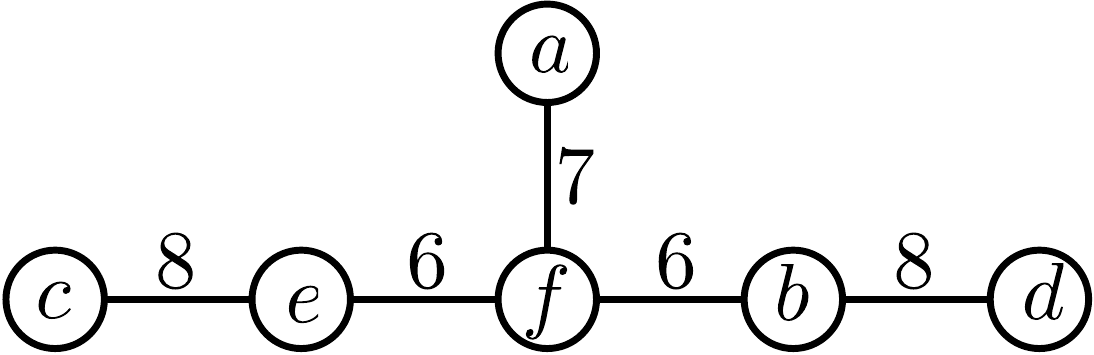}} 
\\
(a) & (b)\\
&\\
\scalebox{.5}{\includegraphics{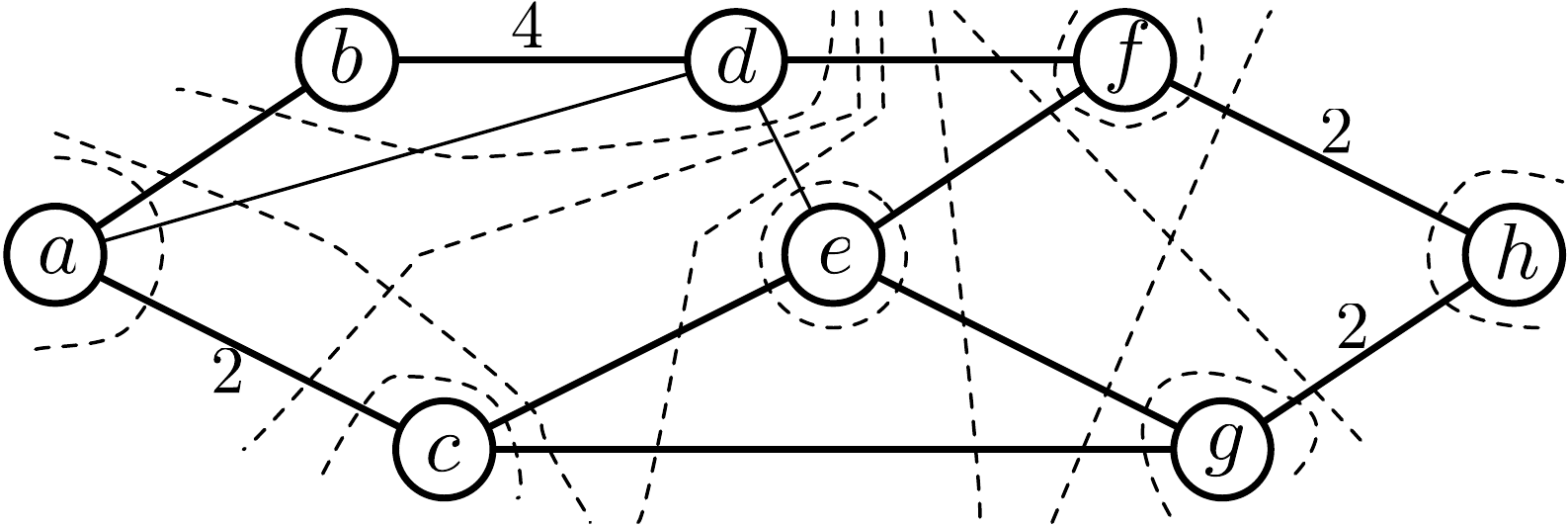}} 
&  
\scalebox{.5}{\includegraphics{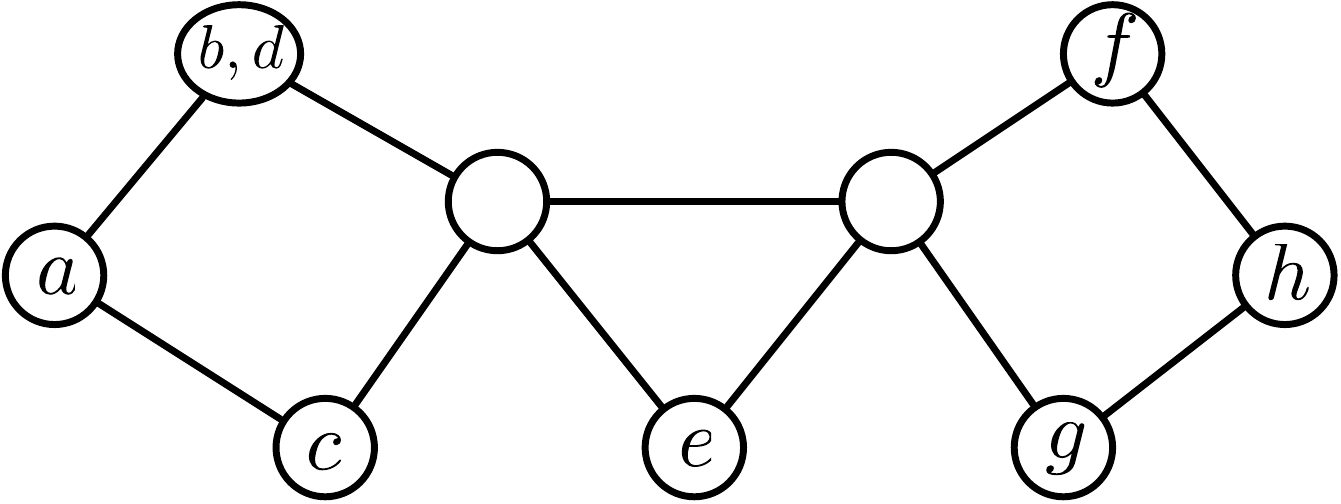}} 
\\
(c) & (d)\\
&\\
\scalebox{.5}{\includegraphics{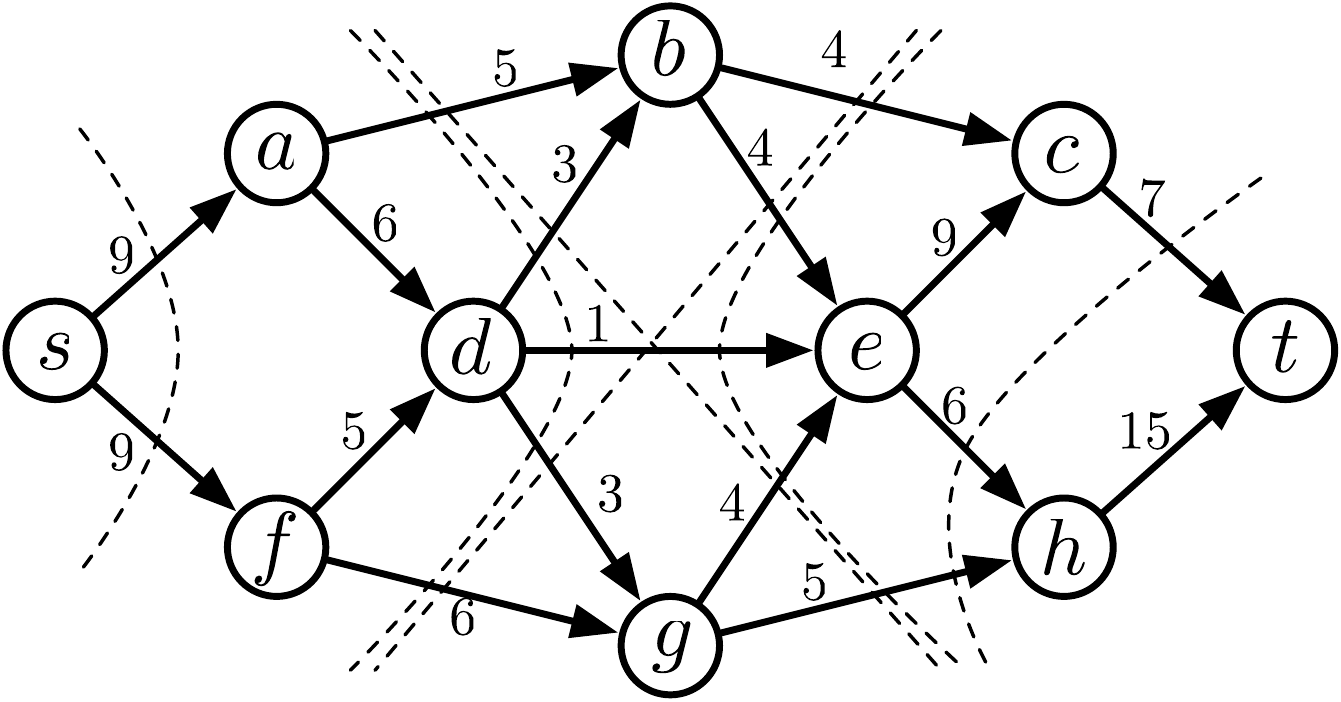}} 
&  
\scalebox{.5}{\includegraphics{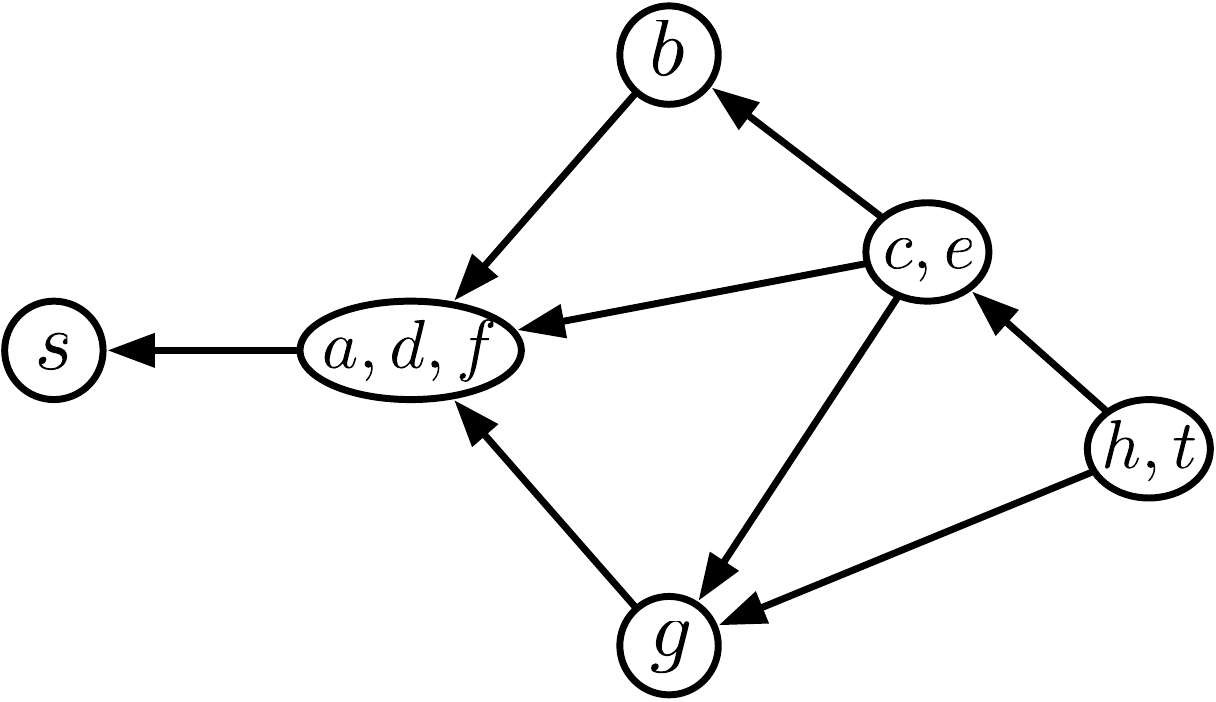}} 
\\
(e) & (f)\\
&\\
\scalebox{.45}{\includegraphics{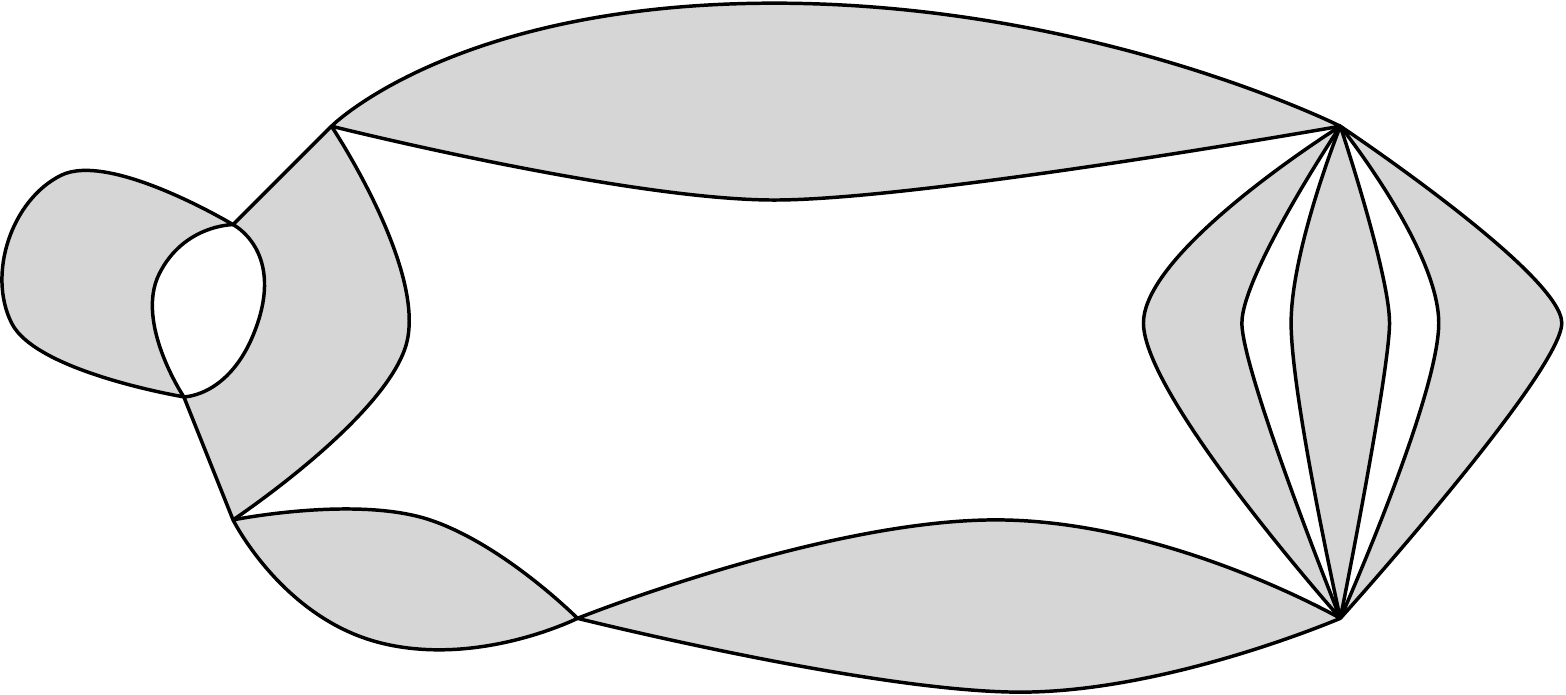}} 
&  
\scalebox{.45}{\includegraphics{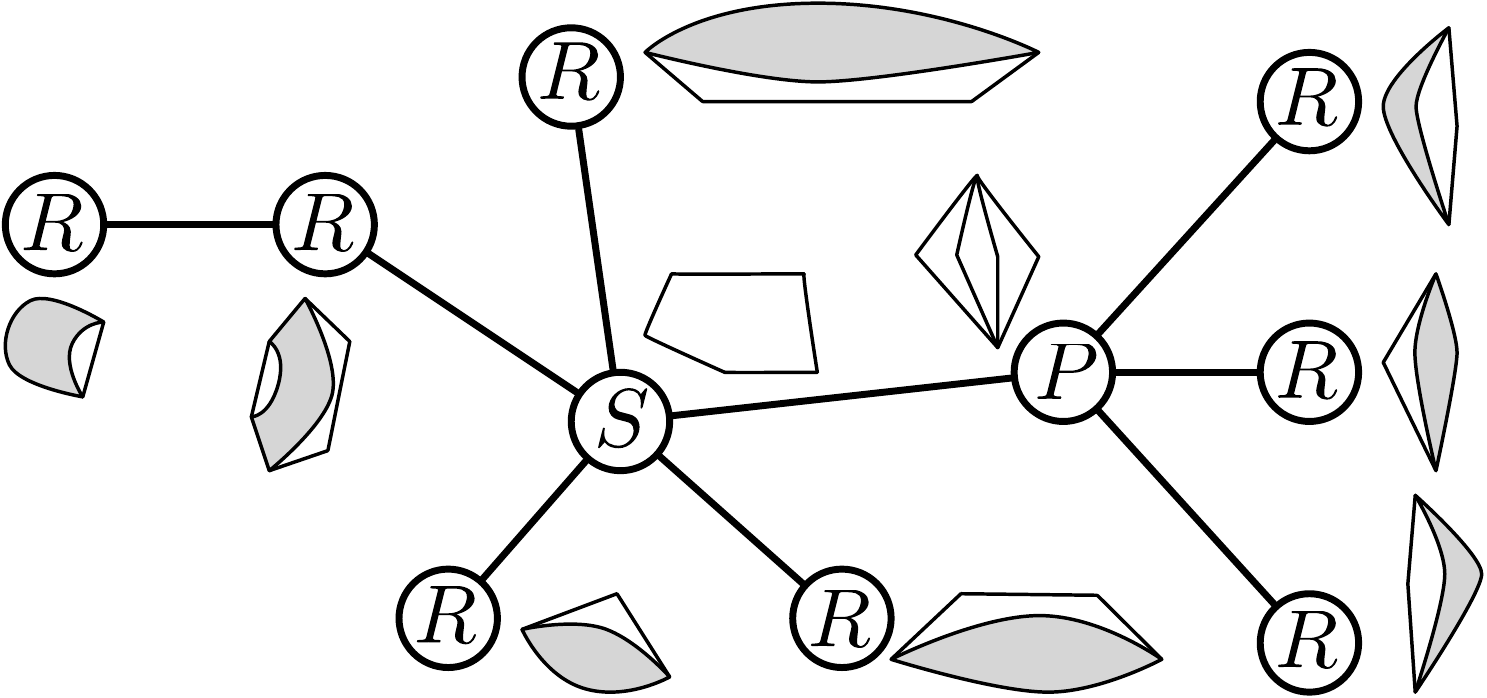}} 
\\
(g) & (h)\\
\end{tabular}
    \caption{(a) A weighted undirected graph; (b) Its Gomory-Hu (cut-equivalent) tree~\cite{GomoryH61}.
    (c) A weighted undirected graph (unmarked edges have unit weight); (d) the Cactus representation~\cite{DinicKL76} of its minimum edge cuts.
    (e) A directed $s$-$t$ flow network; (f) A dag whose downward-closed sets (that include $s$ but not $t$) 
    correspond to min $s$-$t$ cuts (Picard-Queyrenne~\cite{PicardQ80}).
    (g) An abstract representation of a 2-connected graph; (h) The representation of its 3-connected components as an SPQR tree (Di Battista-Tamassia~\cite{DiBattistaT96}).}
    \label{fig:cut-structures}
\end{figure}

It is natural to ask how, and to what extent, 
these structures can be extended and generalized.  
Gusfield and Naor~\cite{GusfieldN90} 
described an analogue of Gomory-Hu trees for vertex connectivity, 
i.e., a tree that compactly represents the $s$-$t$ \underline{\emph{vertex}} 
connectivity for every $s,t\in V$.  It used a result of 
Schnorr~\cite{Schnorr79} on an analogue of Gomory-Hu 
trees for ``roundtrip'' flow-values in directed networks.
These claims were \underline{refuted} by Benczur~\cite{Benczur95a},
who illustrated that Schnorr's and Gusfield and Naor's 
proofs were incorrect and could not be rectified.
In particular, $s$-$t$ vertex connectivity and directed $s$-$t$ cuts 
have \emph{no tree representation}.  We take this as a reminder
that having published proofs (\emph{even incorrect ones}) is 
essential for facilitating self-correction in science.

\medskip 

The inspiration for this paper is an extended 
abstract of Cohen, Di Battista, Kanevsky, and Tamassia~\cite{cohen1993reinventing} from STOC 1993.
Their goal was to find a cactus-analogue 
for global minimum vertex cuts, or from a different perspective, to extend SPQR trees~\cite{DiBattistaT96} and~\cite{KanevskyTBC91} from $\kappa \in\{2,3\}$ vertex cuts to arbitrarily large $\kappa$.  
As an application of their ideas, 
they described a data structure for $\kappa$-connected graphs
occupying space $O(\kappa^3 n)$ that, given $u,v$, 
decided whether 
$u,v$ are separated by a $\kappa$-cut or $(\kappa+1)$-connected.  
There are no suspect claims in~\cite{cohen1993reinventing}.
On the other hand, the paper is 7 pages and leaves many of its 
central claims unproven.\footnote{The full version of this 
paper was never written (personal communication with 
R. Tamassia, 2011, and R. Di Battista, 2016).}
We believe that understanding the \emph{structure} of minimum 
vertex cuts is a fundamental problem in graph theory, 
and deserving of a complete, formal treatment.

In this paper we investigate the structure of 
the set of all minimum vertex cuts and classify 
the relationships between different minimum vertex cuts.
Our work reveals some structural features of minimum $\kappa$-cuts 
not evident in Cohen, Di Battista, Kanevsky, and Tamassia~\cite{cohen1993reinventing}, and 
ultimately allows us to develop a simpler data 
structure to answer pairwise $\kappa$-cut queries in 
a $\kappa$-connected graph.  It occupies (optimal) $O(\kappa n)$ 
space and can be constructed in randomized 
$\tilde{O}(m + \poly(\kappa)n)$ time, in contrast to~\cite{cohen1993reinventing}, 
which occupies $O(\kappa^3 n)$ space and is 
constructed in $\exp(\kappa)n^5$ time.\footnote{The algorithm 
enumerates \emph{all} minimum $\kappa$-cuts, which can be as large
as $\Omega(2^\kappa (n/\kappa)^2)$; modern vertex connectivity 
algorithms~\cite{ForsterNYSY20,GaoLNPSY19,Gabow06} may reduce
the exponent of $n$ in the running time.}

\subsection{Related Work}

Dinitz and Vainshtein~\cite{DinitzV94,DinitzV95} 
combined elements of the cactus~\cite{DinicKL76} and 
Picard-Queyrenne~\cite{PicardQ80} representations, which they called the \emph{connectevity carcass}.  
Given an undirected,
unweighted $G=(V,E)$ and $S\subseteq V$ of terminals, 
$\lambda_S$ is the size of the minimum edge-cut that separates $S$.  The carcass represents
\emph{all} size-$\lambda_S$ separating cuts in $O(\min\{m, \lambda_S n\})$ space and answers
various cut queries in $O(1)$ time.\footnote{The carcass was introduced in extended abstracts~\cite{DinitzV94,DinitzV95}
and the (simpler) case of odd $\lambda_S$ was analyzed in detail in a journal article~\cite{DinitzV00}.  
We are not aware of a full treatment of the case when $\lambda_S$ is even.}

Benczur and Goemans~\cite{BenczurG08} generalized the cactus representation~\cite{DinicKL76}
in a different direction, by giving a compact representation 
of all cuts that are within
a factor $6/5$ of the global minimum edge-cut.

Dinitz and Nutov~\cite{DinitzN95} generalized the cactus representation~\cite{DinicKL76} 
in another direction, by giving an $O(n)$-space representation of all 
$\lambda$ and $\lambda+1$ edge cuts, where $\lambda$ is the edge-connectivity of the undirected, unweighted graph.
Unpublished manuscripts~\cite{DinitzN99a,DinitzN99b} give detailed treatments 
of the $\lambda$ odd and $\lambda$ even cases separately.

Georgiadis et al.~\cite{GeorgiadisIP20,FirmaniGILS16,GeorgiadisILP16,GeorgiadisILP18} investigated
various notions of 1- and 2-edge and vertex connectivity in \emph{directed} graphs, and the compact
representation of edge/vertex cuts.

\paragraph{Sparsification.} One general way to compactly represent connectivity
information is to produce a \emph{sparse} graph with the same cut structure.
Nagamochi and Ibaraki~\cite{NagamochiI92} proved that every unweighted, undirected
graph $G=(V,E)$ contains a subgraph $H=(V,E_H)$ with $|E_H|<(k+1)n$ such that
$H$ is computable in $O(m)$ time and
contains exactly the same $k'$-vertex cuts and $k'$-edge cuts 
as $G$, for all $k'\in\{1,\ldots,k\}$.
Benczur and Karger~\cite{BenczurK15} proved that for 
any capacitated, undirected graph $G=(V,E)$, there is another capacitated
graph $H=(V,E_H)$ with $|E_H|=O(\epsilon^{-2}n\log n)$ such that the 
capacity of \emph{every} cut in $G$ is preserved in $H$ 
up to a $(1\pm \epsilon)$-factor.  This bound was later improved to
$O(\epsilon^{-2}n)$ by Batson, Spielman, and Srivastava~\cite{BatsonSS12}, 
which is optimal.

In directed graphs, Baswana, Choudhary, and Roditty~\cite{BaswanaCR16} considered 
the problem of finding a sparse subgraph that preserves reachability
from a single source, even if $d$ vertices are deleted.
They proved that $\Theta(2^d n)$ edges are necessary and 
sufficient for $d\in [1,\log n]$.

\paragraph{$d$-Failure Connectivity.} An undirected graph can be compactly
represented such that connectivity queries can be answered after the
deletion of any $d$ vertices/edges (where $d$ could be much larger than
the underlying connectivity of the graph).  
Improving on~\cite{PatrascuT07,KapronKM13,DuanP10},
Duan and Pettie~\cite{DuanP20} proved that $d$ vertex failures
could be processed in $\tilde{O}(d^2)$ time such that connectivity 
queries are answered in $O(d)$ time, and $d$ edge failures could
be processed in $O(d\log d\log\log n)$ time such that 
connectivity queries are answered in $O(\log\log n)$ time.
The size of
the~\cite{DuanP20} structure is $\tilde{O}(m)$ for vertex failures and $\tilde{O}(n)$ for edge failures.
Choudhary~\cite{Choudhary16} gave an optimal $O(n)$-space data structure
that could answer directed reachability queries after $d\in\{1,2\}$
vertex or edge failures.

\paragraph{Labeling Schemes.} Benczur's refutation~\cite{Benczur95a}
of~\cite{Schnorr79,GusfieldN90} shows that all pairwise vertex connectivities
cannot be captured in a \underline{\emph{tree}} structure, 
but it does not preclude other representations
of this information.  Korman~\cite{Korman10} proved that the vertices of
any undirected $G=(V,E)$ could be assigned $O(k^2\log n)$-bit labels such that
given $(\operatorname{label}(u),\operatorname{label}(v))$, we can determine whether
$u$ and $v$ are $(k+1)$-connected or separated by a $k$-vertex cut.  That is,
$\poly(\overline{\kappa})\log n$-bit labels suffice to compute $\min\{\kappa(u,v),\overline{\kappa}\}$, where $\kappa(u,v)$ 
is the pairwise connectivity of $u,v$.

\paragraph{Vertex Connectivity Algorithms.}
In optimal linear time we can decide whether the connectivity
of a graph is $\kappa=1,\kappa=2,$ or $\kappa\ge 3$~\cite{Tar72,HopcroftT73}.
For larger $\kappa$, 
the state-of-the-art in vertex connectivity has been improved substantially in the last few years.  Forster, Nanongkai, Yang, Saranurak, and Yingchareonthawornchai~\cite{ForsterNYSY20} 
gave a \emph{Monte Carlo} algorithm for computing the vertex 
connectivity $\kappa$ of an undirected graph 
in $\tilde{O}(m + n\kappa^3)$ time, w.h.p.\footnote{The algorithm does not produce a witness, and hence may err with small probability.}
The best deterministic algorithm, due to Gao, Li, Nanongkai, Peng, Saranurak, and Yingchareonthawornchai~\cite{GaoLNPSY19},
computes the connectivity $\kappa < n^{1/8}$ in
$O((m+n^{7/4}\kappa^{O(\kappa)})n^{o(1)})$ time or
$O((m+n^{19/20}\kappa^{5/2})n^{o(1)})$ time.
For $\kappa > n^{1/8}$, Gabow's algorithm~\cite{Gabow06} 
runs in
$O(\kappa n^2 + \kappa^2 n\cdot \min\{n^{3/4}, \kappa^{3/2}\})$ time.

\subsection{Organization}
In Section~\ref{sect:preliminaries} we review basic definitions and lemmas regarding vertex cuts.
Section~\ref{sect:classification} gives
the basic classification theorem for minimum vertex cuts,
and lists some useful corollaries.
In short, every pair of cuts have \emph{laminar},
\emph{wheel}, \emph{crossing matching}, or \emph{small}
relation.  Sections~\ref{subsect:wheel}--\ref{subsect:small} analyze these four categories in more detail.
Section~\ref{sect:data-structure} exhibits a new $O(\kappa n)$-space data structure that, given two vertices, 
answers $(\kappa+1)$-connectivity queries in $O(1)$ time,
and produces a separating $\kappa$-cut (if one exists)
in $O(\kappa)$ time.
We conclude with some remarks and open problems in Section~\ref{sect:conclusion}.

\section{Preliminaries}\label{sect:preliminaries}

The input is a simple, connected, undirected graph $G=(V,E)$ with $n=\abs{V}$ and $m=\abs{E}$.  The predicate $A\subset B$ is true if 
$A$ is a \emph{strict} subset of $B$.

Let the subgraph of $G$ induced by $A$ be denoted $G|_{A}$. 
We call $U\subset V$ a \emph{cut} if the 
graph $G|_{V\backslash U}$ 
is disconnected. 
A \emph{side} of the cut $U$ is a connected component of 
$G|_{V\backslash U}$. 
If $P$ is a side of $U$ and $A\subseteq P$, 
we say $A$ is \emph{within a side of $U$}, 
and let $\Side{U}(A)=P$ denote the side containing $A$.
A \emph{region} of a cut $U$ is a side, 
or the union of several sides of $U$. Denote $\Region{U}(A)$ as the region containing the sides of $U$ that intersects with $A$.
\footnote{Note when $A$ is a singleton set $\{u\}$, $\Region{U}(A)=\Side{U}(A)$.}
We say a cut \emph{disconnects} or \emph{separates} 
$A$ and $B$ if they are in distinct sides of $U$.
In particular, 
if $B=V\backslash(A\cup U)$, we say $U$ \emph{disconnects} or \emph{separates}
$B$ \emph{from the rest of the graph}.

A path $\pi = v_1v_2\cdots v_l$ is \emph{from} $A$ \emph{to} $B$, 
if $v_1\in A$ and $v_l\in B$.  Two paths $\pi,\pi'$ from $v_1$ to $v_l$
are \emph{internally} vertex disjoint if they have no common 
vertices, except for $v_1,v_l$.
We say $U$ \emph{blocks} $\pi$ if $U\cap \{v_2,\ldots,v_{l-1}\}\neq \emptyset$.

A \emph{$k$-cut} is a cut of size $k$.
Define $\kappa(u,v)$ to be the minimum $k$ such 
that there exists a $k$-cut separating $u$ and $v$, 
where $\{u,v\} \neq E(G)$.
Define $\kappa = \kappa(G)$ to be the minimum of $\kappa(u,v)$ over all 
pairs $\{u,v\} \in {V(G)\choose 2}\backslash E(G)$. 
We say $G$ is \emph{$k$-connected} if $\kappa(G) \geq k$.

In this paper we assume that $\kappa < n/4$ and consider 
the set of \emph{all} (minimum) $\kappa$-cuts.

\begin{remark}\label{remark:kappa-def}
There is some flexibility in defining the corner cases.
Some authors leave $\kappa(u,v)$ undefined when $\{u,v\}\in E(G)$ 
or define it to be $n-1$.  In~\cite{cohen1993reinventing} a $k$-cut
is defined to be a mixed set of edges and vertices whose removal 
disconnects the graph.  Under this definition, 
when $\{u,v\}\in E(G)$, $\kappa(u,v)=k$ if removing $k-1$ vertices 
and $\{u,v\}$ disconnects $u$ and $v$.  This last definition
is compatible with Menger's theorem, and allows for it to be extended
to \emph{all} pairs of vertices.
\end{remark}

\begin{theorem}\label{thm:Menger} (Menger~\cite{Menger27})
Let $G=(V,E)$ be an undirected graph and $\{u,v\}$ a pair \emph{not} in $E$.
Let $U\subset V$ be a minimum size cut disconnecting $u$ and $v$
and $\Pi$ be a maximum size 
set of internally vertex disjoint paths
from $u$ to $v$.  Then $\kappa(u,v)=|U|=|\Pi|$.
\end{theorem}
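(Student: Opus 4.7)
The plan is to prove the chain $\kappa(u,v) = |U| = |\Pi|$ by establishing two inequalities: $|\Pi| \le |U|$ (straightforward) and $|U| \le |\Pi|$ (the classical hard direction of Menger's theorem).

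First I would handle $|\Pi| \le |U|$. Since $U$ separates $u$ from $v$, every $u$-$v$ path $\pi = v_1 v_2 \cdots v_\ell$ must satisfy $U \cap \{v_2,\ldots,v_{\ell-1}\} \neq \emptyset$, for otherwise $\pi$ would be a path in $G|_{V\setminus U}$ joining the two sides of $U$, contradicting that $U$ is a cut. Because the paths in $\Pi$ are pairwise internally vertex disjoint, they use distinct internal vertices of $U$, giving $|\Pi| \le |U|$. By the choice of $U$ as a minimum $u$-$v$ cut we also have $|U| = \kappa(u,v)$, so $|\Pi| \le \kappa(u,v)$.

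For the harder direction $|U| \le |\Pi|$, my plan is to reduce to the edge version of Menger's theorem via the standard vertex-splitting construction. Build an auxiliary directed graph $G^\star$ by replacing each internal $w \in V\setminus\{u,v\}$ with two copies $w^-, w^+$ joined by a unit-capacity arc $w^-\to w^+$, and each undirected edge $\{x,y\}\in E$ with two infinite-capacity antiparallel arcs $x^+\to y^-$ and $y^+\to x^-$; leave $u,v$ unsplit. Any integral $u$-$v$ flow of value $k$ in $G^\star$ decomposes into $k$ arc-disjoint $u$-$v$ paths; since each internal arc $w^-\to w^+$ has capacity $1$, these paths traverse disjoint sets of internal vertices, so contracting the internal arcs yields $k$ internally vertex-disjoint $u$-$v$ paths in $G$. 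Conversely, any minimum $u$-$v$ edge cut in $G^\star$ can be taken to consist entirely of internal arcs (external arcs have infinite capacity), and $\{w : w^-\to w^+ \text{ lies in the cut}\}$ is then a $u$-$v$ vertex cut in $G$ of the same size. The max-flow min-cut theorem applied to $G^\star$, together with flow integrality for unit-capacity networks, yields $|\Pi|\ge|U|$, closing the equality.

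The main obstacle in making the argument fully rigorous is the correspondence layer: showing that integral flows on the split network decompose into suitably disjoint paths (standard cycle-cancellation plus path decomposition) and that every minimum edge cut in $G^\star$ can be assumed to consist solely of internal arcs (replace any external arc in a tentative cut by one of its endpoint's internal arcs without increasing cut size). As a self-contained alternative avoiding the flow machinery, I could prove $|U|\le|\Pi|$ by induction on $|E(G)|$: pick an edge $e$ not incident to $u$ or $v$ and split into the cases $\kappa_{G/e}(u,v)=\kappa(u,v)$ (induct on the contraction and lift the paths) and $\kappa_{G/e}(u,v) < \kappa(u,v)$ (a minimum cut in $G/e$ through the contracted vertex partitions $G$ into two strictly smaller instances on which induction provides the $\kappa(u,v)$ disjoint half-paths to be concatenated).
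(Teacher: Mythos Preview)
Your proof plan is correct and follows one of the standard routes to Menger's theorem (vertex-splitting reduction to max-flow/min-cut, with the contraction-induction argument as a backup). However, the paper does not actually prove this statement: Theorem~\ref{thm:Menger} is stated with a citation to Menger~\cite{Menger27} and no proof is given, as it is treated as a classical background result. So there is no paper proof to compare against; your write-up would simply be supplying a proof where the paper chose to cite one.
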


The following categories make sense when applied to \emph{non}-minimal vertex cuts, but we are only interested in applying them to 
\emph{minimum} vertex cuts. Henceforth \emph{cut} 
usually means \emph{minimum cut}.

\paragraph{Laminar Cuts.} 
Let $U$ be a cut and $P$ be a side of $U$. 
If $W$ is a cut and $W\subset U\cup P$, 
we say $W$ is a \emph{laminar cut} of $U$ 
in side $P$.\footnote{These are sometimes called \emph{parellel cuts}.}

\begin{figure}[h!]
\centering
\scalebox{.4}{\includegraphics{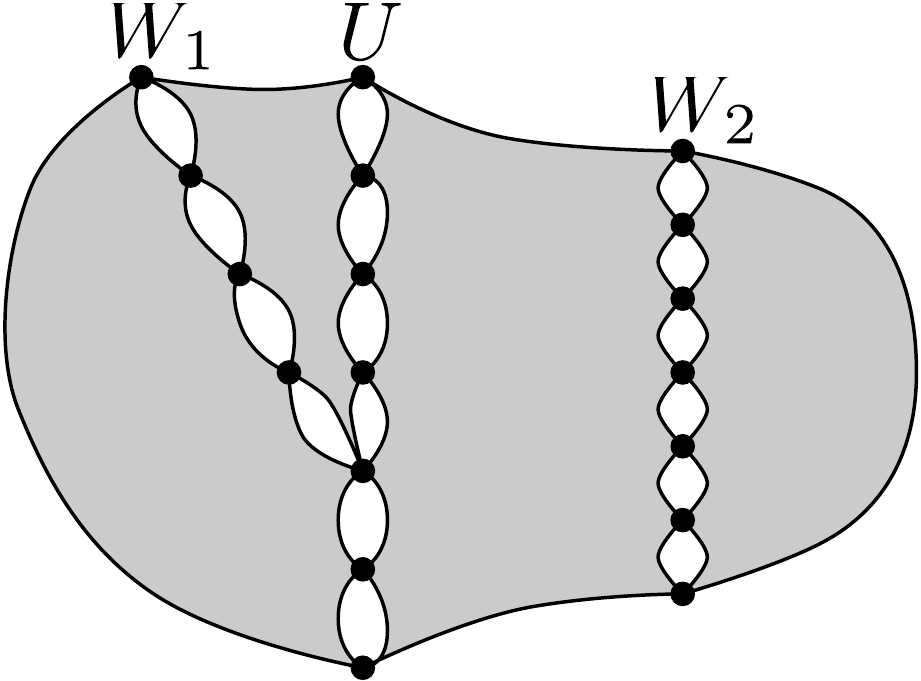}}
\caption{\label{fig:laminar}A 7-cut $U$ with
two sides, and two 7-cuts $W_1,W_2$ that are laminar w.r.t.~$U$.}
\end{figure}

\paragraph{Small Cuts.}
Informally, when a side of a cut tiny
we call the cut \emph{small}.
We define three levels of small cuts. 
Let $U$ be a cut with sides 
$A_1, A_2, \ldots, A_a$. 
We say that

\begin{itemize}
\item[1\cir] $U$ is (\Rmnum{1}, $t$)-\emph{small}
if there exists an index $i^{\sharp}$ such that 
$\sum_{i\neq i^{\sharp}}\abs{A_i} \leq t$. 
$A_{i^{\sharp}}$ is called the \emph{large side} of $U$ and
the others the \emph{small sides} of $U$.

\item[2\cir] $U$ is (\Rmnum{2}, $t$)-\emph{small} 
if there exists $i^{\sharp}$ such that 
for every $i\neq i^{\sharp}$, $\abs{A_i}\leq t$.

\item[3\cir] $U$ is (\Rmnum{3}, $t$)-\emph{small}, 
if there exists $i^{\sharp}$ such that $\abs{A_{i^{\sharp}}}\leq t$. 
In this case $A_{i^{\sharp}}$ is the \emph{small side} of $U$.
\end{itemize}

Note that for any $t$, 
\Rmnum{1}-small cuts are \Rmnum{2}-small, 
and \Rmnum{2}-small cuts are \Rmnum{3}-small.
We typically apply this definition with $t=\kappa$,
$t=\Theta(\kappa)$, or $t=\ceil{\frac{n-\kappa}{2}}$.

\paragraph{Wheel Cuts.} 
Suppose $V$ can be partitioned into a series of disjoint sets 
$T$, $\{C_i\}, \{S_i\}$ 
($1\leq i\leq w, w\geq 4$, subscripts are taken module $w$), 
such that the $\{C_i\}$ and $\{S_i\}$ are 
nonempty ($T$ may be empty), 
and $C_i\cup T\cup C_{i+2}$ disconnects $S_i\cup C_i\cup S_{i+1}$ 
from the rest of the graph. 
We say $(T;C_1, C_2, \ldots, C_w)$ forms a \emph{$w$-wheel} 
with \emph{sectors} $S_1, S_2, \ldots, S_w$. 
We call $T$ the \emph{center} of the wheel, 
$\{C_i\}$ the \emph{spokes} of the wheel, 
and $C(i,j) = C_i\cup T\cup C_j$ the \emph{cuts} of the wheel.
Define $D(i, j)=S_i\cup C_{i+1}\cup \cdots \cup C_{j-1} \cup S_{j-1}$.

Recall that we are only interested in wheels whose cuts 
are minimum $\kappa$-cuts.
The cut of the wheels discussed in this paper are all $\kappa$-cuts. 
It is proved in Lemma~\ref{lem3} that, if $(T;C_1, C_2, \ldots, C_w)$
forms a wheel, then for every $i, j$ such that $j-i\notin\{1, w-1\}$, $C(i, j)$ is a $\kappa$-cut with exactly two sides, 
namely $D(i, j)$ and $D(j, i)$.
\begin{figure}[h!]
    \centering
    \scalebox{.6}{\includegraphics{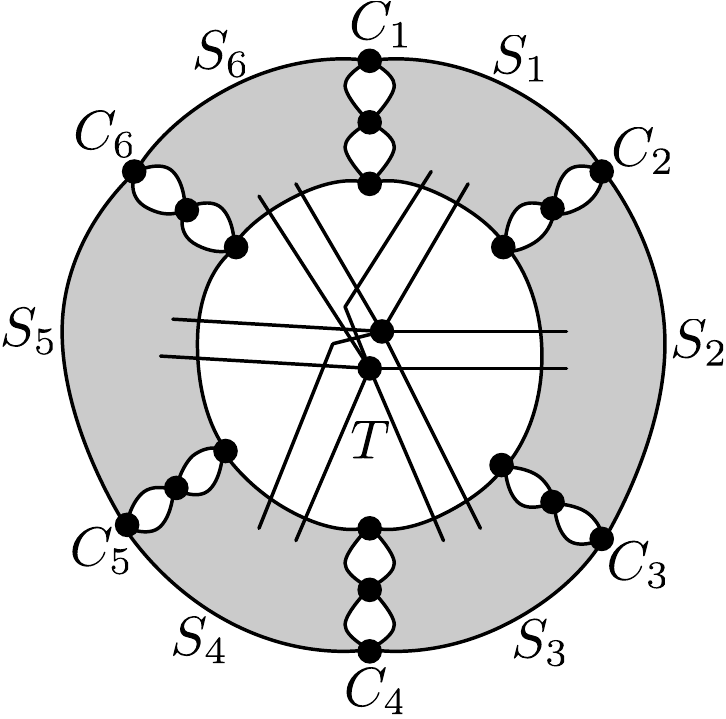}}
    \caption{A $6$-wheel of $8$-cuts with a center of size $|T|=2$.}
    \label{fig:wheel}
\end{figure}
Note that a $w$-wheel $(T; C_1, C_2, \ldots, C_w)$
contains $x$-wheels, $x\in[4,w-1]$.  Specifically, for \emph{any} 
subset $\{i_1, i_2, \ldots, i_x\}\subseteq \{1, 2, \ldots, w\}$ 
with $x\geq 4$, 
$(T; C_{i_1}, C_{i_2}, \ldots, C_{i_x})$ forms an $x$-wheel
called a \emph{subwheel} of the original.
If a wheel is not a subwheel of any other wheel, 
it is a \emph{maximal wheel}.
If there exists an index $i^\sharp$ such that, $\sum_{i\neq i^\sharp}\abs{S_i}\leq \kappa$, 
then we say this is a \emph{small wheel}.\footnote{For a small 
wheel, all its cuts $C(i,j)$ are
(\Rmnum{2}, $O(\kappa^2)$)-small.}

\paragraph{Matching Cuts and Crossing Matching Cuts.}
Let $U$ be a cut, $A$ be a side of $U$, and $P\subseteq U$ be a subset of the cut.
We call a cut $W$ a \emph{matching cut of $U$ in side $A$ w.r.t.~$P$} if
(i) $U\backslash P \subseteq W\subseteq U\cup A$,
(ii) $A\backslash W\neq\emptyset$, and 
(iii) $W$ disconnects $P\cup(V\backslash(U\cup A))$ from $A\backslash W$.
The set $\Match{U;A}(P) \bydef W\backslash U$ is the neighborhood of $P$ 
restricted to $A$.  Note that a matching cut is a type of laminar cut.

Now suppose $U$ is a cut with exactly two sides $A$ and $B$, and let $P\subseteq U$ be a non-empty subset of $U$.
We call $W$ 
a \emph{crossing matching cut of $U$ in
side $A$ w.r.t.~$P$}
if
(i) $W\cap B\neq \emptyset$,
(ii) $(U\backslash P)\cup(W\cap A)$ is a matching cut of $U$ in side $A$ w.r.t.~$P$, 
\begin{figure}[h!]
    \centering
    \scalebox{.55}{\includegraphics{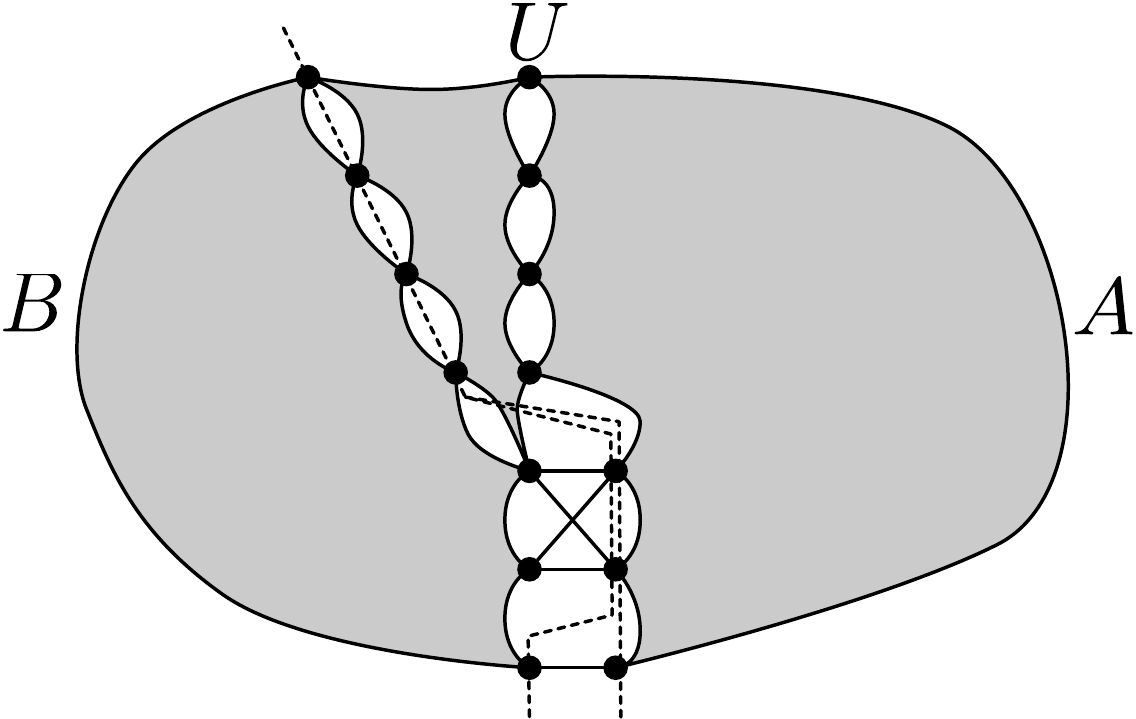}}
    \caption{A cut $U$ (drawn vertically) with two sides $A$ and $B$.  Dotted
    lines indicate two crossing matching cuts w.r.t.~$P_1$ (bottom 3 vertices of $U$)
    and $P_2$ (top 2 vertices of $P_1$).}
    \label{fig:crossing-matching-cut}
\end{figure}

One could view $U$ and a crossing matching cut $W$ as a degenerate $4$-wheel, 
in which one sector $S_1=\emptyset$ is empty.  
Such cuts should \emph{not} be regarded as wheels, as they do not possess 
key properties of wheels,
e.g., that when $U$ and $W$ are (minimum) $\kappa$-cuts, that 
$|C_1|=\cdots=|C_4|=\frac{\kappa-\abs{T}}{2}$, because $C_1\cup T\cup C_2$ is not a cut.

\bigbreak

Lemmas~\ref{lem1} and \ref{lem2} are used throughout the paper.
Recall here $\kappa=\kappa(G)$ is the vertex connectivity of $G$.

\begin{lemma}\label{lem1}
  Suppose $U$ is a $\kappa$-cut
  and $P$ a side of $U$. 
  For every $p\in P$ and $u\in U$, 
  there exists a path from $p$ to 
  $u$ that is not blocked by $V\backslash P$.
\end{lemma}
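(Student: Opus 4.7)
The plan is to reduce the lemma to the following single claim: \emph{every} vertex $u\in U$ has at least one neighbor inside $P$. Granting this, the lemma follows immediately: since $P$ is a connected component of $G|_{V\setminus U}$, the induced subgraph $G|_P$ is connected, so from $p$ I can reach any neighbor $p'\in P$ of $u$ via a path lying entirely in $P$; appending the edge $(p',u)$ yields a path from $p$ to $u$ whose internal vertices all lie in $P$, which is therefore not blocked by $V\setminus P$.

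To prove the claim, I would argue by contradiction using the extremality of $\kappa$. Suppose some $u\in U$ has no neighbor in $P$. Because $U$ is a cut, $G|_{V\setminus U}$ has at least two sides, so I may pick $q\in V\setminus(P\cup U)$ in some side other than $P$. I claim $U\setminus\{u\}$ already separates $p$ from $q$ in $G$. Indeed, every edge of $G$ with one endpoint in $P$ and one outside $P$ must terminate in $U$ (otherwise $P$ would not be a connected component of $G|_{V\setminus U}$), and by the standing assumption none of these edges terminates at $u$; hence in $G|_{V\setminus(U\setminus\{u\})}=G|_{(V\setminus U)\cup\{u\}}$ the set $P$ has no edge to $\{u\}\cup (V\setminus(P\cup U))$, so $p$ and $q$ lie in different connected components.

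To finish, observe that $p$ and $q$ lie in distinct sides of $U$, so $\{p,q\}\notin E$. We have therefore exhibited a vertex set of size $\kappa-1$ that separates a non-adjacent pair, contradicting $\kappa(G)=\kappa$. I do not expect a genuine obstacle; the only care needed is in unpacking the definition of ``side'', which is used twice — once to guarantee $G|_P$ is connected, and once to conclude that cross-side pairs are non-adjacent and hence witness a genuine violation of $\kappa(G)$ rather than merely of some weaker connectivity parameter.
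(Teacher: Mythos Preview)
Your proof is correct, but it takes a different route from the paper's. The paper argues via Menger's theorem: fix any $v$ in another side of $U$, obtain $\kappa$ internally vertex-disjoint $p$--$v$ paths, observe that each must pass through a distinct vertex of $U$, and take the prefix (inside $P\cup\{u\}$) of the one that passes through $u$. Your argument is more elementary: you bypass Menger entirely by proving directly, from the minimality of $\kappa$, that every $u\in U$ is adjacent to $P$, and then you exploit the connectedness of $G|_P$. What the paper's approach buys is brevity (a two-line proof given Menger); what yours buys is a self-contained argument that also isolates the useful intermediate fact ``every cut vertex is adjacent to every side,'' which is in fact the way this lemma is typically \emph{applied} later in the paper.
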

\begin{proof}
Fix any $v$ in another side of $U$. 
By Menger's thorem (Theorem~\ref{thm:Menger})
there are $\kappa$ internally vertex disjoint paths
from $u$ to $v$, and therefore each must pass through a
different vertex of $U$.  
The prefixes of these paths that are contained in 
$P\cup U$ are not blocked by $V\backslash P$.
\end{proof}

\begin{lemma}\label{lem2}
  Suppose $U$ and $W$ are two cuts, $P$ is disconnected by $U$ from the rest of the graph $G$ and $Q$ is disconnected by $W$ from the rest of the graph $G$. Then we have the following two rules:
  \begin{itemize}
    \item (Intersection Rule) If $P\cap Q\neq \emptyset$, then $P\cap Q$ is disconnected by $(U\cap Q) \cup (U\cap W) \cup (W\cap P)$ from the rest of the graph $G$;
    \item (Union Rule) If $V\backslash(U\cup P\cup W\cup Q)\neq \emptyset$, then $P\cup Q$ is disconnected by $(U\backslash Q)\cup (W\backslash P)$ from the rest of the graph $G$.
  \end{itemize}
\end{lemma}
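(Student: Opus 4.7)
The plan is to classify every vertex of $G$ by the pair (region of $U$, region of $W$) it lies in, giving a $3\times 3$ grid of cells indexed by $\{P, U, \overline{U}\}\times\{Q,W,\overline{W}\}$, where $\overline{U}\bydef V\setminus(P\cup U)$ and $\overline{W}\bydef V\setminus(Q\cup W)$. Since $P$ and $U$ are disjoint (and $Q$, $W$ are disjoint), the $9$ cells really are a partition of $V$. The only two facts I will use about the graph are (i) no edge of $G$ connects $P$ with $\overline{U}$ (the $U$-row separates them) and (ii) no edge of $G$ connects $Q$ with $\overline{W}$ (the $W$-column separates them).

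For the Intersection Rule I would identify $P\cap Q$ with a single cell, the claimed cut $(U\cap Q)\cup(U\cap W)\cup(W\cap P)$ with three adjacent cells forming an L-shape, and the remaining five cells with the ``rest of the graph'' $Y$. Then I would run through each of the five (cell$\subseteq Y$, $P\cap Q$) pairs and note that in every case at least one endpoint would have to cross the $U$-row or the $W$-column without landing on it, contradicting (i) or (ii). Hence $G$ contains no edge from $P\cap Q$ to $Y$, which is exactly the disconnection claim.

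For the Union Rule I would make the analogous accounting: $P\cup Q$ occupies the entire $P$-row together with the $Q$-column, the stated cut $(U\setminus Q)\cup(W\setminus P)$ is precisely the three cells $(U\cap W)\cup(U\cap\overline{W})\cup(\overline{U}\cap W)$, and therefore the ``rest of the graph'' collapses to the single cell $\overline{U}\cap\overline{W}$, which is nonempty by the hypothesis $V\setminus(U\cup P\cup W\cup Q)\neq\emptyset$. I then check the (at most five) cell-pair edge types from $P\cup Q$ into $\overline{U}\cap\overline{W}$; each such edge would again cross either the $U$-row or the $W$-column off the cut, contradicting (i) or (ii).

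Neither half is really hard; the one point that needs a little care is that $U$ and $W$ may overlap, so $U\cap W$ is a genuine cell that has to be assigned to the cut in both rules. Writing the cut as the explicit union $(U\cap Q)\cup(U\cap W)\cup(W\cap P)$ (respectively $(U\setminus Q)\cup(W\setminus P)$) keeps this bookkeeping transparent and avoids any double-counting when summing sizes later.
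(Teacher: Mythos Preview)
Your proposal is correct. Your $3\times 3$ grid bookkeeping and edge-by-edge verification is sound; the one small point you leave implicit in the Intersection Rule is that the ``rest'' $Y$ is nonempty, but this follows immediately since $\overline{U}\neq\emptyset$ (it contains the other sides of the cut $U$), hence at least one of the three $\overline{U}$-cells in $Y$ is nonempty.

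The paper's proof differs mainly in the Union Rule. For the Intersection Rule the paper argues along a path rather than edge-by-edge (take a path from $P\cap Q$ to the outside, look at the first vertex in $U\cup W$, and observe it must land in the claimed cut), which is really the same idea as yours packaged differently. The genuine divergence is that the paper does \emph{not} redo the case analysis for the Union Rule: instead it observes that the Union Rule is the Intersection Rule applied to the complements $R=V\setminus(U\cup P)$ and $S=V\setminus(W\cup Q)$, since $R\cap S = V\setminus(U\cup P\cup W\cup Q)$ is nonempty by hypothesis and the Intersection Rule then gives exactly $(U\setminus Q)\cup(W\setminus P)$ as the separating set. Your direct approach is more symmetric and arguably more transparent for a first reading; the paper's duality trick is slicker and avoids the second round of five edge checks.
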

\begin{figure}[h!]
    \centering
    \begin{tabular}{c@{\hspace*{2cm}}c}
    \scalebox{.35}{\includegraphics{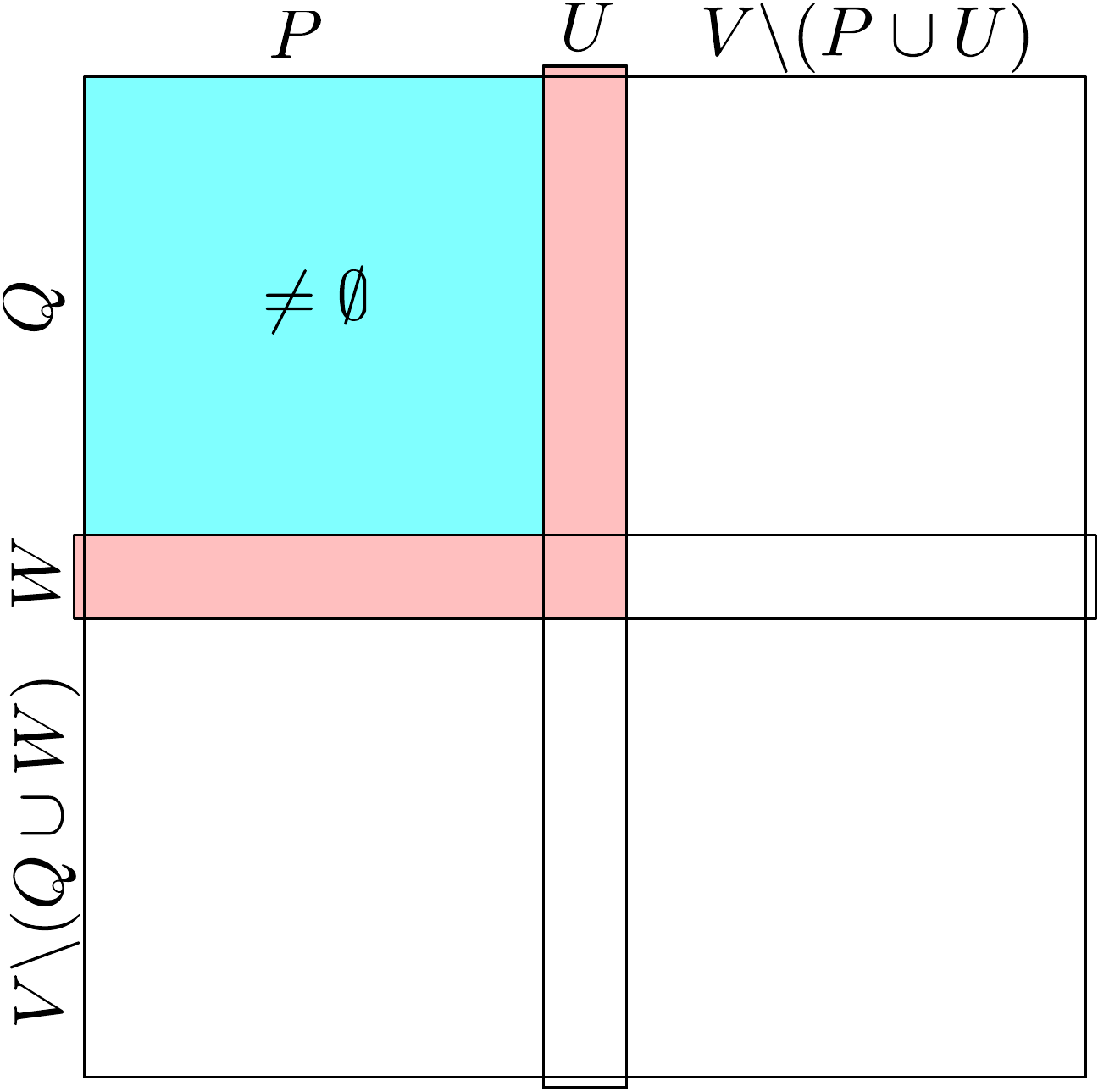}}
    &
    \scalebox{.35}{\includegraphics{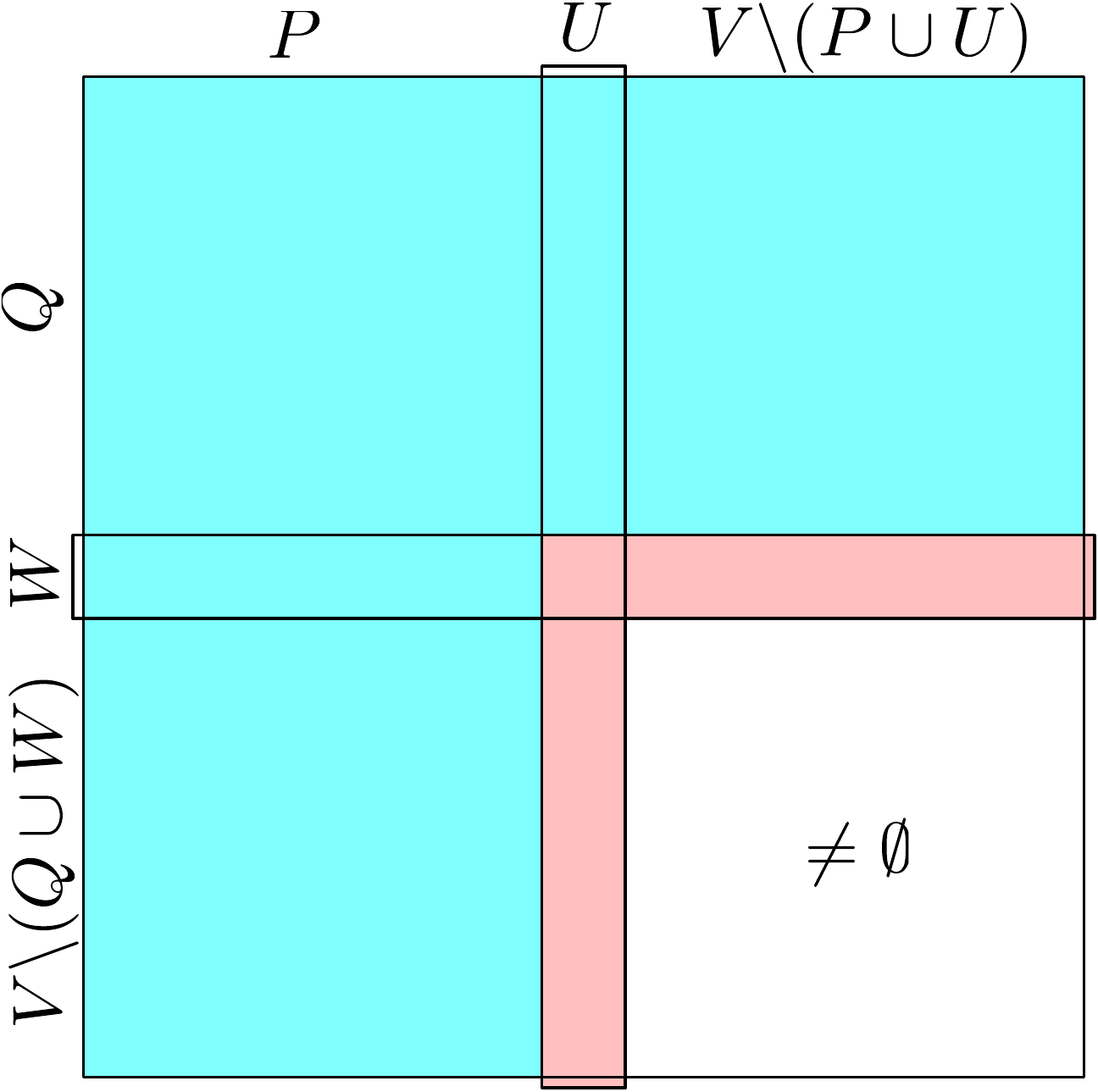}}\\
 (a) & (b)
    \end{tabular}
    \caption{(a) Intersection rule; (b) Union rule.}
    \label{fig:union-intersection}
\end{figure}
\begin{proof}
Denote $A=(U\cap Q)\cup(U\cap W)\cup (W\cap P)$, $B=P\cap Q$. First, $A\cup B = (U\cup P)\cap (W\cup Q)\subset V$, so $V\backslash(A\cup B)\neq \emptyset$. Consider a path $v_1v_2\cdots v_l$ from $B$ to $V\backslash(A\cup B)$. Because $v_1\in B = P\cap Q$, and $v_l\notin U\cup P$ or $v_l\notin W\cup Q$, this path must be blocked by $U$ or $W$, so there must exist some $v_j$ in $U\cup W$. Find the smallest $i$ such that $v_{i}\in U\cup W$, 
and without loss of generality assume $v_i\in U$. 
Then the path $v_1v_2\cdots v_i$ is not blocked by $W$, so $v_i\in Q\cup W$.
Since $(Q\cup W)\cap U\subseteq A$ we have $v_i\in A$.
It follows that $A$ separates $B$ from the rest of the graph,
proving the Intersection Rule.

For the Union Rule, let $R=V\backslash(U\cup P)$, $S=V\backslash(W\cup Q)$. Now that $R\cap S=V\backslash(U\cup P\cup W\cup Q)\neq \emptyset$, 
by applying the intersection rule above, $R\cap S$ is disconnected by 
$T=(U\cap S)\cup (U\cap W)\cup (W\cap R) 
  =(U\backslash Q)\cup (W\backslash P)$ 
from the rest of the graph.
\end{proof}

\section{The Classification of Minimum Vertex Cuts}\label{sect:classification}

The main \emph{binary} structural theorem for 
vertex connectivity is, informally, that every two minimum vertex
cuts have a relationship that is \emph{Laminar}, \emph{Wheel}, 
\emph{Crossing Matching}, or \emph{Small}; cf.~\cite{cohen1993reinventing}.  
Moreover, any strict subset of this
list would be inadequate to capture all possible relationships 
between two vertex cuts.\footnote{The existence of \emph{Small} 
cuts as a category---an \emph{a priori}
unnatural class---indicates that there may be other ways to capture
all minimum vertex cuts through an entirely different classification system.}

\begin{theorem}
\label{thm1}
    Fix a minimum $\kappa$-cut $U$
    with sides $A_1, A_2, \ldots, A_a$, $a\geq 2$, 
    and let $W$ be any other $\kappa$-cut 
    with sides $B_1, B_2, \ldots, B_b$, $b\geq 2$. 
    Denote $T=U\cap W$, $W_i=W\cap A_i$ and $U_j=U\cap B_j$. 
    Then $W$ may be classified w.r.t.~$U$ as follows:
    \begin{description}
      \item [Laminar type.] $W$ is a laminar cut of $U$, and in particular, there exists indices $i^*$ and $j^*$ such that $B_{j^*}\backslash A_{i^*}=(U\backslash W)\cup (\cup_{i\neq i^*}A_i)$ and $A_{i^*}\backslash B_{j^*}=(W\backslash U)\cup (\cup_{j\neq j^*}B_{j})$.
      \item [Wheel type.] $a=b=2$, and $(T; U_1, W_1, U_2, W_2)$ forms a $4$-wheel with sectors $A_1\cap B_1$, $A_1\cap B_2$, $A_2\cap B_2$ and $A_2\cap B_1$.
      \item [Crossing Matching type.] $a=b=2$, and w.l.o.g.,
      $A_1\cap B_1\neq \emptyset$, $A_2\cap B_2\neq\emptyset$,
      but $A_1\cap B_2=\emptyset$.  We have $\abs{W_2}=\abs{U_1}>0$,
        $\abs{W_1}=\abs{U_2}>0$, and $W$ is a crossing matching 
        cut of $U$ in side $A_1$ w.r.t.~$U_2$.  Furthermore, 
        if $A_2\cap B_1\neq \emptyset$, then $\abs{U_1}\geq \abs{U_2}$.

      \item [Small type.] $U$ is (\Rmnum{1}, $\kappa-1$)-small, 
      and the small sides of $U$ are within $W$,
      or $W$ is (\Rmnum{1}, $\kappa-1$)-small, and 
      the small sides of $W$ are within $U$.
    \end{description}
\end{theorem}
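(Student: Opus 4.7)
The plan is a case analysis built on Lemma~\ref{lem2}. Write $T=U\cap W$, $W_i=W\cap A_i$, $U_j=U\cap B_j$, and $K=\kappa-|T|$, so $\sum_i|W_i|=\sum_j|U_j|=K$. The key quantitative input: for each nonempty cell $A_i\cap B_j$, applying Lemma~\ref{lem2} with $P=A_i,\,Q=B_j$, the Intersection Rule exhibits $U_j\cup T\cup W_i$ as a cut, so $|U_j|+|T|+|W_i|\geq\kappa$; whenever $V\setminus(U\cup A_i\cup W\cup B_j)\neq\emptyset$ the Union Rule exhibits an opposite cut, producing the identity
\[
|U_j|+|W_i|=K.
\]

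I first dispose of the case when $V=U\cup A_i\cup W\cup B_j$ for some nonempty cell $(i,j)$. Each $A_{i'}$ with $i'\neq i$ is disjoint from $U\cup W\cup A_i$ and so must lie in $B_j$, and symmetrically $B_{j'}\subseteq A_i$ for $j'\neq j$; the two Laminar formulas then read off by setting $i^*=i,\,j^*=j$. Henceforth the identity above holds at every nonempty cell. Form the bipartite incidence graph $H$ on $\{A_i\}\cup\{B_j\}$ whose edges are the nonempty cells; within each connected component of $H$ the identity propagates, so $|U_j|$ and $|W_i|$ take common values $u_c,w_c$ with $u_c+w_c=K$, and every isolated $A_i$ (resp.\ $B_j$) of $H$ satisfies $A_i\subseteq W$ (resp.\ $B_j\subseteq U$).

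If $H$ is connected, summation yields $bu=K=aw$; combined with $u+w=K$ this forces $(a-1)(b-1)=1$, so $a=b=2$ and $u=w=K/2$ (the alternative $K=0$ gives $T=U$ and hence $U=W$). Depending on whether four or three of the cells of the $2\times 2$ grid lie in $I$, we obtain the Wheel ($T$ as center, $U_1,W_1,U_2,W_2$ as spokes, $A_i\cap B_j$ as sectors) or the Crossing Matching type, with the equalities $|W_2|=|U_1|$ and $|W_1|=|U_2|$ immediate from $u=w$.

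The main obstacle is when $H$ is disconnected. An isolated $A_i$ in $H$ satisfies $A_i\subseteq W$ and yields the Small type for $U$; the strict bound $\sum_{i\neq i^\sharp}|A_i|\leq\kappa-1$ holds because otherwise $W$ would equal a union of sides of $U$, and Lemma~\ref{lem1} would connect every $u\in U$ to the remaining side $A_{i^\sharp}$ through $A_{i^\sharp}\cup\{u\}$, leaving $V\setminus W$ connected and contradicting that $W$ is a cut; isolated $B_j$ are handled symmetrically. The remaining configuration is $H$ disconnected with multiple non-trivial components and no isolated vertex: solving the linear system $u_c+w_c=K$, $\sum_c a_c w_c=K$, $\sum_c b_c u_c=K$ shows that every non-trivial component must have either $u_c=0$ or $w_c=0$, except the single exception of two $1\times 1$ components with $a=b=2$, which is exactly the two-cell instance of Crossing Matching. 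Each degeneracy $w_c=0$ nests all of a component's $A_i$'s inside a single side of $W$ (and symmetrically for $u_c=0$), and chasing these containments returns us to the Laminar case with $W\subseteq U\cup A_{i^*}$ and $U\subseteq W\cup B_{j^*}$ for appropriate distinguished indices. The remaining bookkeeping is to verify the explicit structural formulas and size relations advertised in each of the four types, which follow from the identity $|U_j|+|W_i|=K$ together with the incidence-graph analysis.
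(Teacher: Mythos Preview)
Your Union-Rule identity and bipartite-graph framework are appealing, but the ``dispose'' step is wrong, and the error swallows a whole case of the theorem. You assert that when $V=U\cup A_i\cup W\cup B_j$ for some nonempty cell $(i,j)$, every $A_{i'}$ with $i'\neq i$ is disjoint from $U\cup W\cup A_i$ and hence lies in $B_j$, forcing Laminar. But $A_{i'}$ need not be disjoint from $W$: only $A_{i'}\setminus W\subseteq B_j$ follows, and $W_{i'}=W\cap A_{i'}$ can be nonempty. Concretely, take $a=b=2$ with $A_1\cap B_2=\emptyset$ and the other three cells nonempty---precisely the Crossing Matching configuration with $A_2\cap B_1\neq\emptyset$. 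Here $V\setminus(U\cup A_2\cup W\cup B_1)=A_1\cap B_2=\emptyset$, so your dispose case triggers at the nonempty cell $(2,1)$; yet $|W_1|=|U_2|>0$ and $|W_2|=|U_1|>0$, so $W\not\subseteq U\cup A_i$ for either $i$ and the pair is \emph{not} Laminar. This is exactly the configuration in which the theorem only asserts $|U_1|\geq|U_2|$ rather than equality, because the Union Rule is unavailable at cell $(2,1)$ and your identity $|U_j|+|W_i|=K$ genuinely fails there. Consequently your claim that after disposal ``the identity holds at every nonempty cell'' is false, and your connected-$H$ analysis never actually reaches the three-cell Crossing Matching case you purport to cover.

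A secondary gap: one isolated $A_i$ in $H$ (i.e.\ $A_i\subseteq W$) does not by itself yield Small type for $U$; you must show that \emph{all but one} side of $U$ lie in $W$, and your sketch does not argue this. The paper sidesteps both problems by organizing the case split differently: first peel off Laminar (all of $W\setminus U$ in a single $A_{i^*}$); then prove, via the Intersection Rule, that in the non-Laminar case every $W_i$ and every $U_j$ is nonempty; finally split on whether $\Omega$ contains two cells in distinct rows and distinct columns. If so, two Intersection-Rule inequalities immediately force $a=b=2$, and the subcases ``all four cells nonempty'' versus ``one empty'' give Wheel and Crossing Matching; if not, all nonempty cells lie in a single row or column, and Small follows directly. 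Your bipartite-graph route can probably be repaired, but you must allow the three-cell configuration to survive the dispose step and carry only the inequality $|U_1|+|W_2|\geq K$ at its unmatched cell.
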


\begin{proof}
Suppose there is a single index $i^*$ such that 
$W_{i^*}\neq \emptyset$ and $W_i=\emptyset$ for all $i\neq i^*$.
It follows that $W\subseteq A_{i^*}\cup U$ is a laminar cut
of $U$ in side $A_{i^*}$.  It remains to prove the other properties
of the laminar type.
By Lemma~\ref{lem1} there exists paths from any vertex in 
$A_i$, $i\neq i^*$,
to $U\backslash W$ that are not blocked by $W$, 
so they all lie within one side of $W$;
let us denote this side by $B_{j^*}$. 
Then $(U\backslash W)\cup(\cup_{i\neq i^*}A_i)\subseteq B_{j^*}$,
and because $V=U\cup (\cup_{i=1}^a A_i)$, we obtain 
$B_{j^*}\backslash A_{i^*}=(U\backslash W)\cup (\cup_{i\neq i^*} A_i)$. 
Now that $U\subseteq W\cup B_{j^*}$ is laminar w.r.t.~$W$, 
so based on the same reasoning 
we have $A_{i^*}\backslash B_{j^*}=(W\backslash U)\cup(\cup_{j\neq j^*}B_j)$.

\medskip

We proceed under the assumption that such 
indices $i^*,j^*$ do not exist,
and without loss of generality assume 
that $W_1, W_2, U_1, U_2\neq \emptyset$.
We now wish to prove that \emph{all} $U_i,W_i$ are non-empty.
Suppose $W_i \bydef W\cap A_i = \emptyset$ were empty, 
then $A_i$ would be contained 
within a side of $W$, say $A_i\subseteq B_j$.
By Lemma~\ref{lem2} (intersection rule), 
whenever $A_i\cap B_j\neq \emptyset$, 
the set $W_i\cup T\cup U_j$ disconnects 
$A_i\cap B_j$ from the rest of the graph.
It follows that
\[
\abs{W_i}+\abs{T}+\abs{U_j}=\abs{T}+\abs{U_j}\geq \kappa =\abs{T}+\sum_{l=1}^{b}\abs{U_l},
\]
which implies that $U_j$ is the \emph{only} 
non-empty $U_*$-set, 
contradicting $U_1, U_2\neq \emptyset$.
Therefore, $W_i\neq \emptyset$ for all $i$ and similarly, 
$U_j\neq \emptyset$ for all $j$.

\medskip 

Define $\Omega=\{(i, j) \mid A_i\cap B_j\neq \emptyset\}$ to be the side-pairs whose intersections are non-empty. 
We consider the following possibilities, which are exhaustive.
\begin{itemize}
  \item[$1$\cir] There exist $(i, j), (i', j')\in \Omega$ such 
  that $i\neq i'$, $j\neq j'$. Then by Lemma~\ref{lem2} (intersection rule)
\begin{align*}
\abs{W_i}+\abs{T}+\abs{U_j}                       &\geq \kappa\\ 
\mbox{and \ } \abs{W_{i'}}+\abs{T}+\abs{U_{j'}}   &\geq \kappa.
\intertext{On the other hand,}
\abs{U_{j}}+\abs{U_{j'}}+\abs{T} &\leq \abs{U} = \kappa\\ \mbox{ and \ } \abs{W_{i}}+\abs{W_{i'}}+\abs{T} &\leq \abs{W} = \kappa.
\end{align*}
Thus all these inequalities must be equalities, 
and, adding the fact that all $W_i, U_j\neq \emptyset$, 
we conclude that $a=b=2$, 
$\abs{W_i}=\abs{U_{j'}}$, $\abs{W_{i'}}=\abs{U_j}$.
W.l.o.g.~we fix $i=j=1$, $i'=j'=2$.
See Figure~\ref{fig:Thm2-case1}.
\begin{figure}[h!]
    \centering
    \scalebox{.35}{\includegraphics{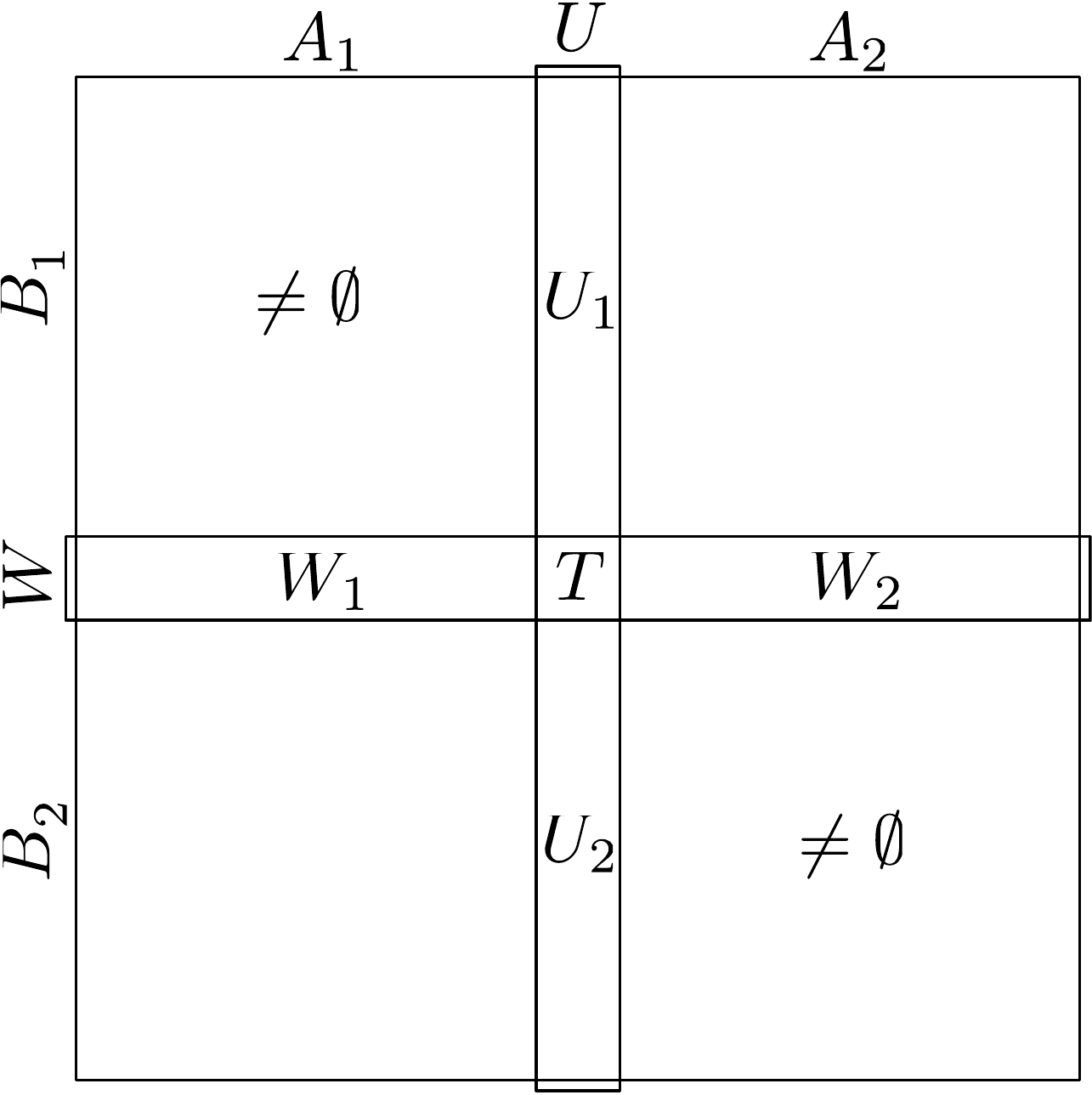}}
    \caption{A depiction of cuts $U,W$ in case 1\cir.}
    \label{fig:Thm2-case1}
\end{figure}
      \begin{itemize}
        \item[1.1\cir] Suppose $A_1\cap B_2\neq \emptyset$ and $A_2\cap B_1\neq \emptyset$. Then $\abs{W_{\hat{\imath}}}+\abs{U_{\hat{\jmath}}}\geq \kappa - \abs{T}$ 
        for every $\hat{\imath}, \hat{\jmath}\in \{1, 2\}$, 
        so we conclude that 
        \[
        \abs{U_1}=\abs{U_2}=\abs{W_1}=\abs{W_2}=\frac{\kappa-\abs{T}}{2}.
        \]
        Now that $W_{\hat{\imath}}\cup T\cup U_{\hat{\jmath}}$ disconnects $A_{\hat{\imath}}\cap B_{\hat{\jmath}}$ from the rest of the graph, $U_1\cup T\cup U_2=U$ disconnects $(A_1\cap B_1)\cup W_1\cup(A_1\cap B_2)=A_1$ from $(A_2\cap B_1)\cup W_2\cup(A_2\cap B_2)=A_2$, $W_1\cup T\cup W_2=W$ disconnects $(A_1\cap B_1)\cup U_1\cup(A_2\cap B_1)=B_1$ from $(A_1\cap B_2)\cup U_2\cup(A_2\cap B_2)=B_2$, we conclude that $(T; U_1, W_1, U_2, W_2)$ forms a $4$-wheel.
        \begin{figure}[h!]
            \centering
            \scalebox{.4}{\includegraphics{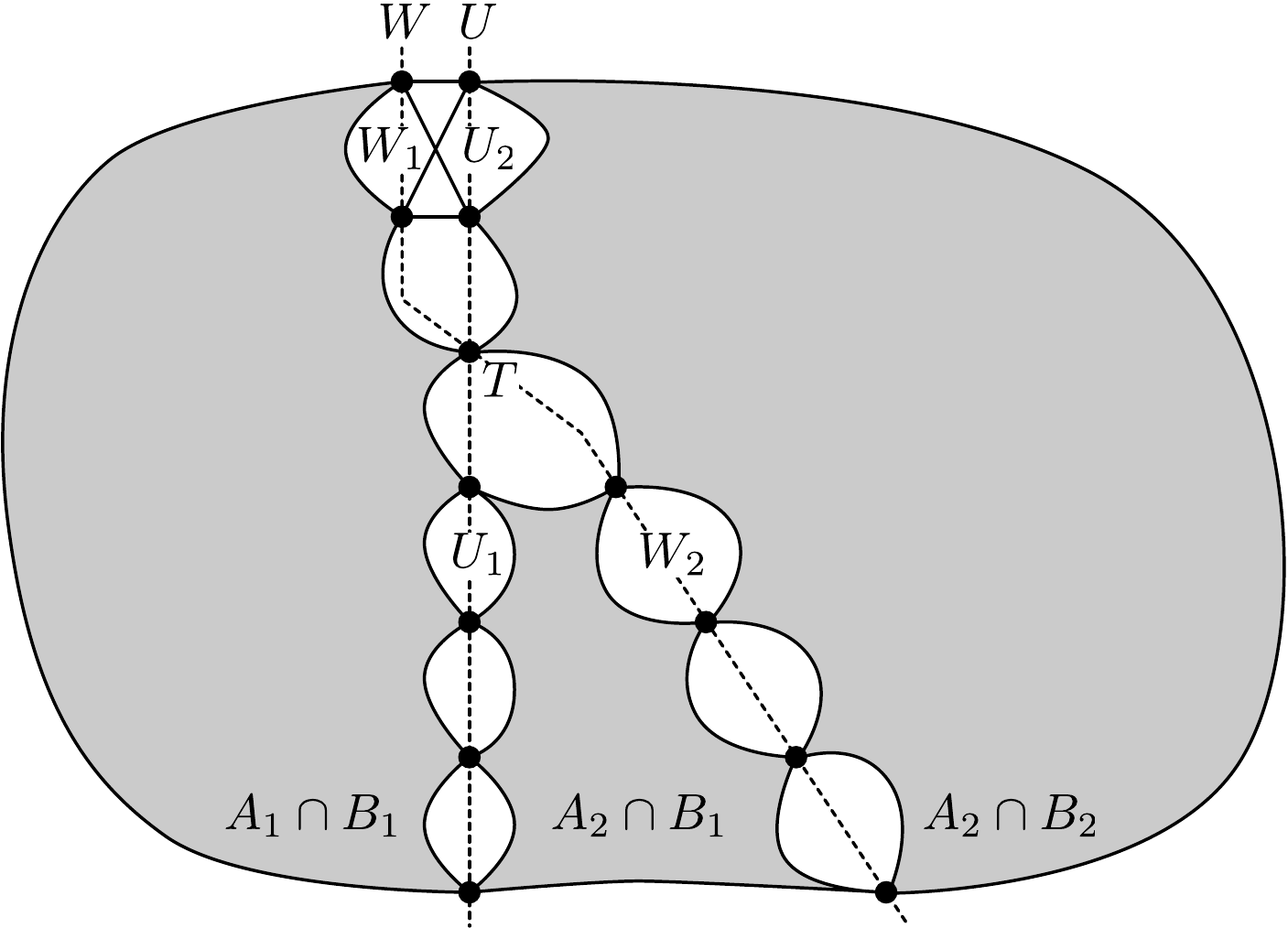}}
            \caption{A depiction of the cuts $U,W$ in case 1.2\cir.}
            \label{fig:Thm1-crossing-matching}
        \end{figure}
        \item[1.2\cir] Suppose
        $A_1\cap B_{2}=\emptyset$ (or symmetrically, that 
        $A_2\cap B_1=\emptyset$).
        Then $A_1=A_1\cap (B_1\cup W)=(A_1\cap B_1)\cup W_1$. By Lemma~\ref{lem2}, $W_1\cup T\cup U_1$
        separates $A_1\cap B_1$ from the rest of the 
        graph.  Since 
        $U_{2}\subseteq V\backslash((A_1\cap B_1)\cup(U_1\cup T\cup W_1))$, it follows that
        $W_1\cup T\cup U_1$ disconnects 
        $U_2$ from
        $A_1\cap B_1=A_1\backslash(W_1\cup T\cup U_1)$,
        i.e., it is a matching cut of $U$ 
        in side $A_1$ w.r.t.~$U_2$.
        See Figure~\ref{fig:Thm1-crossing-matching}.
        Because $W_2=W\cap A_2\neq \emptyset$ and $(W\cap A_1)\cup (U\backslash U_{2})=W_1\cup T\cup U_1$, 
        $W$ is a crossing matching cut of $U$ in 
        side $A_1$ w.r.t.~$U_2$.
        
        If $A_2\cap B_1\neq \emptyset$, by Lemma~\ref{lem2} (intersection rule)
        \[\abs{U_1}+\abs{W_2}+\abs{T}\geq \kappa = \abs{W_1}+\abs{T}+\abs{W_2},\]
        so $\abs{U_1}\geq \abs{U_2}$.
      \end{itemize}
      
  \item [2\cir] Suppose there exists a $j^\sharp$
  such that $\forall i.\forall j\neq j^\sharp. (i,j)\not\in \Omega$,
  i.e., $A_i\cap B_{j} = \emptyset$. 
  This implies that $\cup_{j\neq j^\sharp} B_j\subseteq U$, 
  and because $U_{j^{\sharp}}\neq \emptyset$, 
  $\abs{\cup_{j\neq j^\sharp}B_j}$ is \emph{strictly} smaller than $\kappa$. 
  Therefore $W$ is a (\Rmnum{1}, $\kappa-1$)-small cut, 
  and all the small sides of $W$ are within $U$.
  
  \item [3\cir] There exists $i^\sharp$ such that 
  $\forall i\neq i^\sharp.\forall j. (i,j)\not\in \Omega$. 
  Symmetric to case 2\cir; $U$ is (\Rmnum{1}, $\kappa-1$)-small, 
  and all the small sides of $U$ are within $W$.

  \item [4\cir] $\Omega = \emptyset$. Then $\cup_{i=1}^a A_i\subseteq W$, 
  so $V=U\cup (\cup_{i=1}^a A_i)\subseteq U\cup W$, 
  and $\abs{V}\leq 2\kappa$.  This is a possibility, 
  but not one we consider as it contradicts our initial assumption that $n>4\kappa$.
\end{itemize}
\end{proof}

\begin{corollary}
  \label{cor1}
  If $U$ is a $\kappa$-cut that is not (\Rmnum{1}, $\kappa-1$)-small 
  and has at least $3$ sides, 
  then all other $\kappa$-cuts have a laminar type relation with $U$, 
  or are themselves (\Rmnum{1}, $\kappa-1$)-small cuts.
\end{corollary}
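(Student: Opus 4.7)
The plan is to apply Theorem~\ref{thm1} and eliminate three of the four cases based on the hypotheses on $U$. Let $W$ be any other $\kappa$-cut, and invoke Theorem~\ref{thm1} to classify the relationship between $U$ and $W$.

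First I would rule out the Wheel type and the Crossing Matching type. Both of these conclusions in Theorem~\ref{thm1} explicitly assert $a=b=2$, where $a$ is the number of sides of $U$. Since by hypothesis $U$ has at least $3$ sides, $a\geq 3$, so neither of these two cases can apply to the pair $(U,W)$.

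Next I would handle the Small type. Theorem~\ref{thm1} gives a disjunction: either $U$ is (\Rmnum{1}, $\kappa-1$)-small with its small sides contained in $W$, or $W$ is (\Rmnum{1}, $\kappa-1$)-small with its small sides contained in $U$. The first disjunct is forbidden by the assumption that $U$ is not (\Rmnum{1}, $\kappa-1$)-small, so we are forced into the second disjunct, which is exactly the second alternative stated in the corollary: $W$ itself is a (\Rmnum{1}, $\kappa-1$)-small cut.

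The only remaining case in Theorem~\ref{thm1} is the Laminar type, which is the first alternative asserted by the corollary. Since these four cases are exhaustive, the proof is complete. There is no real obstacle here; the corollary is essentially a direct pigeonhole on the classification of Theorem~\ref{thm1}, using the two hypotheses on $U$ (``at least $3$ sides'' kills the two $a=b=2$ cases, and ``not (\Rmnum{1}, $\kappa-1$)-small'' kills the $U$-is-small branch of the Small type).
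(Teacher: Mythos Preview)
Your proposal is correct and matches the paper's approach: the paper states Corollary~\ref{cor1} immediately after Theorem~\ref{thm1} without an explicit proof, treating it as a direct consequence of the classification, and your argument spells out precisely that deduction. The two hypotheses on $U$ eliminate the Wheel, Crossing Matching, and $U$-is-small branches exactly as you describe, leaving only the Laminar case or the $W$-is-small branch.
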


\begin{corollary}
  \label{cor2}
  Suppose $U$ is a $\kappa$-cut that is not (\Rmnum{1}, $\kappa-1$)-small, 
  with exactly two sides $A$ and $B$.
  Suppose $W$ is a $\kappa$-cut with sides $K$, $L$ (and possibly others), 
  such that $W\cap A\neq \emptyset$, $W\cap B\neq \emptyset$, $A\subseteq K\cup W$, 
  and $L\cap U\neq \emptyset$.
  Then $W$ only has two sides, and $W$ is a crossing matching cut of $U$
  in side $A$ w.r.t.~$L\cap U$.
\end{corollary}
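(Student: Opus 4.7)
The plan is to apply Theorem~\ref{thm1} to classify $W$ with respect to $U$, and rule out every type except Crossing Matching. I fix the labeling $A_1=A$, $A_2=B$, $B_1=K$, $B_2=L$ with $T=U\cap W$, $W_i=W\cap A_i$, $U_j=U\cap B_j$, and treat the four types in turn.

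First, I would rule out the Laminar type: such a cut satisfies $W\cap A_i=\emptyset$ for all but one $i$, contradicting that both $W\cap A$ and $W\cap B$ are nonempty. The Small type in which $U$ is $(\mathrm{I},\kappa-1)$-small is excluded by hypothesis. The Wheel type requires $a=b=2$ with all four intersections $A_i\cap B_j$ nonempty, but $A\subseteq K\cup W$ forces $A\cap L=\emptyset$, ruling it out. As a preliminary observation needed below, $A\cap K\neq\emptyset$: else $A\subseteq W$, so $|A|\le\kappa$, and together with $|A|\ge\kappa$ (from $U$ not being $(\mathrm{I},\kappa-1)$-small) we would get $A=W$, contradicting $W\cap B\neq\emptyset$; similarly $B\cap K\neq\emptyset$.

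In the Crossing Matching case, we are exactly in Case~1.2 of the proof of Theorem~\ref{thm1} (the subcase with $A_1\cap B_2=\emptyset$): the argument already in that proof directly yields $a=b=2$ and that $W$ is a crossing matching cut of $U$ in side $A$ with respect to $U_2=L\cap U$, matching the claimed conclusion.

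In the remaining ``$W$-small'' sub-case, I would first argue that the large side of $W$ must be $K$: otherwise $K$ is small and hence $K\subseteq U$, so $A\subseteq K\cup W\subseteq U\cup W$; combined with $A\cap U=\emptyset$ and $|A|\geq\kappa=|W|$ this forces $A=W$ and contradicts $W\cap B\neq\emptyset$. Thus $K$ is large and every non-$K$ side (including $L$) is small and contained in $U$. Applying the Intersection Rule (Lemma~\ref{lem2}) to $A\cap K$ and to $B\cap K$ yields $|W_1|,|W_2|\ge\sum_{j\ne 1}|U_j|$; combining with $|W_1|+|W_2|+|T|=\kappa$ and demanding that the derived matching cut $(U\setminus L)\cup W_1$ be a $\kappa$-cut (as required by the Crossing Matching definition, matching cuts being $\kappa$-cuts) pins $|W_1|=|L|$, which together with the previous bound forces every non-$K$ side other than $L$ to be empty. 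So $W$ has exactly two sides, and the matching-cut condition is verified routinely as in the Crossing Matching case.

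The main obstacle is this last sub-case: carefully combining the Intersection Rule bounds with the size constraints $|U|=|W|=\kappa$ and the matching-cut-is-a-$\kappa$-cut convention, in order to eliminate the a priori possibility of multiple small sides of $W$ inside $U$.
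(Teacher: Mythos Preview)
Your overall strategy—classify $W$ with respect to $U$ via Theorem~\ref{thm1} and eliminate all types except Crossing Matching—is exactly what the paper intends, and your treatment of the Laminar, Wheel, and Crossing Matching cases is correct. The paper states the result as an unproven corollary, so this is the right route.

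There are two genuine problems, however.

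First, your ``similarly $B\cap K\neq\emptyset$'' in the preliminary observation is not justified. The argument for $A\cap K\neq\emptyset$ uses the hypothesis $A\subseteq K\cup W$, but there is no analogous hypothesis $B\subseteq K\cup W$. (In the $W$-small sub-case you can recover it, since then every side of $W$ other than $K$ lies in $U$, hence is disjoint from $B$; but it is not available as a general preliminary.)

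Second, and more seriously, your handling of the ``$W$-small'' sub-case is circular. You write that ``demanding that the derived matching cut $(U\setminus L)\cup W_1$ be a $\kappa$-cut \ldots\ pins $|W_1|=|L|$.'' But the assertion that $(U\setminus L)\cup W_1$ is a $\kappa$-cut is precisely part of what you are trying to prove (it is condition (ii) in the definition of a crossing matching cut). You cannot assume it to derive $|W_1|=|L|$. The inequalities you obtain from the Intersection Rule, namely $|W_1|\geq\sum_{j\neq 1}|U_j|$ and $|W_2|\geq\sum_{j\neq 1}|U_j|$, together with $|W_1|+|W_2|=\sum_j|U_j|$, only yield $|U_1|\geq\sum_{j\neq 1}|U_j|$; they do not force the extra small sides $B_3,B_4,\ldots$ to vanish. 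Indeed, it is not hard to sketch a configuration in which $W$ has three sides $K,\{u_1\},\{u_2\}$ with $N(u_1)=N(u_2)=W$ and $u_1,u_2\in U$, satisfying every hypothesis of the corollary—so an independent argument (or an additional hypothesis) is needed to exclude this sub-case.
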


\begin{corollary}\label{cor:CD}
Define $\Cuts{C; D}$ to be the 
set of all $\kappa$-cuts 
that disconnect disjoint,
non-empty vertex sets $C$ and $D$. 
If $\Cuts{C; D}\neq \emptyset$, 
it contains a unique minimal element $\MinCut{C;D}$, 
such that for any cut $U\in \Cuts{C; D}$, $\Region{\MinCut{C; D}}(C) \subseteq \Region{U}(C)$.
\end{corollary}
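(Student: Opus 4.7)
The plan is to establish a submodularity-style composition: for any two cuts $U_1,U_2\in\Cuts{C;D}$, construct a third cut $U^{\cap}\in\Cuts{C;D}$ whose $C$-region is contained in $\Region{U_1}(C)\cap\Region{U_2}(C)$. Given this, pick $U^*\in\Cuts{C;D}$ minimizing $\abs{\Region{U^*}(C)}$ (which exists by finiteness of $V$). For any other $U\in\Cuts{C;D}$, applying the composition to $U^*$ and $U$ would otherwise yield a cut with a strictly smaller $C$-region, contradicting minimality of $U^*$; hence $\Region{U^*}(C)\subseteq\Region{U}(C)$ for every $U\in\Cuts{C;D}$. This is exactly the defining property of $\MinCut{C;D}$.

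\textbf{Construction of the composition via Lemma~\ref{lem2}.} For $i\in\{1,2\}$, set $P_i=\Region{U_i}(C)$ and $R_i=V\setminus(P_i\cup U_i)$. Since $U_i$ separates $C$ from $D$, we have $C\subseteq P_1\cap P_2$ and $D\subseteq R_1\cap R_2$, and because each $P_i$ (resp.~$R_i$) is a union of sides of $U_i$, the set $U_i$ disconnects $P_i$ (resp.~$R_i$) from the rest of the graph. Two applications of the intersection rule of Lemma~\ref{lem2} produce the cuts
\[
U^{\cap}=(U_1\cap P_2)\cup(U_1\cap U_2)\cup(U_2\cap P_1),
\qquad
U^{\cup}=(U_1\cap R_2)\cup(U_1\cap U_2)\cup(U_2\cap R_1),
\]
which disconnect $P_1\cap P_2\supseteq C$ and $R_1\cap R_2\supseteq D$, respectively, from the rest of the graph. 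In particular, each of $U^{\cap}$ and $U^{\cup}$ separates $C$ from $D$, so $\abs{U^{\cap}},\abs{U^{\cup}}\geq\kappa$. Using the partition $V=P_j\sqcup U_j\sqcup R_j$ to decompose $\abs{U_i}$ for $\{i,j\}=\{1,2\}$, and using $U_i\cap P_i=\emptyset=U_i\cap R_i$ to see that the three pieces composing each of $U^{\cap},U^{\cup}$ are pairwise disjoint, a direct count gives $\abs{U^{\cap}}+\abs{U^{\cup}}=\abs{U_1}+\abs{U_2}=2\kappa$. Hence $\abs{U^{\cap}}=\abs{U^{\cup}}=\kappa$, so $U^{\cap}\in\Cuts{C;D}$ and $\Region{U^{\cap}}(C)\subseteq P_1\cap P_2$.

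\textbf{Uniqueness of the minimizer as a cut.} The argument above shows that any $U^*$ minimizing $\abs{\Region{\cdot}(C)}$ has $\Region{U^*}(C)\subseteq\Region{U}(C)$ for every $U\in\Cuts{C;D}$. To see that $U^*$ itself is uniquely determined (not merely its $C$-region), observe that any $U\in\Cuts{C;D}$ whose $C$-region equals a fixed $P$ must contain the neighborhood $N(P)=\{v\in V\setminus P:v\text{ has a neighbor in }P\}$, since any such $v$ would otherwise lie in a side of $U$ adjacent to $P$ and hence in $P$. But $N(P)$ alone separates $P$ from $V\setminus(P\cup N(P))\supseteq D$, so $\abs{N(P)}\geq\kappa=\abs{U}$ forces $U=N(P)$. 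Thus the minimizer is unique, and we may set $\MinCut{C;D}=U^*$. The main obstacle is the bookkeeping in the middle paragraph: one must verify both that the hypotheses of the intersection rule hold (guaranteed by $C\subseteq P_1\cap P_2$ and $D\subseteq R_1\cap R_2$) and that the cardinalities telescope to $2\kappa$ (guaranteed by the disjointness of the pieces of $U^{\cap}$ and $U^{\cup}$).
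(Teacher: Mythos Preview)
Your proof is correct, but it takes a genuinely different route from the paper's. The paper reduces to a directed $s$-$t$ flow network (contract $C$ to $s$ and $D$ to $t$, split each vertex into an in/out pair joined by a unit-capacity arc) and then invokes the Picard--Queyrenne theorem that minimum $s$-$t$ cuts are closed under union and intersection; the unique minimal element of that lattice is $\MinCut{C;D}$. You instead stay inside $G$ and use Lemma~\ref{lem2} twice to manufacture the ``intersection'' and ``union'' cuts $U^{\cap},U^{\cup}$ directly, then a submodular count $\abs{U^{\cap}}+\abs{U^{\cup}}=2\kappa$ forces both to be $\kappa$-cuts, giving the lattice property without any flow machinery. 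Your argument is more self-contained---it needs nothing beyond the paper's own Lemma~\ref{lem2} and $\kappa$-connectivity---and it also makes explicit the uniqueness of the cut itself (via $U=N(P)$), which the paper leaves implicit. The paper's approach is shorter and foreshadows later uses of the Picard--Queyrenne representation (Theorem~\ref{thm:Picard-Queyrenne}, Corollary~\ref{cor:result-of-picard-queyrenne}); yours is arguably the ``corollary of Theorem~\ref{thm1}'' route the paper mentions but does not write out, carried through with the more primitive Lemma~\ref{lem2}.
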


\begin{proof}
This is a corollary of Theorem~\ref{thm1}, but also admits a simple, direct proof via the Picard-Queyrenne theorem~\cite{PicardQ80}.  Form a flow network $\vec{G}$ via the following steps (i) contract $C$ and $D$ to vertices $s$ and $t$, (ii) replace each vertex $v$ with a subgraph consisting of vertices $v_{\operatorname{in}},v_{\operatorname{out}}$
and a directed edge $(v_{\operatorname{in}},v_{\operatorname{out}})$,
(iii) replace each undirected edge $\{u,v\}$ with
directed edges 
$(u_{\operatorname{out}},v_{\operatorname{in}})$
and 
$(v_{\operatorname{out}},u_{\operatorname{in}})$,
(iv) give edges from (ii) unit capacity
and edges from (iii) infinite capacity.  
If the flow value is $\kappa$, then 
the minimum $(s_{\operatorname{out}},t_{\operatorname{in}})$-cuts
are in one-to-one correspondence with
the minimum vertex cuts in $\Cuts{C;D}\neq\emptyset$.
Since $(s_{\operatorname{out}},t_{\operatorname{in}})$-cuts in $\vec{G}$
are closed under union and intersection~\cite{PicardQ80},
there is a unique vertex cut $\MinCut{C;D} \in \Cuts{C;D}$
such that $\Region{\MinCut{C;D}}(C)$ is minimal w.r.t.~containment.

\end{proof}

Theorem~\ref{thm1} classifies the pairwise relationship between two 
minimum $\kappa$-cuts.  
In 
Sections~\ref{subsect:wheel}--\ref{subsect:small} 
we further explore the properties of 
wheel cuts, (crossing) matching cuts,
laminar cuts, and small cuts.

\subsection{Wheels and Wheel Cuts}\label{subsect:wheel}

Recall that a $w$-wheel $(T; C_1,\ldots,C_w)$ satisfied, by definition,
the property that $C_i\cup T\cup C_{i+2}$ formed a $\kappa$-cut, but
did not say anything explicitly about $C(i,j) = C_i \cup T \cup C_j$.
Lemma~\ref{lem3} proves that these are also cuts, and bounds their number
of sides.

\begin{lemma}
  \label{lem3}
  Suppose $(T; C_1, C_2, \ldots, C_w)$ forms a $w$-wheel ($w\geq 4$) with 
  sectors $S_1, S_2, \ldots, S_w$. (Subscripts are modulo $w$.) 
  For any $i\neq j$, $C(i, j)$ is a $\kappa$-cut 
  that disconnects $D(i, j)$ from the rest of the graph.
  Moreover, when $j-i \not\in \{1,w-1\}$, $C(i, j)$ has exactly two sides, 
  which are $D(i, j)$ and $D(j, i)$.
  Furthermore, $\abs{C_i}=\frac{\kappa-\abs{T}}{2}$.
\end{lemma}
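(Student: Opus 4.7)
The plan is to prove the lemma in three stages: (a) for every $i\neq j$, $C(i,j)$ is a cut of $G$ disconnecting $D(i,j)$ from the rest, of size $|C_i|+|T|+|C_j|$; (b) $|C_i|=(\kappa-|T|)/2$, so every such $C(i,j)$ is in fact a $\kappa$-cut; and (c) when $j-i\notin\{1,w-1\}$, $C(i,j)$ has exactly two sides, namely $D(i,j)$ and $D(j,i)$. Parts (a) and (b) are routine applications of the Union/Intersection Rules of Lemma~\ref{lem2} together with a counting argument. I expect (c) to be the main obstacle, because the wheel definition only forces $D(i,i+2)$ to be a \emph{region} of $C(i,i+2)$, and it is not a priori obvious that $D(i,j)$ is a single connected component of $G|_{V\setminus C(i,j)}$.

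For (a) I induct on $j-i$. The base case $j-i=2$ is the wheel definition. For $j-i\ge 3$, I apply the Union Rule to the inductively-established pair $C(i,j-1)$ and $C(j-2,j)$: the regions satisfy $D(i,j-1)\cup D(j-2,j)=D(i,j)$, the rule-produced cut simplifies to $(C_i\cup T)\cup(T\cup C_j)=C(i,j)$, and the side condition holds because the complement $D(j,i)$ contains the nonempty sector $S_j$. For the boundary $j-i\in\{1,w-1\}$ I invoke the Intersection Rule on $C(i-1,i+1)$ and $C(i,i+2)$, whose regions meet in $D(i-1,i+1)\cap D(i,i+2)=S_i=D(i,i+1)$ and whose rule-produced cut is $C_i\cup T\cup C_{i+1}=C(i,i+1)$. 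In every case, $\kappa$-connectivity of $G$ forces $|C_i|+|T|+|C_j|\ge\kappa$.

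For (b), the base wheel cuts give $|C_i|+|C_{i+2}|=\kappa-|T|$ for every $i$, so $\sum_i|C_i|=\tfrac{w(\kappa-|T|)}{2}$. Summing the inequality $|C_i|+|C_{i+k}|\ge\kappa-|T|$ from (a) cyclically over $i$ for any fixed $k$ yields $2\sum_i|C_i|\ge w(\kappa-|T|)$, which matches the $k=2$ equality and forces each individual inequality to be tight. For $w\ge 5$, fixing $i$ and varying $j\in\{i+2,\ldots,i+w-2\}$ makes $|C_j|$ constant across at least three values, and rotating $i$ equates all $|C_j|$; for $w=4$ I combine the tight $k=1$ and $k=2$ equalities to directly pin down $|C_1|=|C_2|=|C_3|=|C_4|=(\kappa-|T|)/2$. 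Together with (a), this confirms that $C(i,j)$ is a $\kappa$-cut for every $i\neq j$.

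The crux is (c). I will prove $D(i,j)$ is connected in $G|_{V\setminus C(i,j)}$; the complementary region $D(j,i)$ is handled symmetrically. The key structural observation is this: for each index $l$, any connected component $X$ of $G|_{S_l}$ is a single side of the $\kappa$-cut $C(l,l+1)$ (which exists by (a)--(b)), so its in-$G$ neighborhood $N(X)$ is a separator of $X$ contained in $C(l,l+1)$. The size bound $|X|+|N(X)|\le|S_l|+\kappa<n$ (using $n>4\kappa$ and the fact that the wheel fills $V$) guarantees $N(X)$ is a proper separator, so $|N(X)|\ge\kappa=|C(l,l+1)|$, forcing $N(X)=C(l,l+1)$ in full. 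In particular, every vertex of $C_l\cup C_{l+1}$ is adjacent in $G$ to every component of $S_l$. When $l,l+1\in[i,j-1]$, both spokes $C_l$ and $C_{l+1}$ lie inside $D(i,j)$, so those witness edges are edges of $G|_{D(i,j)}$. Chaining along the arc $S_i\to C_{i+1}\to S_{i+1}\to\cdots\to C_{j-1}\to S_{j-1}$---each sector links into its two neighboring spokes through its components, and each spoke routes through any component of an adjacent sector to the next spoke---then exhibits a path in $G|_{D(i,j)}$ between any two vertices of $D(i,j)$, completing the proof.
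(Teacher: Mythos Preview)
Your proof is correct and follows essentially the same route as the paper: induction via the Union/Intersection Rules of Lemma~\ref{lem2} for part~(a), a counting argument on the spoke sizes for part~(b), and for part~(c) the observation that every connected component of $S_l$ must neighbor all of $C(l,l+1)$, which lets you chain sectors through spokes. The only cosmetic differences are that the paper uses the pair $C(i,j-1),\,C(i+1,j)$ rather than your $C(i,j-1),\,C(j-2,j)$ in the inductive step, invokes Lemma~\ref{lem1} in place of your direct neighborhood-size argument, and states the spoke-size equality more tersely; your ordering of (b) before (c) is in fact logically cleaner, since the connectedness argument relies on $C(l,l+1)$ already being a $\kappa$-cut.
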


\begin{proof}
By definition $C_i\cup T\cup C_{i+2}$ and $C_{i-1}\cup T\cup C_{i+1}$ are two 
$\kappa$-cuts that, respectively, 
separate $S_i\cup C_{i+1}\cup S_{i+1}$ and $S_{i-1}\cup C_{i}\cup S_{i}$ 
from the rest of the graph.
By Lemma~\ref{lem2} (intersection rule), $C_i\cup T\cup C_{i+1}$ disconnects $S_i$ from the rest of the graph.
Thus, whenever $j-i \in \{1,2\}$, $C(i, j)$ disconnects $D(i, j)$ from the rest of the graph. 
This is the base case. Assuming the claim is true whenever $j-i \in [1,l-1]$, we prove
it is true up to $l$ as well, $l\leq w-1$.
Fix $i,j$ such that $j-i=l$.  Then $C(i, j-1)$ is a cut that disconnects $D(i, j-1)$ from the rest of the graph, and $C(i+1, j)$ is a cut that disconnects $D(i+1, j)$ from the rest of the graph. 
By Lemma~\ref{lem2} (union rule), $(C(i, j-1)\backslash D(i+1, j)\cup (C(i+1, j)\backslash D(i, j-1))=C_i\cup T\cup C_j$ disconnects $D(i, j-1)\cup D(i+1, j)=D(i, j)$ from the rest of the graph. 
This proves the first part.

By Lemma~\ref{lem1}, 
there exist paths from any vertex in $S_r$ to every vertex in $C_r$, 
and to every vertex in $C_{r+1}$, 
that is not blocked by $V\backslash S_r$. 
Thus, when $j-i \not\in \{1,w-1\}$, $C(i,j)$ separates $D(i,j)$ from $D(j,i)$,
$D(i,j)$ forms a side since all vertices in $D(i,j)$ 
have paths to $C_{i+1}$, being distinct from $C_j$, 
and $D(j,i)$ forms a side since all vertices have paths 
to $C_{j+1}$, being distinct from $C_i$.
(When $j=i+1$, $D(i,i+1)$ may be a region consisting of multiple sides.)

Now it is proved that for all $i, j$, $C(i, j)$ is a $\kappa$-cut, so $\abs{C_i}+\abs{C_j}=\kappa - \abs{T}$ for all $i, j$. Therefore all $\abs{C_i}$ are equal to $\frac{\kappa - \abs{T}}{2}$.
\end{proof}

\begin{remark}
Lemma~\ref{lem3} shows that the set $\{C(i,i+2)\}_{i\in [w]}$
generates all the ${w\choose 2}$ wheel cuts.  One might
think that the sector cuts $\{C(i,i+1)\}$ would also suffice,
but this is incorrect.  In Figure~\ref{fig:faux-wheel},
$C(i,i+1)$ is a (minimum) 6-cut for all $i$ separating $S_i$
from the rest of the graph, but this is \emph{not} a 4-wheel 
since $C(1,3)$ and $C(2,4)$ \emph{are not cuts}.
\begin{figure}[h!]
    \centering
    \scalebox{.4}{\includegraphics{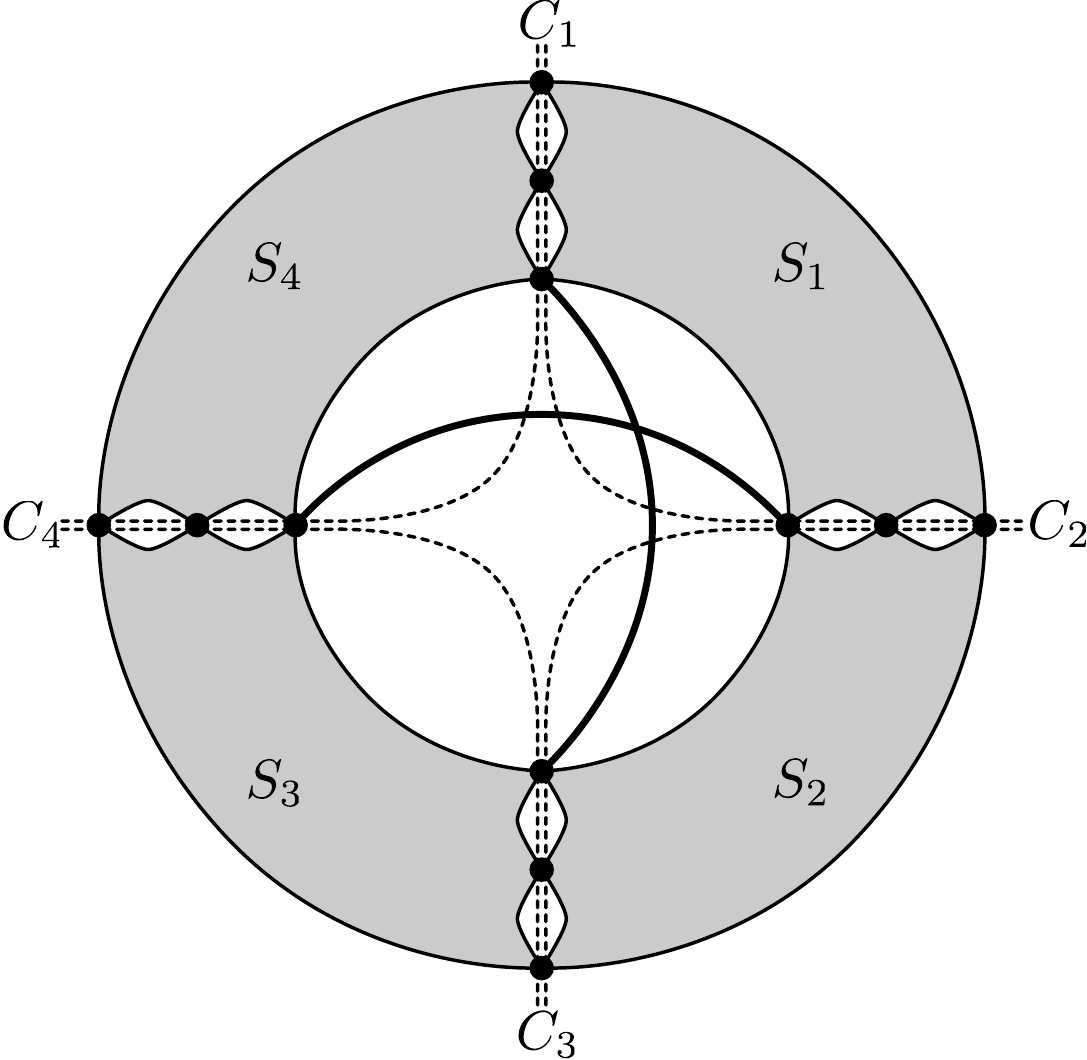}}
    \caption{A faux 4-wheel.}
    \label{fig:faux-wheel}
\end{figure}
\end{remark}

\begin{theorem}
\label{thm2}
   Suppose $(T; C_1, C_2, \ldots, C_w)$ forms a $w$-wheel ($w\geq 4$) 
   with sectors $S_1, S_2, \ldots, S_w$. (Subscripts are given by modulo $w$.) 
   Let $X$ be any minimum $\kappa$-cut. Then one of the following is true:
  \begin{itemize}
    \item[$1$\cir] $X=C(i, j)$ for some $i\neq j$.
    \item[$2$\cir] $X\subseteq C(i, i+1)\cup S_i$ for some $i$, 
    i.e., $X$ is a laminar cut of $C(i, i+1)$.
    \item[$3$\cir] $X$ has crossing matching type relation with some $C(i, i+1)$ or some $C(i, i+2)$.
    \item[$4$\cir] $X$ is a (\Rmnum{1}, $\kappa-1$)-small cut.
    \item[$5$\cir] $(T; C_1, C_2, \ldots, C_w)$ is a small wheel.
    \item[$6$\cir] There exists $i<j$, such that $X\subseteq S_i\cup T\cup S_j$, and $(T; C_1, \ldots, C_i, X\cap S_i, C_{i+1}, \ldots, C_j, X\cap S_j, C_{j+1}, \ldots, C_w)$ 
    forms a $(w+2)$-wheel; or there exists $i\neq j$, $X\subseteq S_i\cup T\cup C_j$, and $(T; C_1, \ldots, C_i, X\cap S_i, C_{i+1}, \ldots, C_w)$ forms a $(w+1)$-wheel. In other words, $(T; C_1, C_2, \ldots, C_w)$ is a subwheel of some other wheel.
  \end{itemize}
\end{theorem}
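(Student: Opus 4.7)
My plan is a case analysis on how $X$ intersects the wheel's sectors, combined with repeated applications of Theorem~\ref{thm1} and Lemma~\ref{lem2}. First dispose of the trivial reductions: if $X = C(i,j)$ for some $i \neq j$ we land in case $1$; if $X$ is (\Rmnum{1}, $\kappa-1$)-small we are in case $4$; and if the wheel is small we are in case $5$. Henceforth assume none of these holds, so $X$ is a non-small, non-wheel $\kappa$-cut and the wheel itself is not small. Define $I_X = \{i : X \cap S_i \neq \emptyset\}$; the remaining analysis proceeds by $|I_X|$.

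The key tool is Theorem~\ref{thm1} applied to $X$ against each two-sided wheel cut $C(i,i+2)$ (two-sided by Lemma~\ref{lem3}) and each sector cut $C(i,i+1)$. The above reductions rule out the Small type, so every such pairwise relationship must be Laminar, Wheel, or Crossing Matching. A Crossing Matching relationship with any $C(i,i+1)$ or $C(i,i+2)$ immediately places $X$ in case $3$, so going forward assume only Laminar and Wheel types arise.

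When $|I_X| = 0$, $X \subseteq T \cup \bigcup_i C_i$; the size identity $|X| = |T| + 2|C_i|$ from Lemma~\ref{lem3}, combined with Lemma~\ref{lem2}'s union rule applied to $X$ and appropriate wheel cuts, forces $X = C(i,j)$ for some $i,j$, contradicting our assumption. When $|I_X| = 1$ with $I_X = \{i_0\}$, Lemma~\ref{lem1} guarantees that the sectors $S_i$ for $i \neq i_0$ remain pairwise connected through $(T \cup \bigcup_j C_j) \setminus X$, so they all lie in a single side of $X$; this forces $X \subseteq C(i_0, i_0+1) \cup S_{i_0}$, which is precisely case $2$. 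When $|I_X| \geq 2$, I apply Lemma~\ref{lem2}'s intersection rule to $X$ against the Laminar/Wheel relationship with each $C(i,i+2)$ for $i \in I_X$ to isolate $X \cap S_i$ as a set whose removal cuts $S_i$ from its surroundings; each $X \cap S_i$ then becomes a new spoke inserted between $C_i$ and $C_{i+1}$, producing the extended $(w + |I_X|)$-wheel required for case $6$, unless a degenerate configuration with $|I_X| = 2$ yields only one effective new spoke and collapses back to case $3$.

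The main obstacle is verifying the wheel axioms in case $6$. After inserting new spokes $X \cap S_i$, one must show that every cyclically consecutive triple of old/new spokes together with $T$ forms a $\kappa$-cut separating the appropriate two sectors from the rest of the graph. This bootstrap step will require applying Lemma~\ref{lem2} iteratively to combine $X$ with the old wheel cuts $C(i,i+2)$ and $C(i,i+1)$, together with the uniform spoke-size identity $|X \cap S_i| = (\kappa - |T|)/2$ which emerges from Lemma~\ref{lem3} only after the wheel axioms are established. The identity itself should follow by summing $|X|$ across sectors and using $X \cap S_j = \emptyset$ for $j \notin I_X$, but threading this consistency condition through the verification of the axioms is the delicate part.
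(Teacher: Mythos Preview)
Your overall strategy---classify $X$ by how it meets the sectors and apply Theorem~\ref{thm1} against the wheel cuts---matches the paper's, but two steps fail.

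First, the assertion that ``the above reductions rule out the Small type'' is false. Even when $X$ is not (\Rmnum{1},$\kappa-1$)-small and the wheel is not small, the relationship between $X$ and $C(i,i+1)$ can still be of Small type, with $C(i,i+1)$ the small cut and $S_i$ its small region; this forces $S_i \subseteq X$. The paper tracks such indices in a separate set $I_2$, and they drive a case you never reach: when some $S_i \subseteq X$ but an adjacent sector $S_{i\pm 1}$ is disjoint from $X$, the paper uses Lemma~\ref{lem2} and Corollary~\ref{cor2} to show that $X$ is a crossing matching cut of $C(i-1,i+1)$ or $C(i,i+2)$, landing in case~$3$. Your framework, having discarded the Small type, misses this entirely.

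Second, your $|I_X|=1$ analysis is wrong. You claim the sectors $S_j$, $j \neq i_0$, remain connected through $(T \cup \bigcup_k C_k)\setminus X$, forcing $X \subseteq C(i_0,i_0+1)\cup S_{i_0}$ and hence case~$2$. But when $X$ has \emph{Wheel} (not Laminar) relation with $C(i_0,i_0+1)$, the paper shows (its Case~III, Claims~9--11) that in fact $T \subseteq X$ and $X = (X\cap S_{i_0}) \cup T \cup C_j$ for some $j\notin\{i_0,i_0+1\}$. Then $(T \cup \bigcup_k C_k)\setminus X$ omits all of $T$ and all of $C_j$, which \emph{does} disconnect the remaining sectors into two arcs. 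The correct conclusion here is case~$6$ (a $(w+1)$-wheel), not case~$2$; your connectivity claim via Lemma~\ref{lem1} simply does not hold once $T \subseteq X$.

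The fix is to refine $I_X$ by the \emph{type} of the relation with $C(i,i+1)$, not merely by whether $X\cap S_i$ is nonempty. The paper partitions $[w]$ into $I_1$ (Wheel type with $X\cap S_i\neq\emptyset$), $I_2$ (Small type with $S_i\subseteq X$), and $I_3$ ($X\cap S_i=\emptyset$), and then does a five-way split on $(|I_1|,|I_2|,|I_3|)$; each branch requires its own sequence of applications of Lemma~\ref{lem2}.
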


\begin{proof}
By the definition of wheels and Lemma~\ref{lem1}, there exists paths from any vertex in 
$S_i$ to $C_i$, to $T$, and to $C_{i+1}$ that are not blocked by $V\backslash S_i$,
and there exist paths from any vertex in $C_i$ to $T$ that are not blocked by 
$V\backslash (S_{i-1}\cup C_i\cup S_i)$. These facts are used frequently below.

\medskip

For every $i\in\{1, 2, \ldots, w\}$, we consider the 
following cases relating cuts $X$ and $C(i, i+1)$,
which are exhaustive according to Theorem~\ref{thm1}.
\begin{itemize}
\item[\rmnum{1}.] $X=C(i, i+1)$, this is a trivial case and we are in case $1$\cir. 
\item[\rmnum{2}.] $X\cap S_i=\emptyset$.
\item[\rmnum{3}.] $X$ is a laminar cut of $C(i, i+1)$, and $X\cap S_i\neq \emptyset$, then it must be true that $X\subseteq C(i, i+1)\cup S_i$, and we are in case $2$\cir.
\item[\rmnum{4}.] $X$ is a wheel type cut of $C(i, i+1)$, and $X\cap S_i\neq \emptyset$.
\item[\rmnum{5}.] $X$ is a crossing matching type cut of $C(i, i+1)$, then we are in case $3$\cir.
\item[\rmnum{6}.] $X$ and $C(i,i+1)$ have a small type relation, such that $X$ is (\Rmnum{1},$\kappa-1$)-small, then we are in case $4$\cir. 
\item[\rmnum{7}.] $X$ and $C(i, i+1)$ have a small type relation, such that $C(i, i+1)$ is (\Rmnum{1}, $\kappa-1$)-small, and $D(i+1, i)$ is the small side, which implies that $\abs{D(i, i+1)}<\kappa$, so the wheel is a small wheel, and we are in case $5$\cir.
\item[\rmnum{8}.] $X$ and $C(i, i+1)$ have a small-type relation, such that $C(i, i+1)$ is (\Rmnum{1}, $\kappa-1$)-small, and $S_i=D(i, i+1)$ is the region of the small sides. 
Then we have that $S_i=D(i, i+1)\subseteq X$.
\end{itemize}

It can be seen that if for any $i$, we are in case 
\rmnum{1}, \rmnum{3}, \rmnum{5}, \rmnum{6}, or \rmnum{7}, 
there is nothing left to prove. 
Otherwise, we may proceed under the assumption 
that \emph{every} index $i$ is in 
case \rmnum{2}, \rmnum{4}, or \rmnum{8}.
We define the index sets $I_1, I_2, I_3$ as follows.
\begin{align*}
I_1 &= \mbox{$\{i \mid $ case \rmnum{4}\ applies to $i\}$},\\
I_2 &= \mbox{$\{i \mid $ case \rmnum{8}\ applies to $i\}$},\\
I_3 &= \mbox{$\{i \mid $ case \rmnum{2}\ applies to $i\}$},\\
\mbox{ and hence \ }
I_1\cup I_2\cup I_3 &= [w] = \{1, 2, \ldots, w\}. 
\end{align*}

We split the possibilities into the following cases.
\begin{itemize}
  \item[I.] $\abs{I_1}\geq 2$.
  \item[II.] $\abs{I_1}\leq 1$, $\abs{I_2}>0$, $\abs{I_3}>0$.
  \item[III.] $\abs{I_1}= 1$, $\abs{I_2}=0$.
  \item[IV.] $\abs{I_1}\leq 1$, $\abs{I_3}=0$.
  \item[V.] $\abs{I_1}=\abs{I_2}=0$.
\end{itemize}

We will show that I and III lead to case $6$\cir, II leads to case $3$\cir, IV leads to case $5$\cir, and V leads to $1$\cir.  Define $Y_i,Z_i,Q$ as follows
\begin{align*}
            Y_i &= X\cap S_i,\\ 
            Z_i &= X\cap C_i,\\
\mbox{and } Q &= X\cap T.
\end{align*}

\begin{itemize}
\item[I.] Suppose $i\neq j$ are in $I_1$. 
Because $X$ has a wheel type relation with some other cut (namely $C(i, i+1)$ and $C(j, j+1)$), it must have exactly two sides; 
let them be $K$ and $L$.
Without loss of generality we assume $i=1$. 
We consider two subcases depending on whether
$j\not\in \{2,w\}$ (I.a) or $j\in \{2,w\}$ (I.b).
\begin{itemize}
\item [I.a.] $j\not\in \{2,w\}$, i.e., 
$j$ is not adjacent to $i$.
We prove the following claims, 
culminating in Claim 3, which puts us
in case 6\cir.
\begin{description}
  \item [Claim 1.] $X\subseteq S_1\cup T\cup S_j$.

  Because $X$ has a wheel type relation 
  with $C(1, 2)$ and $C(j, j+1)$, 
  $\abs{X\cap D(1,2)}=\abs{X\cap D(2,1)}$
  and 
  $\abs{X\cap D(j,j+1)}=\abs{X\cap D(j+1,j)}$.
  Thus, in terms of the $Y_*,Z_*$ sets,
  \[\abs{Y_1}=\sum_{r\not\in\{1, 2\}}\abs{Z_r}+\sum_{r\neq 1}\abs{Y_r}
  \text{ \ \ and \ \ }\abs{Y_j}=\sum_{r\not\in\{j,j+1\}}\abs{Z_r}+\sum_{r\neq j}\abs{Y_r}.\]
  Therefore, $\abs{Z_r}=0$ for all $r$, $\abs{Y_r}=0$ for all $r\not\in\{1, j\}$, and $\abs{Y_1}=\abs{Y_j}=\frac{\kappa-\abs{Q}}{2}$. This means that $I_1=\{1, j\}$, $I_2=\emptyset$,  $I_3= [w]\backslash I_1$, 
  and $X\subseteq S_1\cup T\cup S_j$.
  
  \item [Claim 2.] $C_2\cup D(2, j)\cup C_j$ and $C_{j+1}\cup D(j+1, 1)\cup C_1$ are in different sides of $X$, and $Q=T$.

      By Lemma~\ref{lem1}, the subgraphs induced by 
      $C_2\cup D(2, j)\cup C_j$ and $C_{j+1}\cup D(j+1, 1)\cup C_1$
      are connected, and therefore each is contained in a side 
      of $X$. Suppose they are contained in the \emph{same} side, say $K$.
      Any vertices in $T\backslash X$ must be in $K$ as well, so
      $L\subseteq S_1\cup S_j$ and w.l.o.g.~we assume $L\cap S_1\neq\emptyset$.
      Applying Lemma~\ref{lem2} (intersection rule) to $X$ and $C(1,2)$ we conclude that
      \[
      (L\cap C(1, 2))\cup (S_1\cap X)\cup(X\cap C(1, 2)) 
      = (S_1\cap X)\cup(X\cap C(1, 2)) = 
      Y_1\cup Q
      \]
      is a cut that disconnects $L\cap S_1$ from the rest of the graph.
      It follows that 
      $\abs{Y_1}+\abs{Q}=\frac{\kappa+\abs{Q}}{2}\geq \kappa$, 
      but this contradicts with $\abs{Q}<\kappa$.
      Therefore, we conclude that 
      $C_2\cup D(2, j)\cup C_j$ and 
      $C_{j+1}\cup D(j+1, 1)\cup C_1$ are 
      in \emph{different} sides of $X$,
      and since each of these sides 
      are adjacent to all vertices of $T$,
      that $Q=T$ as well.

  \item [Claim 3.] $(T;C_1, Y_1, C_2, \ldots, C_j, Y_j, C_{j+1},\ldots, C_w)$ forms a $(w+2)$-wheel.
        
      By Claim 2, $T\subseteq X$, so $\abs{Y_1}=\abs{Y_j}=\frac{\kappa-\abs{T}}{2}$ 
      and $C_2\cup D(2, j)\cup C_j \subseteq K$ 
      and $C_{j+1}\cup D(j+1, 1)\cup C_1 \subseteq L$ are in different sides of $X$. 
      To verify that this is a wheel we simply need to confirm that the four new
      cuts involving $Y_i,Y_j$ are in fact cuts. 
      We illustrate this for $Y_1 \cup T \cup C_3$. 
      Applying Lemma~\ref{lem2} (intersection rule) 
      to $X$ with side $K$ and $C(1, 3)$ with side $D(1, 3)$, 
      we conclude that $(K\cap S_1)\cup C_2\cup S_2$ is disconnected 
      by $Y_1\cup T\cup C_3$ from the rest of the graph. 
      The other three new cuts are confirmed similarly, hence 
      $(T;C_1, Y_1, C_2, \ldots, C_j, Y_j, C_{j+1}, \ldots, C_w)$ forms a $(w+2)$-wheel.
\end{description}

\item[I.b.] $j\in\{2,w\}$. W.l.o.g.~we assume $j=2$.
Once again we prove the following claims, culminating in Claim 6 which
puts us in case 6\cir.
\begin{description}
  \item [Claim 4.] $X\subseteq S_1\cup S_2\cup T\cup C_2$.

  Because $X$ has wheel type relation with $C(1, 2)$ and $C(2, 3)$,  we have that
  \[
  \abs{Y_1}=\sum_{r\not\in\{1, 2\}}\abs{Z_r}+\sum_{r\neq 1}\abs{Y_r}
  \text{ \ \ and \ \ }\abs{Y_2}=\sum_{r\not\in\{2, 3\}}\abs{Z_r}+\sum_{r\neq 2}\abs{Y_r}.
  \]
  Therefore, $\abs{Z_r}=0$ for all $r\neq 2$, 
  $\abs{Y_r}=0$ for all $r\not\in\{1, 2\}$, 
  and $\abs{Y_1}=\abs{Y_2}=\frac{\kappa-\abs{Q}-\abs{Z_2}}{2}$. 
  This means $I_1=\{1, 2\}$, $I_2=\emptyset$, 
  $I_3=[w]\backslash I_1$, and $X\subseteq Y_1\cup Y_2 \cup T\cup Z_2$.
  \item [Claim 5.] $Q=T$ and $Z_2=\emptyset$.

  By Lemma~\ref{lem1}, $C_3\cup D(3, w)\cup C_1$ is within a side of $X$, say $K$.
  Moreover, if $T\backslash X\neq \emptyset$, then $T\backslash X\subseteq K$. 
  Because $X$ has a wheel relation with $C(1,2)$,
  \[
  \abs{L\cap C(1, 2)}=\abs{K\cap C(1, 2)}\geq \abs{C_1}.
  \]
  Thus, $X\cap C(1, 2)= Q\cup Z_2$ 
  and
  $\abs{X\cap C(1, 2)}\leq \kappa - 2\abs{C_1}=\abs{T}$.
  Since $T\backslash X\subset K$, 
  it follows that
  \[
  \abs{L\cap C(1, 2)}=\abs{L\cap C_2}\leq \abs{C_2}
  \leq 
  \abs{C_1}+\abs{T\backslash X}=\abs{K\cap C(1, 2)},
  \]
  implying $\abs{T\backslash X}=0$ and therefore that 
  $Q=T$ and $Z_2=\emptyset$.
  \item[Claim 6.]
  $(T;C_1, Y_1, C_2, Y_2, C_3, \ldots, C_w)$ forms a $(w+2)$-wheel.

      By Claims 4 and 5 we know $\abs{Y_1}=\abs{Y_2}=\frac{\kappa-\abs{T}}{2}$. 
      There are three new wheel cuts involving $Y_1,Y_2$
      that need to be confirmed, 
      namely $X=Y_1\cup T\cup Y_2$, which 
      separates $C_3\cup D(3, w)\cup C_1$
      from $C_2$, as well as
      $C_w\cup T\cup Y_1$ and $Y_2\cup T\cup C_4$.
    The latter two are established by applying
    Lemma~\ref{lem2}, as in Claim 3. 
    We conclude that 
    $(T;C_1, Y_1, C_2, Y_2, C_3, \ldots, C_w)$ forms a $(w+2)$-wheel.
\end{description}
\end{itemize}

\item[II.]
At most one index is in $I_1$, so there must be indices in $I_2$ and $I_3$ that are adjacent in the circular order.
Without loss of generality let them be 
$1\in I_2, 2\in I_3$, 
i.e., $C(1,2)$ is small, $S_1\subseteq X$, and $S_2\cap X=\emptyset$. By Lemma~\ref{lem1} there exist paths from any vertex in $S_2$ to $C_2\backslash X$, $C_3\backslash X$, and $T\backslash X$ that are not blocked by $X$, so they are all on the same side of $X$, call it side $K$.
Refer to Figure~\ref{fig:Claim7-8} in Claims 7 and 8.

\begin{description}
    \item[Claim 7.] There exists another side $L$ of $X$, such that $L\cap C_1\neq \emptyset$.
    
        Note $C(1,3)$ separates $D(1,3)$ from the rest of the graph, and $X$ separates $K$ from the rest of the graph.
    If the claim were not true, then $C_1\subseteq X\cup K$, so 
    $G\backslash(X\cup K\cup C(1, 3)\cup D(1, 3))\subseteq G\backslash (X\cup K)\neq \emptyset$.
    We can now apply Lemma~\ref{lem2} (union rule) to $C(1,3)$ and $X$, and deduce that 
    $K\cup D(1, 3)$ is disconnected from the rest of the graph by
    \[
    (C(1, 3)\backslash K)\cup (X\backslash D(1, 3))\subseteq X\backslash S_1.
    \]
    But $\abs{X\backslash S_1} < \abs{X} = \kappa$, a contradiction. So there exists a side $L$ such that $L\cap C_1\neq \emptyset$.

\begin{figure}[h!]
\centerline{\scalebox{.30}{\includegraphics{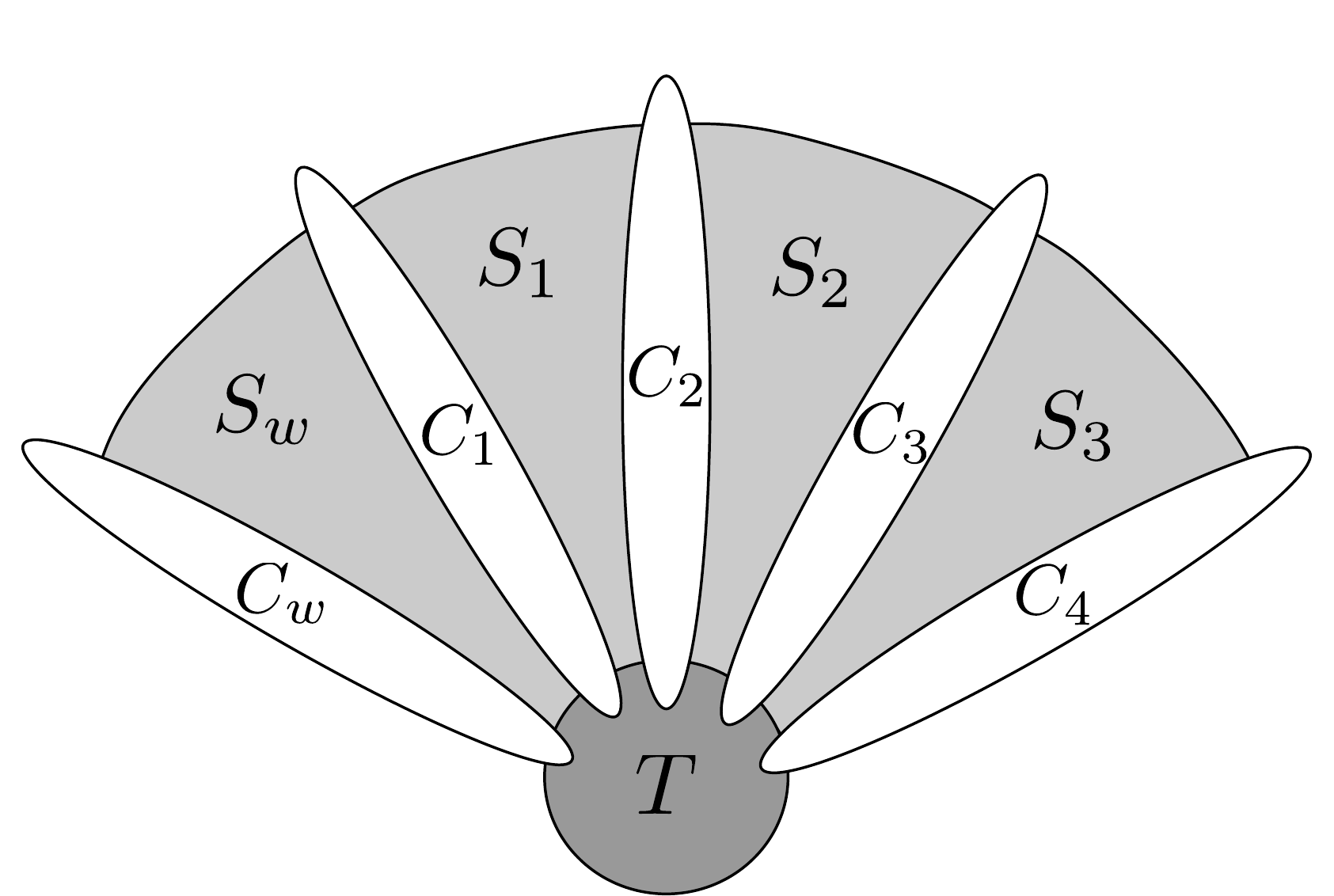}}}
\caption{In case II $S_1\subset X$ and $S_2\cap X=\emptyset$.}\label{fig:Claim7-8}
\end{figure}

    \item[Claim 8.] $X$ is a crossing matching cut of $C(1, 3)$ in side $D(1, 3)$ w.r.t.~$L\cap C_1$, or a crossing matching cut of $C(w, 2)$ in side $D(w, 2)$ w.r.t.~$K\cap C_2$.
    
        If $X\cap D(3, 1)\neq \emptyset$ 
        then Corollary~\ref{cor2} implies that
        $X$ is a crossing matching cut of $C(1, 3)$ in side $D(1, 3)$ w.r.t.~$L\cap C_1$.


        Otherwise, if $X\cap D(3, 1)=\emptyset$, then because there are paths from every vertex in $D(3, 1)$ to $S_w$ within itself, and paths from $S_w$ to $C_1\cap L$ not blocked by $X$, $D(3, 1)\subseteq L$. Now that $S_3\subseteq L$ and $S_2\subseteq K$, it follows that $C_3\cup T\subseteq X$.
        Based on the same reason of Claim 7, we have $K\cap C_2\neq \emptyset$. Now for cut $C(w, 2)$, we know that $X\cap D(w, 2)\supseteq S_1\neq \emptyset$, $X\cap D(2, w)\supset C_3\neq \emptyset$, $D(w, 2)=S_w\cup C_1\cup S_1\subseteq X\cup L$, and $K\cap C(w, 2)= K\cap C_2\neq \emptyset$, by 
        Corollary~\ref{cor2}, $X$ is a crossing matching cut of $C(w, 2)$ w.r.t.~$K\cap C_2$.
u
\end{description}

\item [III.] W.l.o.g.~we assume $I_1=\{1\}$ and therefore $I_3=\{2, 3, \ldots, w\}$.
With $Y_*,Z_*$ defined as in case I, we have
\[
X = Y_1\cup Q\cup (\cup_{i=1}^w Z_i). 
\]
Because $X$ has a 
wheel type relation
with $C(1,2)$,
$X$ has exactly two sides,
say $K$ and $L$,
and moreover,
the intersections of $X$ with the two sides
of $C(1,2)$ have equal size, i.e.,
\[
\abs{Y_1}=\sum_{r\not\in\{1, 2\}}\abs{Z_r}=\frac{\kappa -\abs{Q}-\abs{Z_1}-\abs{Z_2}}{2}.
\]
\begin{description}
    \item[Claim 9.] $Q=T$.

If $T\backslash X\neq \emptyset$, 
by Lemma~\ref{lem1} there exists paths 
from every vertex in $S_r$, $r\neq 1$,
to $T\backslash X$, 
to $C_r\backslash X$, 
and to $C_{r+1}\backslash X$, 
that are not blocked by $X$.
Thus, 
$((D(2, 1) \cup T)\backslash X)$ 
is within a side of $X$, say side $K$. 
It follows that the other side
$L$ satisfies $L \subseteq S_1$.
Applying Lemma~\ref{lem2} (intersection rule)
to $C(1,2)$ and $X$, we find that 
$L\cap S_1=L$ is disconnected 
from the rest of the graph by
\[
(L\cap C(1, 2))\cup (X\cap C(1, 2))\cup(S_1\cap X)=Y_1\cup Q\cup Z_1\cup Z_2.
\]
%
That means
\[
\abs{Y_1}+\abs{Q}+\abs{Z_1}+\abs{Z_2}\geq \kappa,
\]
which implies
\[
\kappa = \abs{X} = \abs{Y_1}+\abs{Q}+\sum_{i=1}^w \abs{Z_i} \geq \abs{Y_1} + \abs{Q} + \abs{Z_1} + \abs{Z_2} \geq \kappa,
\]

and therefore $Z_r=\emptyset$ for all $r\not\in\{1, 2\}$, 
so $X\subseteq(C_1\cup T\cup C_2\cup S_1)$, contradicting the fact that 
$X,C(1,2)$ have wheel type.  Thus $Q=T$.

\item[Claim 10] There exists $j\not\in \{1, 2\}$ such that $Z_j=C_j$. 

If there does not exist such a $j$, 
then it follows that $C_r\backslash X$ is nonempty for every $r\not\in\{1, 2\}$. 
Thus, there are paths from any vertex in 
$S_r$ to $C_r\backslash X$ and to $C_{r+1}\backslash X$
that are not blocked by $X$, 
so $D(2, 1)\backslash X$ is still contained in a side of $X$,
say $K$. 
If $C_1\backslash X$ or $C_2\backslash X$ is nonempty, then they are also in $K$ as well, since there exist paths from 
$C_1\backslash X$ to $S_w$, and from $C_2\backslash X$ to $S_2$. 
Then we have $L\subseteq S_1$. 
Based on exactly the same reasoning in Claim 9 we obtain 
the contradiction that 
$X\subseteq (C_1\cup T\cup C_2 \cup S_1)$. 
This proves the claim.

\item[Claim 11] $(T; C_1, Y_1, C_2, \ldots, C_w)$ forms a wheel.

Since $Q=T$, $C_j=Z_j$, and
\[\sum_{r\not\in\{1, 2\}}\abs{Z_r} = \frac{\kappa - \abs{Q}-\abs{Z_1}-\abs{Z_2}}{2}\leq \frac{\kappa-\abs{Q}}{2}=\frac{\kappa-\abs{T}}2 = \abs{C_j}=\abs{Z_j},\]
it follows that $\abs{Z_i}=0$ for $i\neq j$.
By Lemma~\ref{lem2} (intersection rule) 
it is easy to verify that 
$(T; C_1, Y_1, C_2, \ldots, C_w)$ forms a $(w+1)$-wheel,
as is done in Claims 3 and 6, and this puts us in
case 6\cir.

\end{description}

\item[IV.]
Recall that $I_1\cup I_2\cup I_3=\{1, 2, \ldots, w\}$. In this case, except for at most one $i^*\in I_1$, all other $i$ are in $I_2$ and therefore $S_i\subseteq X$, then $\sum_{i\neq i^*}S_i\leq \abs{X}=\kappa$, so this is a small wheel
and we are in case 5\cir.

\item[V.]
Again recall that $I_1\cup I_2\cup I_3=\{1, 2, \ldots, w\}$. So in this case $I_3=\{1, 2, \ldots, w\}$. Therefore, $X\subseteq T\cup (\cup_{i=1}^w C_i)$. Note that the graph induced by 
$(\cup_{i=1}^w S_i)\cup \{t\}$ is connected for any $t\in T$, and that every 
vertex in some $C_i$ is adjacent to $S_{i-1}$ and $S_i$.
This implies that $T\subseteq X$, and that there exists indices 
$i\neq j$ such that $C_i\cup C_j\subseteq X$; 
otherwise $X$ is not even a cut. 
We have deduced that $X=C(i,j)$ for some $i,j$, putting us in case $1$\cir.
\end{itemize}
\end{proof}

\ignore{
\begin{remark}
This theorem completely solves all problems about wheels. Given a wheel that is not small, either it is a subwheel of another wheel, or all other cuts are independent from it or crossing matching cut of some $C_i\cup T\cup C_{i+1}$ or $C_{i}\cup T\cup C_{i+2}$. We may split the graph into $S_i$ and search within $S_i\cup C_i\cup T\cup C_{i+1}$ all the cuts that are independent from the wheel, and all the crossing matching cuts.

When a wheel is a small wheel, all the $C_i\cup T\cup C_{i+2}$ are (\rmnum{2}, $2\kappa$) small. At this time the pairs that can be disconnected by the cuts of the wheel are disconnected by the small cuts found, so we do not need to maintain small wheels.

Given a graph $G$, we now can define maximal wheels. That is, there does not exists other cuts which makes $6$\cir happen. In other words, if $6$\cir happens, we may find a larger wheel. When we have a maximal wheel, then all the other cuts may be a laminar cut within a sector, or a small cut, or just a crossing matching cut.
\end{remark}
}

\subsection{Matching Cuts and Crossing Matching Cuts}\label{subsect:matching}

Define $N(P)$ to be the neighborhood of $P\subset V$
and $N_A(P) \bydef N(P)\cap A$.

\begin{theorem}
  \label{lem4}
  Let $U$ be an arbitrary $\kappa$-cut and $A$ a side of $U$.
  \begin{itemize}
    \item [$1$\cir] 
    If there exists a matching $\kappa$-cut of $U$ in side $A$ w.r.t.~$P$, then it is $W = (U\backslash P)\cup N_A(P)$,
    and $\abs{P}=\abs{N_A(P)} < \abs{A}$.  In particular,
    $\Match{U;A}(P) = N_A(P)$.
    
    \item [$2$\cir ] When there is such a matching cut $W$, 
    $G$ contains a matching between $P$ and $\Match{U;A}(P)=N_A(P)$.
    
    \item [$3$\cir] Suppose 
    $\Match{U;A}(P)$ and $\Match{U;A}(Q)$ exist. 
    If $P\cap Q\neq \emptyset$, then 
    $\Match{U;A}(P\cap Q)$ exists, and
        \[\ \Match{U;A}(P\cap Q)=\Match{U;A}(P)\cap \Match{U;A}(Q).\]
    If $\abs{A}>\abs{P\cup Q}$, then $\Match{U;A}(P\cup Q)$ 
    exists, and
        \[\Match{U;A}(P\cup Q)=\Match{U;A}(P)\cup \Match{U;A}(Q).\]
  \end{itemize}
\end{theorem}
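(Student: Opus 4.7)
For $1$\cir the matching-cut condition (iii) forces $P\cap W=\emptyset$, and combined with $U\setminus P\subseteq W$ from (i) this pins down $W\cap U=U\setminus P$, whence $|W\cap A|=|P|$. Any $v\in N_A(P)\setminus W$ would sit in $A\setminus W$ with a direct edge into $P$, so $N_A(P)\subseteq W\cap A$. Conversely, $W^{\star}=(U\setminus P)\cup N_A(P)$ is itself a cut separating $V\setminus(U\cup A)$ from the non-empty set $A\setminus N_A(P)\supseteq A\setminus W$: any escape path must leave $U$ via $P$, and its next step into $A$ lands in $N_A(P)\subseteq W^{\star}$. Hence $|W^{\star}|\ge\kappa$ forces $|N_A(P)|\ge|P|$, and the two bounds collapse to give $W\cap A=N_A(P)$ with $|P|=|N_A(P)|<|A|$. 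Statement $2$\cir is Hall's theorem on the bipartite graph between $P$ and $N_A(P)$: the same construction applied to any $S\subseteq P$ produces a cut $(U\setminus S)\cup N_A(S)$ (proper because $A\setminus N_A(S)\supseteq A\setminus N_A(P)\neq\emptyset$), yielding $|N_A(S)|\ge|S|$.

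For $3$\cir, set $W_P=(U\setminus P)\cup N_A(P)$ and $W_Q=(U\setminus Q)\cup N_A(Q)$, which separate $A\setminus N_A(P)$ and $A\setminus N_A(Q)$ respectively from the rest of $G$. Lemma~\ref{lem2}'s union rule (its hypothesis reduces to the non-emptiness of $(V\setminus(U\cup A))\cup(P\cap Q)$) produces the cut $(U\setminus(P\cap Q))\cup(N_A(P)\cap N_A(Q))$, whence $|N_A(P)\cap N_A(Q)|\ge|P\cap Q|$; combined with $|N_A(P)|+|N_A(Q)|=|P|+|Q|$ this yields $|N_A(P)\cup N_A(Q)|\le|P\cup Q|$. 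When $A\setminus(N_A(P)\cup N_A(Q))\neq\emptyset$---automatic under the union-case hypothesis $|A|>|P\cup Q|$ given the $\le$ bound just derived---the $1$\cir-style argument applied to $P\cup Q$ makes $(U\setminus(P\cup Q))\cup(N_A(P)\cup N_A(Q))$ a cut, giving $|N_A(P)\cup N_A(Q)|\ge|P\cup Q|$ and collapsing both estimates to equality. The trivial inclusion $N_A(P\cap Q)\subseteq N_A(P)\cap N_A(Q)$ together with the $1$\cir-style inequality $|N_A(P\cap Q)|\ge|P\cap Q|$ (valid because $N_A(P\cap Q)\subseteq N_A(P)\subsetneq A$) then forces $N_A(P\cap Q)=N_A(P)\cap N_A(Q)$, so $(U\setminus(P\cap Q))\cup N_A(P\cap Q)$ has size exactly $\kappa$ and witnesses $\Match{U;A}(P\cap Q)$. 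The union identity $N_A(P\cup Q)=N_A(P)\cup N_A(Q)$ is definitional, and the size equality keeps the associated cut proper under $|A|>|P\cup Q|$.

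The hard part is the intersection claim in the degenerate subcase $N_A(P)\cup N_A(Q)=A$, where the auxiliary cut for $P\cup Q$ is not proper. Here I would invoke Theorem~\ref{thm1} on the pair $(W_P,W_Q)$: the side-intersection computation gives $(A\setminus N_A(P))\cap(A\setminus N_A(Q))=\emptyset$ and both cross-intersections of the form (near side of one)$\cap$(far side of the other) are empty, so the pair falls into the small type of Theorem~\ref{thm1}. Combining the small-type structural constraints with the minimality of $\kappa$---which rules out the existence of any Hall-deficient $S\subseteq P\cup Q$ with $N_A(S)\subsetneq A$, else $(U\setminus S)\cup N_A(S)$ would be a cut of size below $\kappa$---I expect to force $|A|=|P\cup Q|$, from which $|N_A(P)\cap N_A(Q)|=|P|+|Q|-|A|=|P\cap Q|$ and the rest of the argument proceeds as in the generic case.
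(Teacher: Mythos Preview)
Your arguments for $1$\cir\ and $2$\cir\ are correct and essentially coincide with the paper's.

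For $3$\cir\ you take a genuinely different route from the paper. The paper builds a bipartite flow network on $P\cup Q$ versus $N_A(P\cup Q)$, invokes K\H{o}nig--Egerv\'ary to pin the minimum vertex cover at size $|P\cup Q|$, and then uses the Picard--Queyrenne closure of min $s$-$t$ cuts under union and intersection to produce the two desired matching cuts simultaneously. You instead apply Lemma~\ref{lem2} (the union rule) directly to the pair $W_P,W_Q$, obtaining the cut $(U\setminus(P\cap Q))\cup(N_A(P)\cap N_A(Q))$ and the inequality $|N_A(P)\cap N_A(Q)|\ge|P\cap Q|$, and then close up via a second cut for $P\cup Q$. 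This is more elementary and is fully correct in the generic regime $N_A(P)\cup N_A(Q)\subsetneq A$, which in particular covers the entire union clause since $|A|>|P\cup Q|\ge|N_A(P)\cup N_A(Q)|$.

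The gap is your treatment of the degenerate subcase $N_A(P)\cup N_A(Q)=A$ for the intersection clause. Invoking Theorem~\ref{thm1} on $(W_P,W_Q)$ does land you in the small type, but the structural content of that case is only that the ``near'' region of one cut lies inside the other---i.e.\ $A\setminus N_A(P)\subseteq N_A(Q)$ or symmetrically---which is precisely your degenerate hypothesis and yields nothing new. Your expectation that this, together with the Hall-type bound for sets $S$ with $N_A(S)\subsetneq A$, forces $|A|=|P\cup Q|$ is unfounded: take $\kappa=6$, $A=\{a_1,a_2,a_3,a_4\}$ a clique, $P=\{u_1,u_2,u_3\}$, $Q=\{u_3,u_4,u_5\}$, with $A$--$U$ edges $a_1\!\sim\!u_1,u_2,u_6$; $a_2\!\sim\!u_3,u_4,u_5,u_6$; $a_3\!\sim\!u_1,u_4,u_6$; $a_4\!\sim\!u_4,u_5,u_6$, and $B$ a large clique fully joined to $U$. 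One checks every $\emptyset\ne S\subseteq A$ has $|N(S)\setminus S|\ge 6$, so $G$ is $6$-connected; $W_P$ and $W_Q$ are genuine matching $6$-cuts; yet $|P\cup Q|=5>4=|A|$. Here $N_A(P\cap Q)=N_A(\{u_3\})=\{a_2\}$ while $N_A(P)\cap N_A(Q)=\{a_2,a_3\}$, so the equality $\Match{U;A}(P\cap Q)=\Match{U;A}(P)\cap\Match{U;A}(Q)$ you are trying to prove actually fails in this configuration. Your Hall observation cannot help, because it says nothing about sets $S$ with $N_A(S)=A$, and those are exactly the obstructions. (The paper's flow argument hinges on the claim that the minimum vertex cover of $H$ has size $|P\cup Q|$, which in this example is also false---the cover $A$ has size $4$---so this corner case is delicate for both approaches; but your sketched route via Theorem~\ref{thm1} certainly does not close it.)
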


\begin{proof}
Part 1\cir.
By definition, 
$W$ is a matching cut in side $A$ w.r.t.~$P$
if 
(i) $W\cap U = U\backslash P$, 
(ii) $W\subseteq U\cup A$,
and 
(iii) $W$ separates every vertex in $A\backslash W$ from $P$.
It must be that $N_A(P) \subseteq W$, for otherwise $W$ would 
not satisfy (iii).  It also follows from (iii) 
that $N_A(P)\backslash A \neq \emptyset$.
Since $W' = (U\backslash P)\cup N_A(P)$ is a cut
(separating $A\backslash N_A(P)$ 
from $P\cup V\backslash (U\cup A)$),
it follows that
$W'\subseteq W$,
but since 
$W$ is a (minimum) $\kappa$-cut, 
then $W=W'$ 
and hence $\abs{P}=\abs{N_A(P)}$.
Thus, by definition 
$\Match{U;A}(P) = N_A(P) = W\backslash U$
is just the neighborhood function $N_A(P)$
whenever such a matching cut $W$ exists.

\medskip

Part 2\cir.
Define $H$ to be the bipartite subgraph of $G$
between $P$ and $N_A(P)=\Match{U;A}(P)$.
By Hall's theorem, if $H$ does not contain a matching
then there exists a strict subset $P'\subset P$
such that $\abs{P'}>\abs{N_A(P')}$, 
but if this were the case,
$(U\backslash P')\cup N_A(P')$ would be a cut with cardinality
strictly smaller than $\kappa$, a contradiction.

\medskip 

\begin{figure}[h!]
    \centering
    \begin{tabular}{c@{\hspace*{2cm}}c}
    \scalebox{.5}{\includegraphics{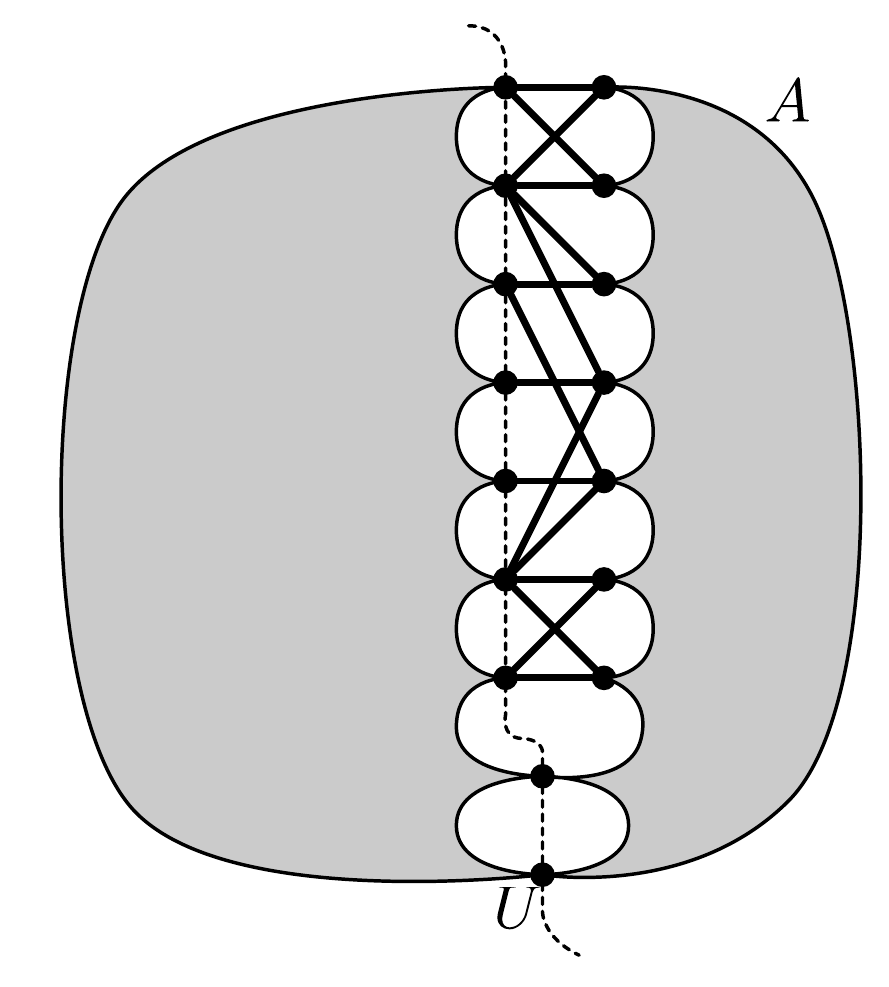}}
    &
    \scalebox{.5}{\includegraphics{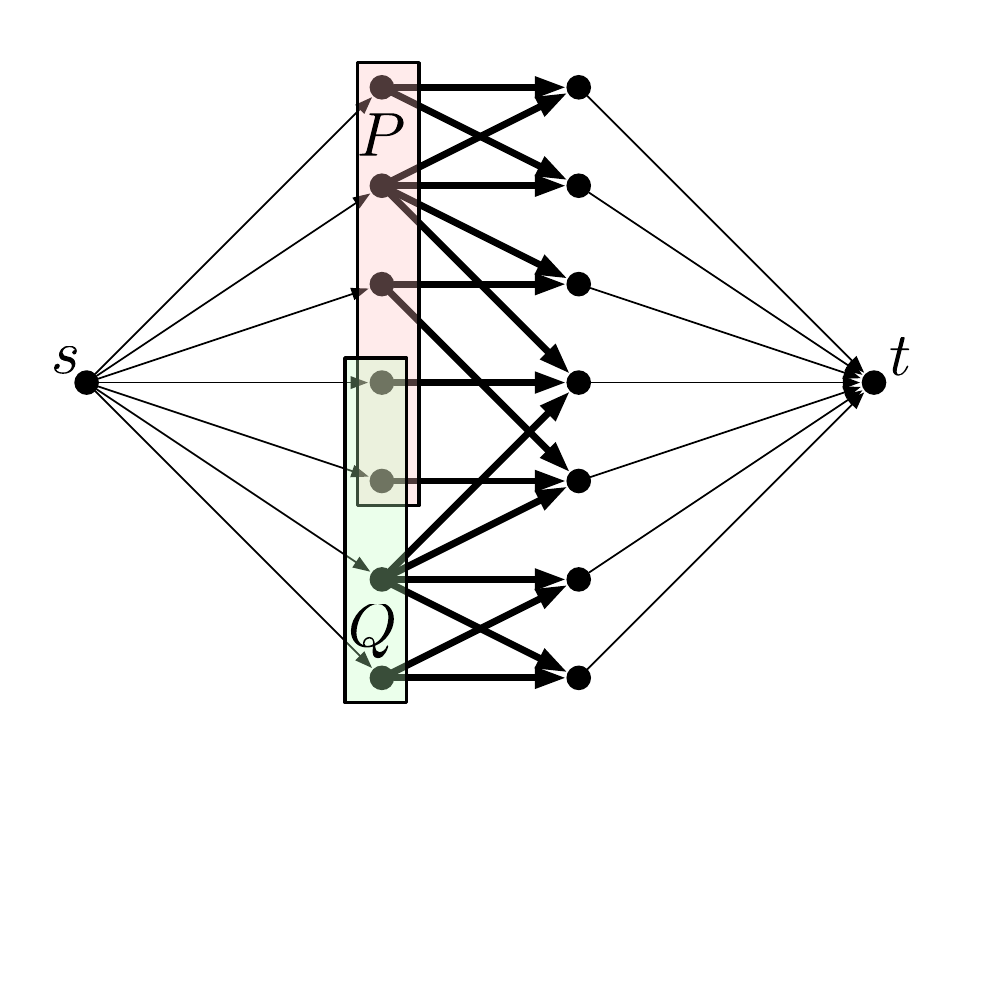}}
    \end{tabular}
    \caption{Left: a cut $U$ with matching cuts in side $A$.  Right: the derived flow network for $P,Q\subset U$.}
    \label{fig:Matching-PQ}
\end{figure}
Part 3\cir.
Redefine $H$ to be the bipartite subgraph of $G$
between $P\cup Q$ and $N_A(P\cup Q)$.
We construct a directed flow network
$\vec{H}$ on $\{s,t\}\cup V(H)$ as follows; see Figure~\ref{fig:Matching-PQ}.
All edges from $H$ appear in $\vec{H}$, 
oriented from $P\cup Q$ to $N_A(P\cup Q)$, 
each having infinite capacity.
The edge set also includes edges from $s$ to $P\cup Q$
and $N_A(P\cup Q)$ to $t$, each with unit capacity.
Clearly integer flows in $\vec{H}$ 
correspond to matchings in $H$.
By the K\H{o}nig-Egerv\'{a}ry theorem, 
the size of the minimum vertex cover in $H$ 
is equal to the size of the maximum matching in $H$.
Following the same argument in 2\cir, 
they must both be of size $\abs{P\cup Q}$ for otherwise
$U$ would not be a minimum vertex cut.
In fact, minimum size
vertex covers of $H$ are in 1-1 correspondence 
with minimum capacity $s$-$t$ cuts in $\vec{H}$.
The correspondence is as follows.
If $(S,T)$ is
a minimum capacity
$s$-$t$ cut
then there can be no (infinite capacity) edge 
from $S\cap (P\cup Q)$
to $T\cap N_A(P\cup Q)$, so
\begin{align*}
C &=((P\cup Q)\cap T) \cup (N_A(P\cup Q)\cap S)
\intertext{is a vertex cover in $H$ and}
\abs{C} &= \operatorname{cap}(S,T).
\end{align*}

Picard and Queyrenne~\cite{PicardQ80} observed
that minimum $s$-$t$ cuts are closed under union and intersection,
in the sense that if $(S,T),(S',T')$ are min $s$-$t$ cuts,
then so are $(S\cup S',T\cap T')$ and $(S\cap S',T\cup T')$.
By assumption $(P\backslash Q)\cup N_A(Q)$ and $(Q\backslash P)\cup N_A(P)$ 
are minimum vertex covers\footnote{This follows from the fact that
$(U\backslash Q)\cup N_A(Q)$ 
and $(U\backslash P)\cup N_A(P)$ are
assumed to be matching cuts of $U$ in $A$ w.r.t.~$Q$ and $P$,
respectively,
with $\Match{U;A}(P)=N_A(P)$
and $\Match{U;A}(Q)=N_A(Q)$.} 
of $H$, with cardinality $\abs{P\cup Q}$.
Translated to the flow network $\vec{H}$, 
this implies that 

\begin{tabular}{@{\hspace*{2cm}}c@{\hspace*{2cm}}c}
    \scalebox{.5}{\includegraphics{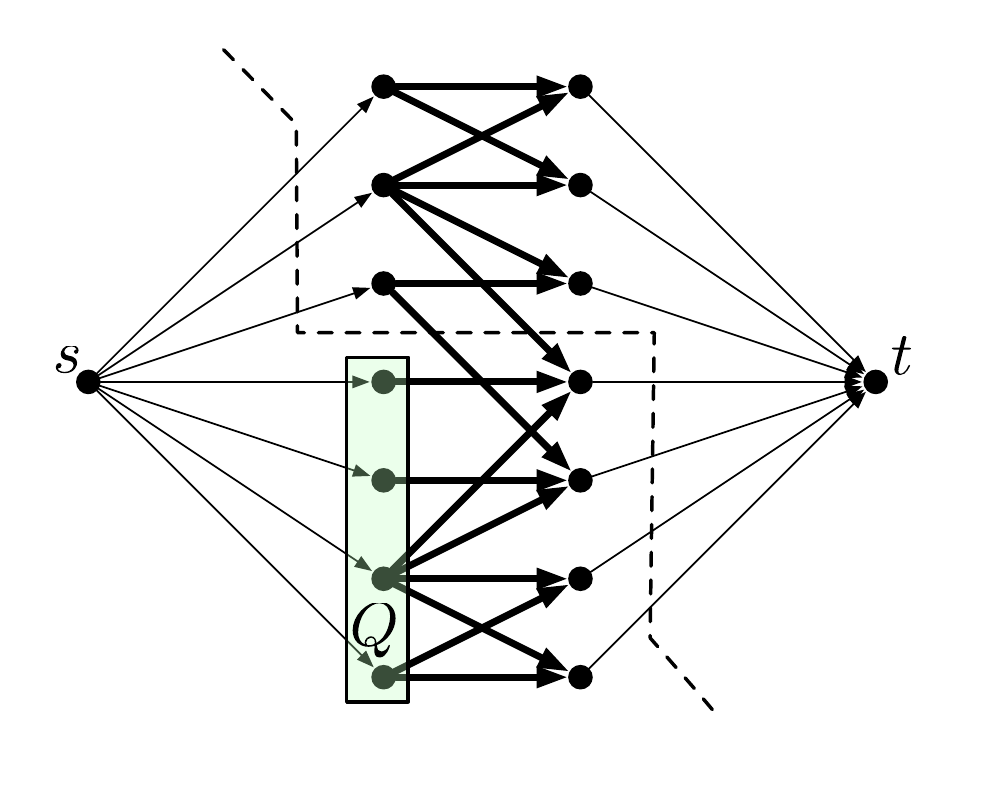}}
    &
    \scalebox{.5}{\includegraphics{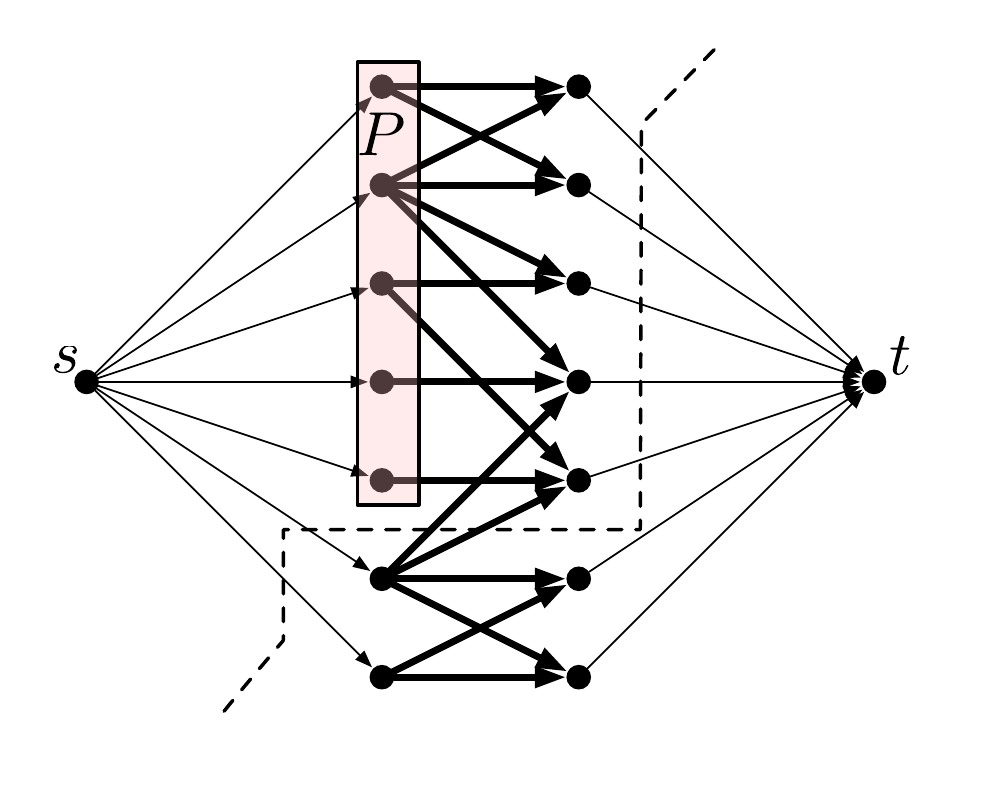}}
\end{tabular}
    
\begin{align*}
&(\{s\}\cup Q \cup N_A(Q), \; (P\backslash Q) \cup N_A(P\cup Q)\backslash N_A(Q)\cup \{t\})\\
\mbox{and } &(\{s\}\cup P \cup N_A(P), \; (Q\backslash P) 
\cup N_A(P\cup Q)\backslash N_A(P) \cup \{t\})
\intertext{are minimum $s$-$t$ cuts, which implies that their union and intersection are also minimum $s$-$t$ cuts:
}
&(\{s\}\cup P\cup Q \cup N_A(P\cup Q), \; \{t\})\\
\mbox{and } &\bigg(\{s\} \cup (P\cap Q)\cup (N_A(P)\cap N_A(Q)), \; \\
&\Big((P\cup Q)\backslash (P\cap Q)\Big) \cup 
\Big(N_A(P\cup Q)\backslash (N_A(P)\cap N_A(Q))\Big)\cup \{t\}\bigg).
\end{align*}

\begin{tabular}{@{\hspace*{2cm}}c@{\hspace*{2cm}}c}
    \scalebox{.5}{\includegraphics{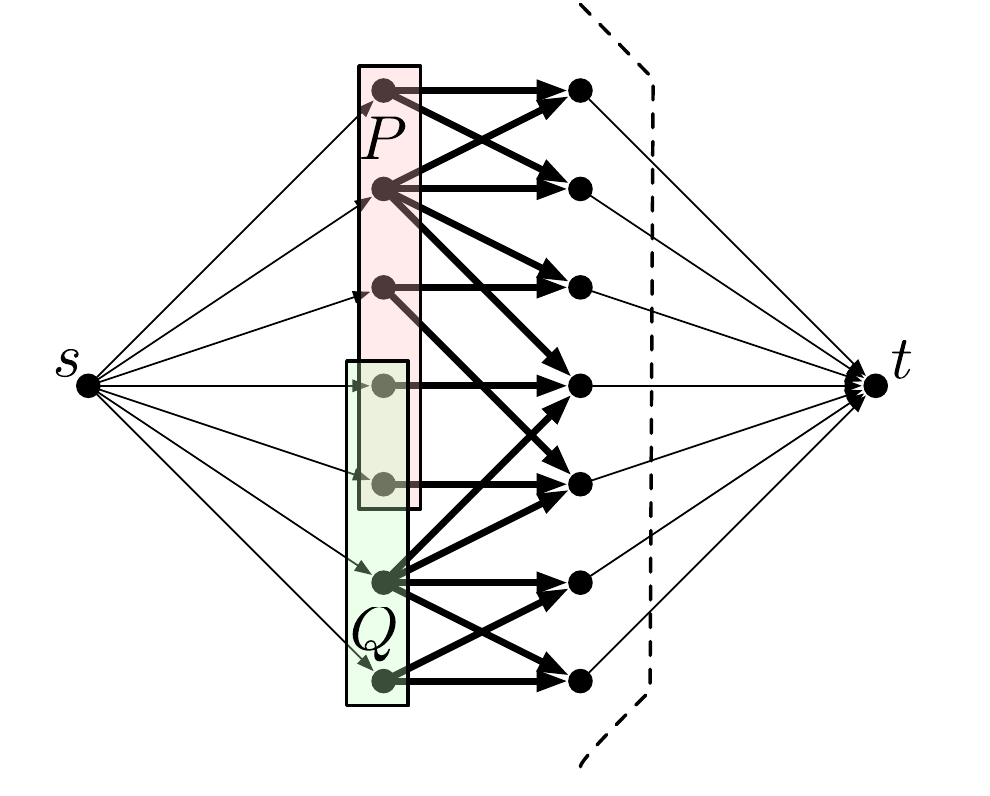}}
    &
    \scalebox{.5}{\includegraphics{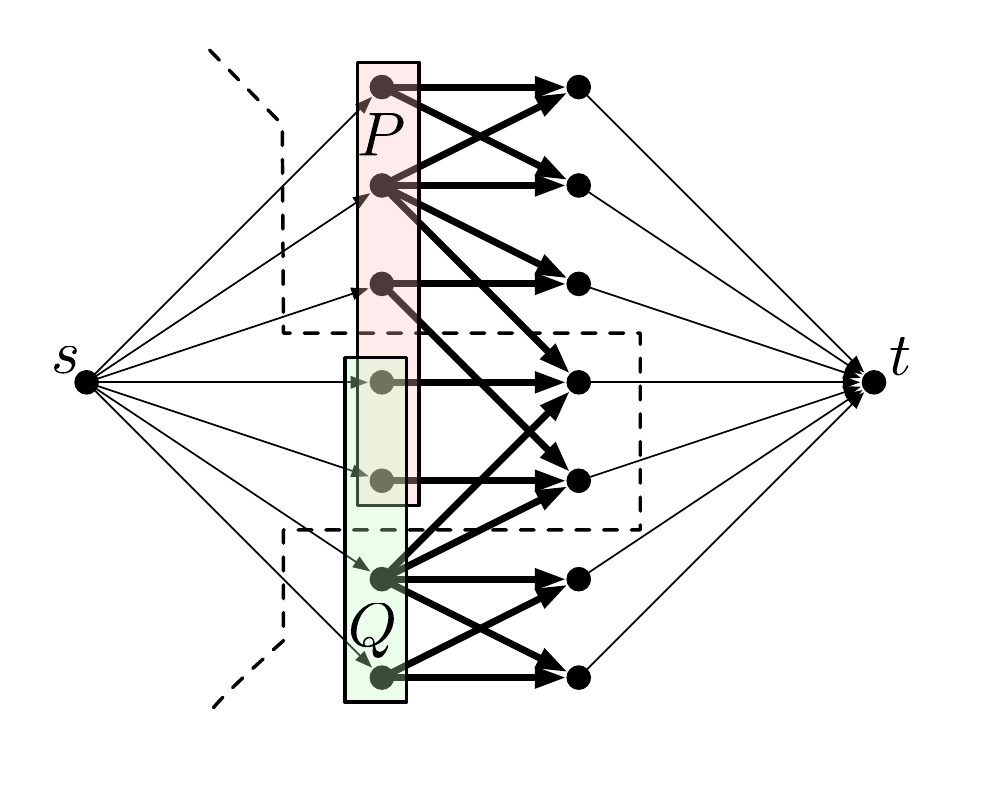}}
\end{tabular}
    
\noindent Translated back to $H$, these 
correspond to vertex covers
\begin{align*}
& N_A(P\cup Q)\\
\mbox{and }
& (P\cup Q)\backslash (P\cap Q) \cup (N_A(P)\cap N_A(Q)).
\end{align*}
Whenever $N_A(P\cup Q)$ is a \emph{strict} subset of $A$
the first corresponds to a matching $\kappa$-cut of $U$
\begin{align*}
& (U\backslash (P\cup Q))\cup N_A(P\cup Q)\\
\mbox{ with \ } 
\Match{U;A}(P\cup Q) &= N_A(P\cup Q) = \Match{U;A}(P)\cup \Match{U;A}(Q).
\intertext{Whenever $P\cap Q\neq \emptyset$ the second 
corresponds to a matching $\kappa$-cut of $U$}
(U\backslash (P\cap Q))\cup (N_A(P)\cap N_A(Q))\\
\mbox{ with \ }
\Match{U;A}(P\cap Q) 
&= N_A(P)\cap N_A(Q) = \Match{U;A}(P)\cap \Match{U;A}(Q).
\end{align*}
\end{proof}

\ignore{
We first prove $1$\cir. If $N_A(P)=A$, then $\abs{N_A(P)}=\abs{A}\geq \abs{P}$. If $A\backslash N_A(P)\neq \emptyset$, think of any path from $A\backslash N_A(P)$ to $P\cup V\backslash(U\cup A)$, it either capture a vertex in $U\backslash P$, or a vertex in $N_A(P)$. Therefore, $N_A(P)\cup (U\backslash P)$ is a cut, so by $\kappa$-connectivity of $G$ we have that,
\[\abs{N_A(P)}+\abs{U}-\abs{P}=\abs{N_A(P)}+\kappa-\abs{P}\geq \kappa,\]
so $\abs{N_A(P)}\geq \abs{P}$.
\medbreak
Next, we prove $2$\cir. On the one hand, let $W$ be a matching cut of $P$ in side $A$ of cut $U$. By definition, there exists edges directly connecting every vertex of $N_{A}(P)$ to some vertex in $P$, so $N_{A}(P)\subseteq W$. Adding that $W$ is a $\kappa$-cut and $U\backslash P\subseteq W$, we know that
\[\abs{N_A(P)}+\abs{U}-\abs{P}=\abs{N_A(P)}+\kappa-\abs{P}\leq \abs{W}=\kappa,\]
so $\abs{N_A(P)}\leq \abs{P}$. We also have $A\backslash N_A(P)\supseteq A\backslash W\neq \emptyset$, by $1$\cir, we have that $\abs{N_A(P)}\geq \abs{P}$. Combining them we have $\abs{N_A(P)}=\abs{P}$ and $A\backslash N_A(P)\neq \emptyset$.\\
On the other hand, if $\abs{N_A(P)}=\abs{P}$ and $A\backslash N_A(P)\neq \emptyset$, then $N_A(P)\cup (U\backslash P)$ is a $\kappa$-cut that disconnects $P\cup (V\backslash (U\cup A))$ from $A\backslash N_A(P)$, so it is a matching cut of $P$ in side $A$ of cut $U$.
\medbreak
As for $3$\cir, if $W$ is a matching cut of $P$ in side $A$ of cut $U$, according to $2$\cir, we know that $N_A(P)\cup (U\backslash P)\subseteq U$, and also $N_A(P)\cup (U\backslash P)$ is already a $\kappa$-cut. Therefore, $W=N_A(P)\cup (U\backslash P)$.
\medbreak
Next, we prove $4$\cir.
Think of a vertex $v\in A\backslash N_A(P)$ and another vertex $u\in P\cup (V\backslash(U\cup A))$, there exists exactly $\kappa$ vertex disjoint paths connecting them. $\abs{P}$ of them passes through exactly one vertex in $P$, and such a path must also pass through at least one vertex in $N_A(P)$. Because $\abs{N_A(P)}=\abs{P}$, they must pass exactly one vertex in $N_A(P)$. Therefore, there exists a matching between $P$ and $N_A(P)$.
\medbreak
Then, we prove $5$\cir.
We know that $X=N_A(P)\cup (U\backslash P)$ and $Y=N_A(Q)\cup (U\backslash Q)$ are the matching cuts of $P$ and $Q$, respectively. If $P\cap Q\neq \emptyset$, then $A\backslash N_A(P\cap Q)\supset A\backslash N_A(P)\neq \emptyset$, we only need to show that $\abs{N_A(P\cap Q)}=\abs{P\cap Q}$. Denote $R=P\cap Q=\{r_1, r_2, \ldots, r_a\}$, $P_1=P\backslash R=\{p_1, p_2, \ldots, p_b\}$ and $Q_1=Q\backslash R=\{q_1, q_2, \ldots, q_c\}$. According to $4$\cir, there exists matchings between $P$ and $N_A(P)$, $Q$ and $N_A(Q)$. Let them be $(p_1, p_1'), \ldots, (p_b, p'_b), (r_1, r'_1), \ldots, (r_a, r'_a)$; and $(q_1, q''_1), \ldots, (q_c, q''_c), (r_1, r''_1), \ldots, (r_a, r''_a)$. Denote $P'_1=\{p'_1, \ldots, p'_b\}$, $Q''_1=\{q''_1, \ldots, q''_c\}$, $R'=\{r'_1, r'_2, \ldots, r'_a\}$ and $R''=\{r''_1, \ldots, r''_a\}$. We know that $P'_1\cup R'\cup R''\cup Q''_1=N_A(P)\cup N_A(Q)=N_A(P\cup Q)\geq \abs{P\cup Q}$, and also $R''\subseteq N_A(R)\subset N_A(P)=P'_1\cup R'$, therefore,
\[\abs{P\cup Q}=a+b+c=\abs{N_A(P)}+\abs{Q''_1}\geq \abs{N_A(P)\cup Q''_1}=\abs{P'_1\cup R'\cup R''\cup Q''_1}\geq \abs{P\cup Q},\]
so this must be an equation. Then, we conclude that, $N_A(P)\cap Q''_1=\emptyset$. Therefore, $P'_1\cap Q''_1=R'\cap Q''_1=\emptyset$, and similarly we have $R''\cap P'_1=\emptyset$. Then, we have that $R'=R''$, so $N_A(P\cap Q)=N_A(P)\cap N_A(Q)$, and $\abs{N_A(P\cap Q)}=\abs{R'}=\abs{R''}=\abs{P\cap Q}$. According to $2$\cir, we have that $\Match{U;A}(P\cap Q)$ exists, and it is $\Match{U;A}(P)\cap \Match{U;A}(Q)$.
\smallbreak
About the union, given that $\abs{A}>\abs{P\cup Q}$,
\begin{itemize}
\item[1)]if $P\cap Q=\emptyset$, then
\[\abs{P\cup Q}\leq \abs{N_A(P\cup Q)}=\abs{N_A(P)\cup N_A(Q)}\leq \abs{N_A(P)}+\abs{N_A(Q)}=\abs{P}+\abs{Q}=\abs{P\cup Q},\]
so $\abs{N_A(P\cup Q)}=\abs{P\cap Q}$.
\item[2)]if $P\cap Q\neq \emptyset$, we reuse the notations of $P'_1$, $R'$, $R''$ and $Q''_1$, we still have that $R'=R''$, and $P'_1$, $R'$, $Q''_1$ are three disjoint sets. Then we have $\abs{N_A(P\cup Q)}=\abs{P\cup Q}$.
\end{itemize}
Then according to $2$\cir, we have that $\Match{U;A}(P\cup Q)$ exists, and 
$\Match{U;A}(P\cup Q)=\Match{U;A}(P)\cup \Match{U;A}(Q)$.
}

Fix a $\kappa$-cut $U$ and a side $A$ of $U$.
Define $\Theta = \{P \mid \Match{U;A}(P) \mbox{ exists}\}$.
According to Part 3\cir\ of Theorem~\ref{lem4}, $\Theta$
is closed under union and intersection, and
is therefore characterized by its minimal elements.
Define $\Theta^* = \{\cap_{u\in P, P\in \Theta}P \mid u\in U\}$.
It can be seen from the definition that $\cap_{u\in P, P\in \Theta}P$
corresponds to the minimum matching cut for vertex $u$.

In the most extreme case 
$\Theta$ may have $2^\kappa - 1$
elements (e.g., if the graph induced
by $U\cup N_A(U)$ is a matching), 
which may be prohibitive to store explicitly. 
From definition we know that $\abs{\Theta^*}\leq \kappa$,
so it works as a good compression for $\Theta$.
Lemmas~\ref{lem6} and \ref{lem7}
also highlights some ways in which $\Theta^*$
is a sufficient substitute for $\Theta$.


\ignore{
%
%
We congregate all $M(P)$ together to form $\Theta=\{M(P)\mid P\subseteq U, M(P)\text{ exists}\}$. The size of $\Theta$ can be as large as $2^\kappa$, but we will show that, to retrieve $\Theta$ we only need at most sets in $\Theta$. Consider the following algorithm:
\begin{algorithm}[h]
  \label{algo1}
  \caption{Compression of $\Theta$}
  \LinesNumbered
  \SetKwData{Unmarked}{unmarked}
  \SetKwData{Mark}{mark}
  initialize $\Theta^*\leftarrow \Theta$\;
  \For{$T\in\Theta$}{
    \If{there exists $S_1, S_2, \ldots, S_l\in \Theta^*$ such that $T=\cup_{i=1}^lS_i$}{
    $\Theta^*\leftarrow \Theta^*\backslash \{T\}$\;
    }
  }
  \Return{$\Theta^*$}
\end{algorithm}

\begin{lemma}
\label{thm5}
Let $\Theta^*$ be the set output by algorithm~\ref{algo1}. Then,
\[\Theta = \{P\mid \text{union of one or more sets in\ }\Theta^*\},\]
and 
\[\abs{\Theta^*}\leq \kappa.\]
In a word, we may represent $\Theta$ with $\Theta^*$, or compress $\Theta$ into $\Theta^*$.
\end{lemma}

\begin{proof}
First, whenever $T$ is picked out of $\Theta^*$, there are several of sets $S_i\in \Theta^*$ whose union is $T$. Therefore, if another set $T'\in \Theta$ is the union of several sets in $\Theta^*$ including $T$, then after picking $T$ out of $\Theta^*$, $T'$ is still the union of several sets in $\Theta^*$. Therefore, as long as a set $T\in \Theta$ is the union of one or more sets in $\Theta$, in the beginning, then ultimately it is picked out of $\Theta^*$. Moreover, if $T\in \Theta$, then $T$ is always union of one or more sets in of $\Theta^*$. On the other hand, according to $5$\cir\ in Theorem~\ref{lem4}, union of one or more sets in $\Theta^*$ are all in $\Theta$.

For any $T\in \Theta^*$, define $\Delta_T=\{S\in \Theta^*\mid S\cap T \subsetneq T\}$. For every $S\in \Delta_T$, $S\cap T\in \Theta$, so there are a series of sets in $\Omega^*$ whose union is $S\cap T$. Consider the set $R_T = \cup_{S\in \Delta_T}(S\cap T)$, then $R_T$ is the union of a series of sets in $\Omega^*$. If $R_T=T$, then $T$ should have been picked out of $\Omega^*$. Therefore, $R_T\subsetneq T$. Now we define $C(T)=T\backslash R_T$, so $C(T)\neq \emptyset$.

Next, we show that for different $P, Q\in \Theta^*$, $C(P)\cap C(Q)=\emptyset$. If $P\cap Q \subsetneq Q$, then by definition
\begin{align*}
C(P)\cap C(Q) &= (C(P)\cap Q)\cap C(Q)\\
&\subseteq (P\cap Q) \cap C(Q)\\
&\subseteq R_Q\cap C(Q)=\emptyset.
\end{align*}
The first equality is because $C(Q)\subseteq Q$.

If $P\cap Q=Q$, then $P\cap Q=Q\subsetneq P$, using the same technique we have $C(P)\cap C(Q)=\emptyset$.

Then $C(P)$ are disjoint over all $P\in \Theta^*$, so
\[\abs{\Theta^*}\leq \sum_{A\in \Theta^*}\abs{C(A)}\leq \abs{U}=\kappa.\]
\end{proof}

Compressing matching cuts is for crossing matching cuts. Crossing matching cuts $W$ of $P$ in side $A$ naturally maps to a matching cut $(U\backslash P)\cap A$. And the following lemma shows that, we only need to consider crossing matching cuts corresponding to $\Theta^*$.

%
%
}

\begin{lemma}
\label{lem6}
Let $U$ be a $\kappa$-cut
and let $\Theta$ be defined w.r.t.~the matching cuts of $U$
in a side $A$.
Suppose that $P\in \Theta^*$ and 
$P\subseteq Q\in\Theta$,
and that $W$ is a crossing matching 
cut of $U$ in side $A$
w.r.t.~$Q$.  Then 
$(W\backslash \Match{U;A}(Q))\cup (Q\backslash P)\cup \Match{U;A}(P)$ is also a crossing matching cut of $U$ in $A$ w.r.t.~$P$.
Moreover, if $Q=P_1\cup P_2\cup \cdots \cup P_\ell$ where each 
$P_i\in \Theta^*$, then any pair disconnected by $W$ is also disconnected by 
some $(W\backslash \Match{U;A}(Q))\cup (Q\backslash P_i)\cup \Match{U;A}(P_i)$.
\end{lemma}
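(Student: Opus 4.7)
The plan is to verify that $W' := (W\setminus \Match{U;A}(Q))\cup(Q\setminus P)\cup \Match{U;A}(P)$ satisfies the two defining conditions of a crossing matching cut of $U$ in side $A$ w.r.t.~$P$ and has size exactly $\kappa$. First, Theorem~\ref{lem4}(1\cir) applied to the matching cut $(U\setminus Q)\cup(W\cap A)$ yields $W\cap A = \Match{U;A}(Q) = N_A(Q)$. Combined with $W\cap B\neq\emptyset$, Theorem~\ref{thm1} classifies $U,W$: the laminar case fails since $W$ meets both sides of $U$, the small case is excluded by $\kappa<n/4$, and the wheel case would force $(U\setminus U_2)\cup W_1$ to separate only the single sector $A_1\cap B_1$ rather than all of $A_1\setminus W_1$, contradicting the matching cut property. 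Hence $U,W$ are in the crossing matching type, so $W$ has exactly two sides $B_1,B_2$ with $A_1\cap B_2=\emptyset$, and $Q$ must equal the $U_2=U\cap B_2$ of the classification (by uniqueness of the matching in $\Theta$, since $N_A(Q)=N_A(U_2)=W_1$ with matching sizes equal). In particular, $W\cap Q = T\cap U_2 = \emptyset$.

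With $W\cap Q=\emptyset$ in hand, write $X=W\cap U$ and $Y=W\cap B$, so $W = X\cup N_A(Q)\cup Y$ with $|X|+|Y|=\kappa-|Q|$. Then $W' = X\cup Y\cup(Q\setminus P)\cup N_A(P)$, and the four constituent sets are pairwise disjoint (the only possibly nontrivial overlap $X\cap(Q\setminus P)$ is contained in $W\cap Q=\emptyset$), giving $|W'| = (\kappa-|Q|)+(|Q|-|P|)+|P| = \kappa$. Condition (i) of the crossing matching definition holds because $W'\cap B = Y\neq\emptyset$, and condition (ii) holds because $W'\cap A = N_A(P)$, so $(U\setminus P)\cup(W'\cap A) = (U\setminus P)\cup N_A(P)$ is a matching cut of $U$ in $A$ w.r.t.~$P$ precisely because $P\in\Theta^*\subseteq\Theta$.

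It remains to show $W'$ actually disconnects $G$, which together with $|W'|=\kappa$ makes it a minimum cut. For this I would invoke Lemma~\ref{lem2} (union rule) with $W$ disconnecting its side $B_1\supseteq A\setminus N_A(Q)$ from the rest, and $M_P:=(U\setminus P)\cup N_A(P)$ disconnecting $A\setminus N_A(P)$ from the rest. Writing out $(M_P\setminus B_1)\cup(W\setminus(A\setminus N_A(P)))$ and using $N_A(P)\subseteq N_A(Q)$, $X=T$ disjoint from $Q$, and $M_P\cap B_1 = U_1$, this set simplifies to exactly $W'$, and the union rule concludes that $B_1\cup(A\setminus N_A(P))$ is disconnected from the rest by $W'$.

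For the moreover clause, let $W'_i$ denote the construction with $P$ replaced by $P_i$. Comparing sides of $W'_i$ with those of $W$: the side analogous to $B_1$ becomes $K'_i = B_1\cup(N_A(Q)\setminus N_A(P_i))$, absorbing the vertices freed from $W$, while the other side becomes $L'_i = B_2\setminus(Q\setminus P_i)$, losing the vertices moved into $W'_i$. Now fix any pair $(u,v)$ separated by $W$, say $u\in B_1$ and $v\in B_2$. Automatically $u\in B_1\subseteq K'_i$ for every $i$. If $v\notin Q$ then $v\notin Q\setminus P_i$ for all $i$ and thus $v\in L'_i$; if $v\in Q$, then since $Q=\bigcup_i P_i$ some $P_j$ contains $v$, whence $v\notin Q\setminus P_j$ and $v\in L'_j$. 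Either way some $W'_i$ separates $u$ and $v$. The main obstacle throughout is the structural identity $W\cap Q=\emptyset$, without which the cardinality computation for $W'$ fails and the side analysis in the moreover clause breaks down; the key trick is extracting this identity from Theorem~\ref{thm1} by ruling out the competing types.
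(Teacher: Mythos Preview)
Your overall strategy—classify the pair $(U,W)$ via Theorem~\ref{thm1}, deduce $W\cap Q=\emptyset$, then count—is considerably more elaborate than the paper's one-line appeal to the intersection rule of Lemma~\ref{lem2}, and two of your classification steps do not hold up.

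First, the sentence ``the small case is excluded by $\kappa<n/4$'' is simply false. The small type in Theorem~\ref{thm1} says that one of $U,W$ is $(\mathrm{I},\kappa-1)$-small with its small sides contained in the other cut; the global hypothesis $\kappa<n/4$ places no obstruction on this. Concretely, nothing in the hypotheses of Lemma~\ref{lem6} forbids $U$ from having $|B|\le\kappa-1$ with $B\subseteq W$; in that situation $U$ and $W$ are in the small type, your Theorem~\ref{thm1} classification collapses, and you never obtain the sides $B_1,B_2$ on which the rest of your argument depends.

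Second, your exclusion of the wheel type is circular. You write that in the wheel case $(U\setminus U_2)\cup W_1$ would separate only the sector $A_1\cap B_1$, contradicting the matching-cut property of $(U\setminus Q)\cup(W\cap A)$. But those two sets coincide only if $Q=U_2$, which is exactly what you are trying to establish. A priori $Q$ is any element of $\Theta$ with $N_A(Q)=W_1$; there is no reason it must equal $U\cap B_2$ before you have pinned down the type. So the wheel case is not ruled out by this argument, and without it your disjointness claim $W\cap Q=\emptyset$ and hence your cardinality count $|W'|=\kappa$ are unsupported.

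By contrast, the paper applies the intersection rule of Lemma~\ref{lem2} directly to $W$ and the matching cut $M_P=(U\setminus P)\cup\Match{U;A}(P)$, obtaining a cut without ever needing to classify $(U,W)$ or to prove $W\cap Q=\emptyset$. Your ``moreover'' argument is essentially the same as the paper's, but it too rests on the side description $K'_i,L'_i$ that you derived from the crossing-matching classification, so it inherits the gap above. If you want to salvage your route, you would need genuine arguments excluding the small and wheel types (or handling them separately), and a rigorous proof that $Q=U\cap B_2$ in the crossing-matching case—Theorem~\ref{lem4}(3\cir) can be used for the latter, but you have not invoked it.
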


\begin{proof}
Apply Lemma~\ref{lem2} (intersection rule) to cuts
$W$ and $(U\backslash P)\cup \Match{U;A}(P)$
and we have the first statement, that
$(W\backslash \Match{U;A}(Q))\cup (Q\backslash P)\cup \Match{U;A}(P)$
is a cut.

For the second statement, suppose $W$ separates vertices 
$u$ and $v$.  If both $u,v \not\in U\cup \Match{U;A}(Q)$,
then they are separated by \emph{any} 
$(W\backslash \Match{U;A}(Q))\cup (Q\backslash P_i)\cup \Match{U;A}(P_i)$.
If one of $u,v$ is in $Q$, say $u\in Q$, then there exists at least one
$i$ for which $u\in P_i$.  Then $v\in A\backslash W$ and therefore
$u,v$ are separated by $(W\backslash \Match{U;A}(Q))\cup (Q\backslash P_i)\cup \Match{U;A}(P_i)$.
\end{proof}

\begin{lemma}
  \label{lem7}
  Let $U$ be a $\kappa$-cut with two sides $A$ and $B$,
  and let $\Theta$ be defined w.r.t.~its matching cuts
  in side $A$.
  For $P\in \Theta^*$, define $U^*(P)$ to be the 
  cut separating $P$ from $A\backslash \Match{U;A}(P)$
  minimizing $\abs{\Side{U^*(P)}(P)}$.
\begin{enumerate}
    \item[1\cir] $U^*(P)$ is either a crossing matching cut of $U$ in side $A$ w.r.t.~$P$, or else there is no such crossing matching cut and 
    $U^*(P) = (U\backslash P)\cup \Match{U;A}(P)$ is a matching cut.
    \item[2\cir] Suppose $X$ is a 
    crossing matching cut of $U$ in side $A$
    w.r.t.~$P$.  If $u,v$ are separated by 
    $X$, then they are also separated by either
    $U^*(P)$
    or $(X\backslash \Match{U;A}(P))\cup P$, which is a laminar cut of $U$.
\end{enumerate}
\end{lemma}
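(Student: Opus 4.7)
First I would establish that $U^*(P)$ is a $\kappa$-cut with a unique characterization coming from Corollary~\ref{cor:CD}: the matching cut $M := (U\setminus P)\cup \Match{U;A}(P)$ always lies in $\Cuts{P;\, A\setminus \Match{U;A}(P)}$, has size $\kappa$, and satisfies $\Side{M}(P) = P\cup B$. By $\kappa$-connectivity $|U^*(P)| = \kappa$, and by Corollary~\ref{cor:CD} $U^*(P)$ is the unique $\MinCut{P;\, A\setminus \Match{U;A}(P)}$.

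For Part $1$\cir, I would apply Theorem~\ref{thm1} to the pair $(U, U^*(P))$ and argue by cases. The wheel case forces both sides of $U^*(P)$ to intersect $A$, which splits $A\setminus \Match{U;A}(P)$ across the two sides and violates separation. The small case reduces to one of the others using the always-available $M$. The laminar case forces $U^*(P) \subseteq U \cup A$ (a laminar cut contained in $U\cup B$ leaves $A$ intact in $G\setminus U^*(P)$ and so fails to separate $P$ from $A\setminus\Match{U;A}(P)$); Theorem~\ref{lem4}($1$\cir) together with the minimality of $\Side{U^*(P)}(P)$ then pins $U^*(P) = M$. The crossing-matching case yields the first alternative of the lemma. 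Finally, whenever a crossing-matching cut $X$ w.r.t.~$P$ exists, $|\Side{X}(P)| < |P\cup B|$ forces $U^*(P)\ne M$, so by the case analysis $U^*(P)$ must itself be a crossing-matching cut.

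For Part $2$\cir, I would decompose $X = T \cup \Match{U;A}(P) \cup W_2$ with $T = X\cap U$ and $W_2 = X\cap B$, and set $Y := (X \setminus \Match{U;A}(P)) \cup P = T \cup P \cup W_2 \subseteq U\cup B$. Theorem~\ref{lem4}($1$\cir) gives $|\Match{U;A}(P)| = |P|$, hence $|Y| = \kappa$. To confirm $Y$ is a cut I would show that $B\cap B_2$ (the $B$-portion of the $P$-side of $X$) is isolated in $G\setminus Y$: every edge from $B\cap B_2$ in $G$ ends in $X\cup B_2$, and the removed $T$, $P$, $W_2$ together with the fact that the swapped-in $\Match{U;A}(P)\subseteq A$ has no neighbors in $B$ sever all connections to the remaining vertices. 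Thus $Y$ is a laminar cut of $U$ in side $B$.

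For the separation claim, let $B_1, B_2$ denote the sides of $X$ with $P\subseteq B_2$ and $A\setminus \Match{U;A}(P) \subseteq B_1$, and suppose $u\in B_1$, $v\in B_2$ are separated by $X$. If $v\in B\cap B_2$, then $Y$ separates $u$ and $v$: $v$ lies in the isolated $B\cap B_2$-side of $Y$, while $u\in B_1\setminus Y$ lies in the opposite side (a short case check on whether $u\in A\setminus\Match{U;A}(P)$, $U\cap B_1$, or $B\cap B_1$). If $v\in P$, Corollary~\ref{cor:CD} gives $\Side{U^*(P)}(P)\subseteq \Side{X}(P)=B_2$, so $u\in B_1$ cannot lie in $\Side{U^*(P)}(P)$, and the crossing-matching description of $U^*(P)$ from Part $1$\cir\ (whose $A$-intersection is exactly $\Match{U;A}(P)$) places $u$ strictly on the far side of $U^*(P)$. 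The main obstacle will be this last subcase: ruling out $u\in U^*(P)$ itself in order to conclude strict separation, which needs both the crossing-matching structure of $U^*(P)$ inherited from Part $1$\cir\ and the minimality of $\Side{U^*(P)}(P)$ to be used simultaneously.
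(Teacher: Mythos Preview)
Your approach for Part~1\cir\ via the full case analysis of Theorem~\ref{thm1} is workable but more roundabout than needed, and as sketched has gaps. In the wheel case you claim that both sides of $U^*(P)$ meeting $A$ ``splits $A\setminus\Match{U;A}(P)$ across the two sides,'' but $A\cap K^*$ could lie entirely in $\Match{U;A}(P)$, so this does not immediately contradict separation; and your handling of the small case (``reduces to one of the others using $M$'') is not an argument. The paper's proof is far more direct and rests on one observation you never state: every vertex of $\Match{U;A}(P)=N_A(P)$ is adjacent both to $P$ and to $A\setminus\Match{U;A}(P)$, so $\Match{U;A}(P)\subseteq U^*(P)$. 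Combined with Corollary~\ref{cor:CD} applied to $M$, this gives $K^*\subseteq B\cup P$ for the $P$-side $K^*$ of $U^*(P)$. Now just case on whether $K^*=B\cup P$ (then $U^*(P)=M$) or $K^*\subsetneq B\cup P$ (then $U^*(P)\cap B\neq\emptyset$, and since $U^*(P)\cap A\supseteq\Match{U;A}(P)$, the definition of crossing matching cut is satisfied). No Theorem~\ref{thm1} needed.

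For Part~2\cir, your derivation that $Y=(X\setminus\Match{U;A}(P))\cup P$ is a laminar $\kappa$-cut is fine (the paper gets the same conclusion in one line via Lemma~\ref{lem2}, intersection rule, applied to $U$ and $X$). Your ``main obstacle''---ruling out $u\in U^*(P)$ when $v\in P$---has a clean resolution that you are missing. Write $K,L$ for the sides of $X$ with $P\subseteq K$ (your $B_2$) and $K^*,L^*$ for the sides of $U^*(P)$ with $P\subseteq K^*$. From Corollary~\ref{cor:CD} you already have $K^*\subseteq K$. Since every vertex of $U^*(P)$ is a neighbor of $K^*$, and every neighbor of $K\supseteq K^*$ lies in $K\cup X$, you get $U^*(P)\subseteq K\cup X$, hence $L^*=V\setminus(K^*\cup U^*(P))\supseteq V\setminus(K\cup X)=L$. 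Thus $u\in L\subseteq L^*$ lies in a \emph{side} of $U^*(P)$, so $u\notin U^*(P)$, and $U^*(P)$ genuinely separates $u$ from $v\in P\subseteq K^*$. This containment $L\subseteq L^*$ is the missing step; once you have it, no further structural analysis of $U^*(P)$ is needed.
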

\begin{proof}
Part 1\cir.  
Let $W$ be a cut that disconnects $P$ from $A\backslash \Match{U;A}(P)$. 
Because every vertex in $\Match{U;A}(P)$ is 
adjacent to $P$ and adjacent to 
$A\backslash \Match{U;A}(P)$, it follows that
$\Match{U;A}(P)\subseteq W$.
From Corollary~\ref{cor:CD} we know the set of all cuts separating $P$ from $A\backslash \Match{U;A}(P)$ has a unique minimum element.
This is the cut $U^*(P)$; let the sides of $U^*(P)$ be $K^*,L^*$ with $P\subseteq K^* \subseteq B\cup P$ and $A\backslash \Match{U;A}(P)\subseteq L^*$.

If $K^*=B\cup P$, then $U^*(P) = (U\backslash P) \cup \Match{U;A}(P)$. 
If $K^*\subset B\cup P$, 
then 
$U^*(P)\cap B\neq \emptyset$. 
Now since $\Match{U;A}(P)\subseteq U^*(P)$
and 
$(U^*(P)\cap A)\cup (U\backslash P)$ 
is a matching cut w.r.t.~$P$,
it follows that 
$U^*(P)$ is a crossing matching cut of 
$U$ in side $A$ w.r.t.~$P$.

Part 2\cir.
Fix such a crossing matching cut $X$. 
It disconnects $P$ from $A\backslash \Match{U;A}(P)$,
and has exactly two sides, 
$K \supseteq P$ and 
$L\supseteq A\backslash \Match{U;A}(P)$.
By the minimality of $U^*(P)$ we 
have $K^*\subseteq K$.
Because $X\cap B\neq \emptyset$, $K\subset B\cup P$. Thus, if $U^*(P)=(U\backslash P)\cup \Match{U;A}(P)$ is the matching cut (not a crossing matching cut), this contradicts the existence of $X$.
We conclude that $U^*(P)$ is a \emph{crossing} matching cut of $U$.

Let $u\in K$ and $v\in L$ be disconnected by $X$. 
If $u\in B\cap K$ and $v\in L$
then by Lemma~\ref{lem2} (intersection rule) 
applied to $U$ and $X$,
$B\cap K$ is disconnected from the rest of the graph by $(X\cap B)\cup (X\cap U)\cup(U\cap K)=(X\backslash \Match{U;A}(P))\cup P$, 
which is a laminar cut of $U$.
By the minimality of $U^*(P)$, 
$K^*\subseteq K$ and $U^*(P)\subseteq K\cup X$, 
hence $L\subseteq L^*$. 
When $u\in P$ and $v\in L$, 
they are also disconnected by $U^*(P)$.
\end{proof}

\begin{corollary}
  \label{thm3}
  Fix a cut $U$ with two sides $A$ and $B$, and let $\Theta^*$ 
  be defined w.r.t.~the matching cuts in side $A$, 
  and let $U^*(P)$ be defined as in Lemma~\ref{lem7}.
  Define $\mathcal{U} = \{U^*(P) \mid P\in \Theta^*\}$.

  Let $X$ be a crossing matching cut of $U$ in side $A$ w.r.t.~$Q$.  If $u$ and $v$
  are separated by $X$, then they are also separated by a member of $\mathcal{U}$
  or $(X\backslash \Match{U;A}(Q))\cup Q$, a laminar cut of $U$ in side $B$;
\end{corollary}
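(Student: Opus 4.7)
The plan is to compose Lemma~\ref{lem6} with Lemma~\ref{lem7}(2\cir): Lemma~\ref{lem6} reduces a crossing matching cut with respect to a general $Q\in\Theta$ to a crossing matching cut with respect to some $P_i\in\Theta^*$, and Lemma~\ref{lem7}(2\cir) replaces the latter by either $U^*(P_i)\in\mathcal{U}$ or a specific laminar cut of $U$. The only real content of the corollary is to check that the laminar cut produced at the second step is independent of the choice of $P_i$ and coincides with $(X\setminus \Match{U;A}(Q))\cup Q$.

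First I would express $Q$ as a union of elements of $\Theta^*$. For each $u\in Q$, the definition of $\Theta^*$ supplies a minimum $P_u\in\Theta^*$ with $u\in P_u$; since $Q\in\Theta$ also contains $u$, minimality gives $P_u\subseteq Q$. Letting $P_1,\ldots,P_\ell$ enumerate the distinct $P_u$ that arise, $Q=\bigcup_{i=1}^{\ell} P_i$. By the ``moreover'' clause of Lemma~\ref{lem6}, any pair $u,v$ separated by $X$ is also separated by
\[
X_i \;\bydef\; (X\setminus \Match{U;A}(Q))\cup (Q\setminus P_i)\cup \Match{U;A}(P_i),
\]
for some index $i$, and this $X_i$ is a crossing matching cut of $U$ in side $A$ with respect to $P_i\in\Theta^*$.

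Next I would apply Lemma~\ref{lem7}(2\cir) to the crossing matching cut $X_i$ with respect to $P_i$: either $u,v$ are separated by $U^*(P_i)\in\mathcal{U}$, in which case we are done, or they are separated by the laminar cut $(X_i\setminus \Match{U;A}(P_i))\cup P_i$. The closing identity is purely set-theoretic. Clause (ii) of the crossing matching definition together with Theorem~\ref{lem4}(1\cir) applied to the matching cut $(U\setminus Q)\cup (X\cap A)$ forces $X\cap A=\Match{U;A}(Q)$, so $X\setminus \Match{U;A}(Q) = (X\cap U)\cup (X\cap B)$. Using that $\Match{U;A}(P_i)\subseteq A$ is disjoint from $X\cap U$, from $X\cap B\subseteq B$, and from $Q\subseteq U$,
\begin{align*}
(X_i\setminus \Match{U;A}(P_i))\cup P_i
&= (X\cap U)\cup (Q\setminus P_i)\cup (X\cap B)\cup P_i\\
&= (X\cap U)\cup Q\cup (X\cap B)\\
&= (X\setminus \Match{U;A}(Q))\cup Q.
\end{align*}
This set lies in $U\cup B$, so the laminar cut guaranteed by Lemma~\ref{lem7} is a laminar cut of $U$ in side $B$, exactly as stated.

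The argument is thus mostly bookkeeping on top of the two earlier lemmas; the only (mild) obstacle is the final identity, which requires keeping careful track of $X\cap A$, $X\cap U$, and $X\cap B$ while noting that, crucially, $X\cap U$ need not equal $U\setminus Q$---only $X\cap A=\Match{U;A}(Q)$ is forced---yet the identity still goes through because the $X\cap U$ contributions on both sides cancel.
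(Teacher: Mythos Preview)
Your proposal is correct and follows the same two-step route as the paper's proof—invoke Lemma~\ref{lem6} to reduce to some $P_i\in\Theta^*$, then invoke Lemma~\ref{lem7}(2\cir)—with the paper compressing all of this into ``w.l.o.g.~$Q\in\Theta^*$.'' The set-theoretic identity $(X_i\setminus \Match{U;A}(P_i))\cup P_i=(X\setminus \Match{U;A}(Q))\cup Q$ that you work out explicitly is exactly the content hidden behind the paper's ``w.l.o.g.,'' and your derivation of it (including the observation that only $X\cap A=\Match{U;A}(Q)$ is forced, not $X\cap U=U\setminus Q$) is sound.
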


\begin{proof}
By Lemma $\ref{lem6}$, w.l.o.g.~we may assume that $Q\in \Theta^*$. 
Since $X$ is a crossing matching cut,
there is a crossing matching cut 
$U^*(Q)\in \mathcal{U}$. 
The claim then follows from Lemma~\ref{lem7}.
\end{proof}

\subsection{Laminar Cuts}\label{subsect:laminar}

In this section we analyze the structure of laminar cuts.
Throughout this section, $U$ refers to a cut that 
is \emph{not} (\Rmnum{1}, $\kappa-1$)-small,
\emph{not} a wheel cut $C(i,j)$ in some wheel,
and has a side $A$ with $\abs{A}>2\kappa$.

Consider the set of all cuts 
$W$ that are laminar w.r.t.~$U$, contained in $U\cup A$ and not (\Rmnum{1}, $\kappa-1$)-small.
It follows that $W$ has a side, call it $S(W)$,
that contains $U\backslash W$ and all other sides of $U$.\footnote{$S(W)$ is exactly $B_{j^*}$ of Theorem~\ref{thm1}, if using its notation on $U$ and $W$.}
Define $R(W)$ to be the region containing all other sides of $W$ beside $S(W)$.
We call $W$ a \emph{maximal} laminar cut of $U$ if there does not exist
another laminar cut $W'$ such that $R(W)\subseteq R(W')$.

\begin{theorem}\label{thm5}
Let $U$ be the reference cut.
  \begin{enumerate}
    \item If there exist matching cuts of $U$ in side $A$, 
    define $\Theta^*$ w.r.t.~$U,A$,
    define $Q=\cup_{P\in \Theta^*}P$, 
    and let $X=(U\backslash Q)\cup \Match{U;A}(Q)$ be the 
    matching cut in side $A$ having the smallest intersection with $U$.
    Then every laminar cut $W$ of $U$ in side $A$ is
  \begin{enumerate}
    \item [(\rmnum{1})] a laminar cut of $X$ in region $A\backslash \Match{U;A}(Q)$, or
    \item [(\rmnum{2})] a matching cut of $U$, or
    \item [(\rmnum{3})] a crossing matching cut of $X$.
  \end{enumerate}
  \item If there are no matching cuts of $U$ in side $A$, 
  every laminar cut of $U$ in side $A$
  is a maximal laminar cut, or a laminar
  cut of some maximal laminar cut $W_i$ in 
  a side of $R(W_i)$.
    Moreover, whenever $W_i$, $W_j$ are distinct maximal laminar cuts,
    $R(W_i)\cap R(W_j)=\emptyset$.
  \end{enumerate}
\end{theorem}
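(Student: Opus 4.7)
Set $A' = A\setminus \Match{U;A}(Q)$ and $B = V\setminus(U\cup A)$. For Part 1, I would split on whether a laminar cut $W$ of $U$ in $A$ meets $A'$. When $W\cap A'=\emptyset$, then $W\subseteq U\cup\Match{U;A}(Q)$; letting $P_W=U\setminus W$, the bound $|A|>2\kappa$ gives $A\setminus W\neq\emptyset$, so that since $W$ is a cut every vertex of $N_A(P_W)$ lies in $W$. Minimality $|W|=\kappa$ gives $|W\cap A|=|P_W|$, and applying $\kappa$-connectivity to the set $(U\setminus P_W)\cup N_A(P_W)$ (which is clearly a cut) forces $|N_A(P_W)|\geq|P_W|$ and thus equality, yielding $W=(U\setminus P_W)\cup N_A(P_W)$ as a matching cut of $U$, which is outcome (ii).

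When $W\cap A'\neq\emptyset$, I would apply Theorem~\ref{thm1} to $(X,W)$. The laminar outcome puts $W$ in a side of $X$ meeting $A'$, which is (i); the crossing-matching outcome is precisely (iii). The small outcome is excluded because $W$ is non-small by hypothesis, while a small side of $X$ contained in $W$ cannot lie in $A'$ (since $|A'|\geq|A|-\kappa>\kappa$) and cannot contain vertices of $B$ (since $W\cap B=\emptyset$). The wheel outcome I would rule out by a sector-emptiness argument: a 4-wheel between $X$ and $W$ requires the sector $(B\cup Q)\cap R(W)$ to be nonempty, but $R(W)\subseteq A\setminus W$ is disjoint from both $B$ and $U\supseteq Q$.

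For Part 2, the main idea is that if two laminar cuts $L_1,L_2$ of $U$ in $A$ satisfy $R(L_1)\cap R(L_2)\neq\emptyset$, the Union and Intersection Rules of Lemma~\ref{lem2} produce cuts $L_\cup = (L_1\cap L_2)\cup (L_1\cap S(L_2))\cup (L_2\cap S(L_1))$ and $L_\cap = (L_1\cap L_2)\cup (L_1\cap R(L_2))\cup(L_2\cap R(L_1))$ separating $R(L_1)\cup R(L_2)$ and $R(L_1)\cap R(L_2)$ respectively. Since $|L_\cup|+|L_\cap|=|L_1|+|L_2|=2\kappa$ and both are at least $\kappa$ by $\kappa$-connectivity, both are minimum $\kappa$-cuts. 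Because $L_\cup\subseteq U\cup A$ it is a laminar cut of $U$ in $A$ with $R(L_\cup)\supseteq R(L_1)\cup R(L_2)$, which strictly contains $R(L_1)$ and $R(L_2)$ whenever these are incomparable; this contradicts the maximality of $L_1$ or $L_2$. Hence $R(W_i)\cap R(W_j)=\emptyset$ for distinct maximal $W_i,W_j$.

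For a non-maximal laminar cut $W$, iterating the $L_\cup$ construction against successive enlarging laminar cuts (terminating since $|R(\cdot)|\leq|A|$) produces a maximal $W_i$ with $R(W)\subseteq R(W_i)$; applying Theorem~\ref{thm1}'s laminar outcome to $W$ and $W_i$ then gives $W\subseteq W_i\cup P$ where $P$ is a single side of $W_i$ contained in $R(W_i)$, as required. The main anticipated obstacle is verifying that the newly constructed $L_\cup$ is itself non-$(\textrm{I},\kappa-1)$-small, so that the maximality comparison remains valid within the restricted family of cuts considered; this should follow from $|B|\geq 1$, the non-smallness of $L_1,L_2$, and the fact that $R(L_\cup)\supseteq R(L_i)$ inherits enough mass from whichever of $L_1,L_2$ has a large region.
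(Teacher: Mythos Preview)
Your Part~1 argument is essentially the paper's: both apply the classification Theorem~\ref{thm1} to the pair $(X,W)$ and dispatch the laminar, crossing-matching, wheel, and small outcomes the same way. (The paper additionally splits on whether $X$ has two or more sides and invokes Corollary~\ref{cor1} in the multi-side case, but this is cosmetic.) Your exclusion of the small case is terse but correct once one observes $|B|\geq\kappa$ (from $U$ not being $(\mathrm{I},\kappa{-}1)$-small) and $|A'|>\kappa$.

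Your Part~2 argument is more direct than the paper's, but it has a genuine gap. The union-rule construction of $L_\cup$ does give a $\kappa$-cut with $R(L_1)\cup R(L_2)$ separated from $S(L_1)\cap S(L_2)$, and when $R(W_i)\cup R(W_j)\subsetneq A$ this yields a laminar cut of $U$ with strictly larger region, contradicting maximality. But when $R(W_i)\cup R(W_j)=A$, a short computation (using $R(W_i)\cap W_i=\emptyset$ and $R(W_i)\subseteq A$) shows $L_\cup\cap A=\emptyset$, so $L_\cup\subseteq U$ and hence $L_\cup=U$ itself. Then $L_\cup$ is not a laminar cut of $U$ and no contradiction with maximality follows; $L_\cap$ does not help either, since its region is only $R(W_i)\cap R(W_j)$.

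This boundary case is precisely where the standing hypotheses on $U$ enter, and your argument never uses them. The paper shows, via Theorem~\ref{thm1} applied to $W_i,W_j$, that $R(W_i)\cup R(W_j)=A$ forces either (in the crossing-matching case) one of $W_i,W_j$ to be a matching cut of $U$ in side $A$, or (in the wheel case) $U=(W_i\cap S(W_j))\cup(W_i\cap W_j)\cup(W_j\cap S(W_i))$ to be a cut $C(i',j')$ of the $4$-wheel formed by $W_i,W_j$. Both contradict the hypotheses of the section. The obstacle you anticipated---non-smallness of $L_\cup$---is actually fine (both $R(L_1)\cup R(L_2)$ and $S(L_1)\cap S(L_2)\supseteq B$ have size at least~$\kappa$); the real obstacle is $L_\cup$ degenerating to $U$.
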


\begin{proof}
Part 1. $X$ has a side $K=Q\cup (V\backslash(U\cup A))$ and a region $L=A\backslash \Match{U; A}(Q)$. According to the number of sides of $X$, we split into two cases:
\begin{itemize}
\item[\Rmnum{1}.] If $X$ has strictly more than two sides, let its sides be $K$ and $L_1$, $L_2$, $\ldots$, $L_r$ where $L=\cup_{i=1}^r L_i$. Then by Corollary~\ref{cor1}, any other cut should have laminar type relation with $X$, 
or themselves be (\Rmnum{1}, $\kappa-1$)-small cuts.
But here the cut $W$ in our concern are not (\Rmnum{1}, $\kappa-1$)-small.
For such a $W$ of $U$ in side $A$, it can only be a laminar cut of 
$X$ in some side. If it was $L_i$, then we have (\rmnum{1}). 
If it was $K$, then we have that $W\subseteq X\cup K$ and also $W\subseteq U\cup A$, so that $W\subseteq (X\cup K)\cap (U\cup A)=U\cup \Match{U; A}(Q)$. Then by definition of $\Theta$, $W$ is a matching cut of $U$ and we have (\rmnum{2}).

\item[\Rmnum{2}.]
If $X$ has exactly two sides, 
then they are $K=Q\cup (V\backslash (U\cup A))$ and $L=A\backslash \Match{U;A}(Q)$. 
Fix any laminar cut 
$W$ of $U$ in side $A$.
If $W\subseteq A\cup (U\backslash Q)$ 
then $W$ is a laminar cut of $X$ in side $A\backslash \Match{U;A}(Q)$, 
and we are in case (i).
If $W\subseteq U\cup \Match{U;A}(Q)$, 
then by definition of $\Theta$,
$W=(U\backslash P)\cup \Match{U;A}(P)$ 
for some $P\in \Theta$ and $W$ 
is a matching cut of $U$, and we are 
in case (ii).

Thus, we can proceed with the assumption
that $W\cap Q\neq \emptyset$ and $W\cap (A\backslash \Match{U; A}(Q))\neq \emptyset$. Therefore, $W\cap K\neq \emptyset$ and $W\cap L\neq\emptyset$. So $W$ must have wheel type or crossing matching type relation with $X$. We only need to show that $W$ does not have wheel type relation with $X$.
Because $R(W)\subseteq A$, we know that $R(W)\cap K\subseteq A\cap K=\emptyset$, so they do not have a wheel type relation. 
Then, $W$ must be a crossing matching cut of $X$, and we are in case (iii).
\end{itemize}

\medskip

Part 2.
By assumption $U$ does not have matching cuts in side $A$.
Enumerate all of its maximal laminar 
cuts $\mathcal{W} = \{W_1, W_2, \ldots, W_{|\mathcal{W}|}\}$.
Fix any laminar cut $W$ of $U$ in side $A$.
If $W\not\in\mathcal{W}$,
then by definition of maximality 
there exists 
some $W_i$ such that 
$R(W)\subseteq R(W_i)$.
It follows that 
$W \subseteq R(W_i)\cup W_i$, 
so $W$ is a laminar cut of $W_i$ 
in one of the sides of $R(W_i)$. 

It remains to prove that 
for all $i\neq j$, 
$R(W_i)\cap R(W_j)=\emptyset$.
Suppose the statement were false. 
Because 
$S(W_i) \cap S(W_j)\neq \emptyset$, 
$W_i,W_j$ must have laminar, 
crossing matching, or wheel type relation.
They cannot be laminar, for then
$R(W_i)\subset R(W_j)$, 
or 
$R(W_j)\subset R(W_i)$, 
contradicting the maximality of $W_i,W_j$.
Otherwise, $W_i$ and $W_j$ must both 
have exactly two sides, namely 
$R(W_i),S(W_i)$ and $R(W_j),S(W_j)$.
Apply Corollary~\ref{cor:CD} to $C=R(W_i)\cap R(W_j)$ and $D=S(W_i)\cap S(W_j)$, noticing that $W_i$ and $W_j$ disconnects $C$ and $D$, we may set $Y=\MinCut{C; D}$.
Because $\Region{Y}(C)\subseteq \Region{W_i}(C)\cap \Region{W_j}(C)=R(W_i)\cap R(W_j)=C$, $C$ is actually a region of $U$.
As long as $R(W_i)\cup R(W_j)\neq A$, 
$Y$ is a laminar cut of $U$,
also contradicting the maximality of $W_i,W_j$.
Thus, we proceed under the assumption that
$R(W_i) \cup R(W_j)=A$, meaning
$Y=U$ is not a laminar cut of $U$.

If $W_i$ and $W_j$ have a crossing matching type relation, at least one of $R(W_i)\cap S(W_j)$ and $R(W_j)\cap S(W_i)$ is empty, suppose it is 
$R(W_i)\cap S(W_j)=\emptyset$. 
Then $R(W_i)\subseteq R(W_j)\cup W_j$, but we already have that $R(W_i)\cup R(W_j)=A$, so $R(W_j)=A\backslash W_j$, 
which means $W_j$ is a matching cut of $U$ in side $A$, 
contradicting the assumption of Part 2 
that $U$ has no matching cuts in side $A$.

The last case is when
$W_i$ and $W_j$ have a wheel type relation, 
i.e., they form a $4$-wheel 
with center $T = W_i\cap W_j$, spokes 
$W_i\cap R(W_j)$, 
$W_i\cap S(W_j)$, 
$W_j\cap R(W_i)$,
$W_j\cap S(W_i)$, and
sectors 
$R(W_i)\cap R(W_j)$, 
$R(W_i)\cap S(W_j)$, 
$S(W_i)\cap R(W_j)$, 
$S(W_i)\cap S(W_j)$. 
Thus, the $\kappa$-cut 
$(W_i\cap S(W_j))\cup (W_i\cap W_j)\cup (W_j\cap S(W_i))$ disconnects $S(W_i)\cap S(W_j)$ from $R(W_i)\cup R(W_j)=A$. This means 
$U=(W_i\cap S(W_j))\cup(W_i\cap W_j)\cup (W_j\cap S(W_i))$ is also a cut of this wheel. 
This contradicts the original assumption that our reference cut
$U$ is not a wheel cut $C(i', j')$ of some wheel.
\end{proof}

\subsection{Small Cuts}\label{subsect:small}

Fix a vertex $u$ and a threshold $t\leq \ceil{\frac{n-\kappa}{2}}$.
Define $\Small{t}(u)$ to be a cut $U$ minimizing $\abs{\Side{U}(u)}$ 
with $\abs{\Side{U}(u)}\leq t$.
We first show that $\Small{t}(u)$, 
if it exists, is unique.

\begin{theorem}\label{thm:small-side-cut}
  If there exists any (\Rmnum{3}, $t$)-small cut that is small w.r.t.~$u$,
  then there exists a \emph{unique} 
  such cut, denoted $\Small{t}(u)$,
  such that 
  for any other cut $U$, $u\not\in U$, 
  \[\Side{\Small{t}(u)}\subseteq \Side{U}(u).\]
\end{theorem}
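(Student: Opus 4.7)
The plan is to apply Theorem~\ref{thm1} to $P:=\Small{t}(u)$ (with $u$-side $A^*$) and an arbitrary other candidate cut $Q$ (i.e., a $\kappa$-cut with $u\notin Q$ and $A:=\Side{Q}(u)$ of size $\le t$). The goal is $A^*\subseteq A$; uniqueness of $\Small{t}(u)$ then follows from mutual containment, since a minimum vertex cut equals $N(A^*)\setminus A^*$ and so is determined by its $u$-side.

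The wheel and crossing matching cases of Theorem~\ref{thm1} are discharged immediately: each supplies a $\kappa$-cut with $u$-side $A^*\cap A\subsetneq A^*$ (the sector cut $C(1,2)$ from Lemma~\ref{lem3} for the wheel case, respectively the explicit matching cut $W_1\cup T\cup U_1$ for the crossing matching case), contradicting the minimality of $|A^*|$. In the small case, if $P$ is $(\mathrm{I},\kappa-1)$-small then $u$ in a small side of $P$ forces $u\in Q$ (impossible), while $u$ in the large side forces $|A^*|\ge n-2\kappa+1>t$ (using $n>4\kappa$ and $t\le\lceil(n-\kappa)/2\rceil$); the symmetric argument dispatches the case in which $Q$ is $(\mathrm{I},\kappa-1)$-small.

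The laminar case provides $Q\subseteq P\cup A_{i^*}$ for some side $A_{i^*}$ of $P$, together with Theorem~\ref{thm1}'s identity $B_{j^*}\setminus A_{i^*}=(P\setminus Q)\cup\bigcup_{i\ne i^*}A_i$. If $A_{i^*}\ne A^*$, then $A^*$ appears on the right-hand side, so $A^*\subseteq B_{j^*}$; since $u\in A^*$, this gives $A=B_{j^*}\supseteq A^*$, as desired. If $A_{i^*}=A^*$, then $u\in A^*\setminus Q$ lies either in some $B_j\subseteq A^*$ with $j\ne j^*$ (whose size is bounded by $|A^*|$, so minimality forces $B_j=A^*$ and hence $Q=P$, a contradiction) or in $A^*\cap B_{j^*}$; in this last subcase the identity yields
\[
   |B_{j^*}|=|A^*\cap B_{j^*}|+(\kappa-|P\cap Q|)+(n-\kappa-|A^*|),
\]
and combining $|A^*\cap B_{j^*}|\ge 1$, $|P\cap Q|\le\kappa-1$ (from $P\ne Q$), and $t\le\lceil(n-\kappa)/2\rceil$ gives $|B_{j^*}|\ge n-t-\kappa+2>t$, contradicting $|A|\le t$.

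The main technical obstacle is establishing this strict inequality $|B_{j^*}|>t$ in the $A_{i^*}=A^*$ subcase, which requires deploying all three ingredients simultaneously ($P\ne Q$, $u\in A^*\cap B_{j^*}$, and the ceiling bound on $t$) and verifying the arithmetic across both parities of $n-\kappa$.
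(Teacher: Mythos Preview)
Your overall strategy—fixing a minimizer $P$ and running through the four cases of Theorem~\ref{thm1} against an arbitrary competitor $Q$—is sound and is close in spirit to the paper's proof (which instead shows a ``meet'' property for any two $(\mathrm{III},t)$-small cuts via Corollary~\ref{cor:CD} and then counts). Your laminar and small cases are correct.

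The gap is in the crossing-matching case. You assert that the intersection-rule cut $(Q\cap A^*)\cup T\cup(P\cap A)$ is always a $\kappa$-cut with $u$-side $A^*\cap A$. Adopt Theorem~\ref{thm1}'s normalization: $A_1\cap B_1\neq\emptyset$, $A_2\cap B_2\neq\emptyset$, $A_1\cap B_2=\emptyset$, and possibly $A_2\cap B_1\neq\emptyset$, with $|W_1|=|U_2|$ and $|W_2|=|U_1|$. If $u$ lands in $A_1\cap B_1$ or $A_2\cap B_2$ your cut has size $|U_1|+|T|+|U_2|=\kappa$ and the argument goes through. But nothing prevents $u\in A_2\cap B_1$ (so $A^*=A_2$, $A=B_1$); then your cut is $W_2\cup T\cup U_1$, of size $2|U_1|+|T|=\kappa+(|U_1|-|U_2|)$, which is strictly larger than $\kappa$ whenever $|U_1|>|U_2|$—precisely the situation Theorem~\ref{thm1} allows when $A_2\cap B_1\neq\emptyset$. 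So no $\kappa$-cut with side $A^*\cap A$ is produced, and no contradiction to minimality follows.

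This subcase is exactly what the paper's inclusion--exclusion handles: from $V=P\cup Q\cup A_2\cup B_1$ one gets
\[
n=2\kappa+|A_2|+|B_1|-|A_2\cap B_1|-(|U_1|+|T|+|W_2|)\le 2\kappa+2t-1-(\kappa+1)\le n-1,
\]
a contradiction. You need this (or an equivalent) step; without it the crossing-matching case is not discharged.
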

\begin{proof}
  It suffices to show that for any two
  (\Rmnum{3},$t$)-small cuts $U$, $W$,
    there exists a cut $X$ (possibly $U$ or $W$) such that 
\[
\Side{X}(u)\subseteq \Side{W}(u) \cap \Side{U}(u).
\]
  If $\Side{U}(u)\subseteq \Side{W}(u)$ or $\Side{W}(u)\subseteq \Side{U}(u)$, we may set $X=U$ or $X=W$. If $V\backslash(U\cup W\cup \Side{U}(u)\cup \Side{W}(u))\neq \emptyset$, we may pick an arbitrary vertex $v$ in this set, 
  and apply Corollary~\ref{cor:CD} to the singleton sets $C=\{u\}$ 
  and $D=\{v\}$, and we may set $X=\MinCut{C,D}$.

  These two cases above rule out the possibility that $U, W$
  have a laminar or wheel type relation, except when,
  using the notation of Theorem~\ref{thm1},
  $A_{i^*}=\Side{U}(u)$ and $B_{j^*}=\Side{W}(u)$. 
  But this would lead to a contradiction that
  \begin{align*}
      \abs{V} &= \abs{A_{i^*}} + \abs{B_{j^*}} - \abs{A_{i^*}\cap B_{j^*}} + \abs{U\cap W}\\
      &\leq 2\ceil{\frac{n-\kappa}{2}} - 1 + (\kappa - 1)\\
      &<n.
  \end{align*}
  
  By Theorem~\ref{thm1} the remaining case is
  that $U,W$ have crossing matching type, i.e.,
  $\Side{U}(u) \backslash \Side{W}(u) \neq \emptyset$, $\Side{W}(u) \backslash \Side{U}(u) \neq \emptyset$, 
  and $V=U\cup W\cup \Side{U}(u) \cup \Side{W}(u)$.
  By Lemma~\ref{lem2} (intersection rule), 
  $(U\cap \Side{W}(u))\cup (W\cap \Side{U}(u))\cup (U\cap W)$
  is a cut.  If it has size exactly $\kappa$ then we can set $X$
  to be this cut.  We proceed under the assumption that it is 
  strictly larger than $\kappa$.  Thus, by inclusion/exclusion,
\begin{align*}
    |V| &= |U\cup W\cup \Side{W}(u) \cup \Side{U}(u)|\\
    &= |U| + |W| + |\Side{W}(u)| + |\Side{U}(u)| \\
    &\hspace*{2cm} - |\Side{W}(u)\cap\Side{U}(u)| - 
        (|U\cap \Side{W}(u)| + |W\cap \Side{U}(u)| + |U\cap W|)
\intertext{Since $u\in \Side{W}(u)\cap \Side{U}(u)$, this is}
    &\leq 2\kappa + 2t - 1 - (\kappa+1)\\
    &\leq 2\ceil{\frac{n-\kappa}{2}} + \kappa - 2\\
    &< n,
\end{align*}
which contradicts the definition of $n=|V|$.  We conclude
that when $t\leq \ceil{(n-\kappa)/2}$, $\Small{t}(u)$ is unique
if it exists.
\end{proof}

\section{A Data Structure for $(\kappa+1)$-Connectivity Queries}\label{sect:data-structure}

In this section we design an efficient data structure that, given $u,v$,
answers $(\kappa+1)$-connectivity queries, i.e., reports that
$\kappa(u,v) = \kappa$
and produces a minimum $\kappa$-cut separating $u,v$,
or reports that $\kappa(u,v) \geq \kappa+1$.

We work with the mixed-cut definition of $\kappa(u,v)$ (see Remark~\ref{remark:kappa-def}), 
which is the minimum
size set of vertices and edges that need to be removed to disconnect $u$ and $v$,
or equivalently, the maximum size set of internally vertex-disjoint paths 
joining $u$ and $v$.\footnote{If $\{u,v\}\not\in E(G)$ and $\kappa(u,v)=\kappa$,
then there exists $U\subset V$, $|U|=\kappa$, 
such that removing $U$ disconnects 
$u,v$.  If $\{u,v\}\in E(G)$ then there exists $U\subset V$, $|U|=\kappa-1$,
such that removing $U$ and $\{u,v\}$ disconnects $u,v$.  
In this case the single-edge path $\{u,v\}$
would count for one of the $\kappa$ 
internally vertex disjoint paths, the other $\kappa-1$ 
passing through distinct vertices of $U$.}

\begin{theorem}\label{thm:data-structure}
Given a $\kappa$-connected graph $G$, we can construct in $\tilde{O}(m + \poly(\kappa)n)$ time
a data structure occupying $O(\kappa n)$ space that answers the following queries.
Given $u,v\in V(G)$, report whether $\kappa(u,v) = \kappa$ or $\geq \kappa+1$
in $O(1)$ time.  If $\kappa(u,v)=\kappa$, report a $\kappa$-cut separating $u,v$
in $O(\kappa)$ time.
\end{theorem}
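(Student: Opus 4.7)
The plan is to build a tree-structured decomposition of $G$ that encodes all minimum $\kappa$-cuts via the four-way classification of Theorem~\ref{thm1}, and then answer a query $(u,v)$ by locating the lowest decomposition node whose associated cut(s) separate $u$ and $v$. Each of the four cut types admits its own $O(\kappa)$-per-vertex compression: small cuts collapse to the unique $\Small{\ceil{(n-\kappa)/2}}(v)$ per vertex (Theorem~\ref{thm:small-side-cut}); a $w$-wheel $(T;C_1,\ldots,C_w)$ represents its $\binom{w}{2}$ cuts implicitly by storing only $T$, the spokes $C_i$, and sector labels (Lemma~\ref{lem3}); matching and crossing matching families at a reference cut $U$ compress to at most $\kappa$ representatives $\Theta^*$ and $\mathcal{U}=\{U^*(P):P\in\Theta^*\}$ (Lemmas~\ref{lem6} and~\ref{lem7}, Corollary~\ref{thm3}); and laminar cuts naturally induce recursion on each side of $U$.

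First I would compute $\kappa$ and enough $\kappa$-cuts to drive the decomposition, using a randomized vertex-connectivity routine such as~\cite{ForsterNYSY20} in $\tilde{O}(m+\poly(\kappa)n)$ time, and precompute $\Small{\ceil{(n-\kappa)/2}}(v)$ for every $v$ via the uniqueness guarantee of Theorem~\ref{thm:small-side-cut}. Next I would build the decomposition top-down. If some reference $\kappa$-cut $U$ has at least three sides, Corollary~\ref{cor1} implies every non-small cut is laminar with $U$, so I store $U$ and recurse on each side. Otherwise every cut has exactly two sides, and I identify either a maximal wheel or a maximal laminar cut $U$ with its $\Theta^*$ and $\mathcal{U}$ attached; Theorem~\ref{thm2} then guarantees that any remaining cut is either a wheel cut $C(i,j)$, a crossing matching cut captured by some $U^*(P)\in\mathcal{U}$, a laminar cut living inside a single sector, or a small cut, so I recurse inside each sector independently using the region-disjointness from the second half of Theorem~\ref{thm5}.

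Each vertex $v$ receives an $O(\kappa)$-size label recording, for every decomposition node $X$ on its root-to-leaf path, which side or sector of $X$ contains $v$ together with a pointer to the representative cut. A query first checks in $O(1)$ whether $v \notin \Side{\Small{t}(u)}(u)$ or $u \notin \Side{\Small{t}(v)}(v)$; otherwise it locates in $O(1)$ the lowest decomposition node separating $u$ and $v$ via an LCA-style comparison of their labels, and then writes out the corresponding $\kappa$-cut explicitly in $O(\kappa)$ time (a wheel cut, a matching cut $(U\setminus P)\cup \Match{U;A}(P)$, or a representative $U^*(P)$).

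The main technical obstacle is twofold. First, bounding the total space by $O(\kappa n)$ requires a charging argument that each vertex is stored in at most $O(\kappa)$ cut sets across all decomposition levels; this should follow from the region-disjointness of distinct maximal laminar cuts (Theorem~\ref{thm5}), the bounds $|\Theta^*|,|\mathcal{U}|\leq \kappa$, and the fact that wheel spokes partition $U$. Second, achieving $\tilde{O}(m+\poly(\kappa)n)$ construction time rules out enumerating the potentially $2^{\Omega(\kappa)}$ minimum $\kappa$-cuts explicitly, so I would instead compute only the wheel skeletons, the $\Theta^*$ families, and the per-vertex small cuts through $\poly(\kappa)$ calls to $(\kappa+1)$-flow subroutines on recursively-shrunken subproblems, with a telescoping argument bounding the total work.
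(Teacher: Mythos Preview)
Your proposal takes a substantially different route from the paper, and one with real gaps. The paper's data structure is dramatically simpler than what you outline: it stores \emph{only} the cut $\Small{}(u) = \Small{\ceil{(n-\kappa)/2}}(u)$ for each vertex $u$, together with $\abs{\Side{\Small{}(u)}(u)}$, an $O(\log n)$-bit identifier for $\Side{\Small{}(u)}(u)$, one bit $b_{u,v}$ per neighbor $v\in N(u)\cap\Small{}(u)$, and the side itself when it has fewer than $\kappa$ vertices. There is no tree decomposition, no stored wheel skeletons, no $\Theta^*$ or $\mathcal{U}$ anywhere in the structure. The key lemma (Lemma~\ref{lemma:cut-by-minimal-cuts}) is that if $u,v$ are separated by \emph{any} $\kappa$-cut, then they are already separated by $\Small{}(u)$ or $\Small{}(v)$; the classification of Theorem~\ref{thm1} is invoked only inside the correctness proof (Theorem~\ref{thm:query-correctness}, Case~V) to argue that comparing $\abs{\Side{\Small{}(u)}(u)}$ with $\abs{\Side{\Small{}(v)}(v)}$ correctly decides which side $u$ falls on. Ironically, the first step of your query---checking whether $v \notin \Side{\Small{t}(u)}(u)$ or $u \notin \Side{\Small{t}(v)}(v)$---is essentially the paper's entire query algorithm, and the hierarchical machinery you layer on top is unnecessary.

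Beyond being more complicated, your plan has concrete holes. The $O(\kappa)$-per-vertex label bound does not follow from the lemmas you cite: a chain of nested laminar cuts can have depth $\Omega(n/\kappa)$, so the root-to-leaf path along which you record side/sector data can be long, and Theorem~\ref{thm5} only gives region-disjointness of maximal laminar cuts at a \emph{single} level, not a bound on how many cuts across all levels contain a fixed vertex. The $\tilde{O}(m+\poly(\kappa)n)$ construction is likewise underspecified: extracting a maximal wheel or the full $\Theta^*$ family at each decomposition node via ``$\poly(\kappa)$ flow calls on shrunken subproblems'' is not obviously achievable, and your telescoping sketch does not account for it. The paper avoids all of this by computing only the $\Small{}(u)$ cuts, using multi-scale vertex sampling, an $\Expand$ procedure (Lemma~\ref{lem:expand}) that pushes a cut outward past intermediate laminar cuts, and the Picard--Queyrenne representation (Corollary~\ref{cor:result-of-picard-queyrenne}) to recover $\Small{}(v)$ for many $v$ in one batch.
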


\paragraph{Sparsification.}
In $O(m)$ time, the Nagamochi-Ibaraki~\cite{NagamochiI92} algorithm produces a subgraph $G'$ 
that has arboricity $\kappa+1$ and hence
at most $(\kappa+1)n$ edges, 
such that $\kappa_{G'}(u,v) = \kappa_{G}(u,v)$ whenever 
$\kappa_G(u,v)\leq \kappa+1$, and $\kappa_{G'}(u,v) \geq \kappa+1$ 
whenever $\kappa_G(u,v) \geq \kappa+1$.  
Without loss of generality we may assume
$G$ is the \emph{output} of the Nagamochi-Ibaraki 
algorithm.

\paragraph{The Data Structure.}
Throughout this section we fix the threshold $t=\ceil{\frac{n-\kappa}{2}}$.  
Define $\Small{}(u) = \Small{t}(u)$ to be the unique 
minimum $\kappa$-cut with $\Side{\Small{}(u)}(u)\leq t$, 
if any such cut exists, and $\Small{}(u)=\perp$ otherwise.
The data structure stores, for each $u\in V(G)$, 
$\Small{}(u), |\Side{\Small{}(u)}(u)|,$ 
a $O(\log n)$-bit identifier for $\Side{\Small{}(u)}(u)$,
and for each vertex $v\in N(u)\cap \Small{}(u)$,
a bit $b_{u,v}$ indicating 
whether $\{\{u,v\}\}\cup \Small{}(u)\backslash \{v\}$ is 
a mixed cut disconnecting $u$ and $v$.
Furthermore, when $|\Side{\Small{}(u)}(u)|\leq \kappa-1$,
we store $\Side{\Small{}(u)}(u)$ explicitly. 
When $\Small{}(u)=\perp$ we will say 
$\Side{\Small{}(u)}(u)=G$ and hence
$\abs{\Side{\Small{}(u)}}=n$.
The total space is $O(\kappa n)$.

\paragraph{Connectivity Queries.}
The query algorithm proceeds to the 
first applicable case. Note in the following, $\Small{}(u)$ may be $\perp$, and for all vertices $v$, we define $v\notin \perp$.
\begin{description}
\item[Case I: $\Small{}(u)=\Small{}(v)$ and $\Side{\Small{}(u)}(u)=\Side{\Small{}(v)}(v)$.]
Then $\kappa(u,v)\geq \kappa+1$. 
%
\item[Case II: $u\not\in \Small{}(v)$ and $v\not\in \Small{}(u)$.]
Then $\kappa(u,v)=\kappa$.
Without loss of generality suppose that $\abs{\Side{\Small{}(u)}(u)}\leq \abs{\Side{\Small{}(v)}(v)}$.  
Then $\Small{}(u)$ is a $\kappa$-cut separating $u$ and $v$.
\item[Case III: $v\in \Small{}(u)\cap N(u)$, or the reverse.] The bit $b_{u,v}$ indicates whether $\kappa(u,v)\geq \kappa+1$
or $\kappa(u,v)=\kappa$, in which case $\{\{u,v\}\}\cup \Small{}(u)\backslash\{v\}$ is the $\kappa$-cut. 
\item[Case IV: $v\in \Small{}(u), u\in \Small{}(v)$.] 
Then $\kappa(u,v)\geq \kappa+1$.
\item[Case V: $v \in \Small{}(u), u\not\in \Small{}(v)$, or the reverse.] 
If $\abs{\Side{\Small{}(v)}(v)} \leq \kappa - 1$, directly check whether $u\in \Side{\Small{}(v)}(v)$. If so 
then $\kappa(u, v)\geq \kappa + 1$; 
if not then $\Small{}(v)$ disconnects them.  Thus 
$\abs{\Side{\Small{}(v)}(v)} \ge \kappa$. If $\abs{\Side{\Small{}(v)}(v)} \leq \abs{\Side{\Small{}(u)}(u)}$
then $\Small{}(v)$ is 
a $\kappa$-cut separating $u$ and $v$, and otherwise 
$\kappa(u,v)\geq \kappa+1$. 
\end{description}

\medskip

Lemmas~\ref{lemma:minimal-cuts-laminar}, 
\ref{lemma:cut-by-minimal-cuts}, and Theorem~\ref{thm:query-correctness} 
establish the \emph{correctness}
of the query algorithm.  Its construction algorithm is described
and analyzed in Section~\ref{subsect:construction}.

\begin{lemma}\label{lemma:minimal-cuts-laminar}
If $v\in \Side{\Small{}(u)}(u)$,
then either $\Small{}(v)=\Small{}(u)$
or $\Small{}(v)$ is a laminar cut of $\Small{}(u)$ with
$\Side{\Small{}(v)}(v)\subset \Side{\Small{}(u)}(u)$.
\end{lemma}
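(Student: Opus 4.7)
The plan is to set $U = \Small{}(u)$ and $W = \Small{}(v)$ and analyze them via Theorem~\ref{thm1}. I would first observe that $W$ exists: since $v \in \Side{U}(u)$ we have $v \notin U$ and $\abs{\Side{U}(v)} = \abs{\Side{U}(u)} \leq t$, so $U$ itself is a (\Rmnum{3}, $t$)-small cut with respect to $v$. Applying Theorem~\ref{thm:small-side-cut} at $v$ then gives $\Side{W}(v) \subseteq \Side{U}(v) = \Side{U}(u)$. Any minimum $\kappa$-cut equals the external neighborhood of each of its sides (otherwise that neighborhood would be a smaller cut, violating $\kappa$-connectivity), so both $U$ and $W$ are determined by any one of their sides; in particular, if $\Side{W}(v) = \Side{U}(u)$ then $W = U$ and we are done. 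Assume henceforth that $\Side{W}(v) \subsetneq \Side{U}(u)$, and set $A_1 = \Side{U}(u)$, $B_1 = \Side{W}(v)$.

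Next I would apply Theorem~\ref{thm1} to the pair $U, W$ and rule out the Wheel, Crossing Matching, and Small types. The Wheel case is immediate: all four sectors of the $4$-wheel are nonempty, so the sector $A_2 \cap B_1$ is nonempty, contradicting $B_1 \subseteq A_1$. For Small type, without loss of generality $U$ is (\Rmnum{1}, $\kappa-1$)-small with small sides contained in $W$ (the symmetric case is analogous); since $v \in A_1 \setminus W$, the large side must be $A_1$, so
\[
\abs{V} \leq \abs{U} + \abs{A_1} + \sum_{i\neq 1}\abs{A_i} \leq \kappa + t + (\kappa-1) = 2\kappa + t - 1,
\]
which is strictly less than $n$ whenever $t = \ceil{(n-\kappa)/2}$ and $n > 4\kappa$, a contradiction.

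The main obstacle will be ruling out Crossing Matching. In that case $a = b = 2$ and, up to relabeling, $A_1 \cap B_1 \neq \emptyset$, $A_2 \cap B_2 \neq \emptyset$, $A_1 \cap B_2 = \emptyset$; the assumption $B_1 \subseteq A_1$ further forces $A_2 \cap B_1 = \emptyset$. By case 1.2\cir\ of the proof of Theorem~\ref{thm1}, the set $X' = T \cup U_1 \cup W_1$ is a matching cut of $U$ in side $A_1$ with respect to $U_2$, hence a $\kappa$-cut, and a direct computation gives $\Side{X'}(v) = A_1 \setminus W_1 = A_1 \cap B_1 = B_1 = \Side{W}(v)$. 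Since minimum $\kappa$-cuts are determined by any one of their sides, this forces $X' = W$; but then $U_1 = W_2$ with $U_1 \subseteq A_1$ and $W_2 \subseteq A_2$ disjoint, so $U_1 = \emptyset$, contradicting $\abs{U_1} > 0$ guaranteed in the crossing matching clause of Theorem~\ref{thm1}.

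Finally, in the remaining Laminar case Theorem~\ref{thm1} produces indices $i^*, j^*$ with $W \subseteq U \cup A_{i^*}$ and $B_{j^*} \supseteq \bigcup_{i\neq i^*} A_i$. If $i^* \neq 1$ then $A_1 \subseteq B_{j^*}$, so $v \in B_{j^*}$ and $\Side{W}(v) = B_{j^*} \supseteq A_1$, contradicting $\Side{W}(v) \subsetneq A_1$. Therefore $i^* = 1$, i.e., $W \subseteq U \cup \Side{U}(u)$; combined with $\Side{W}(v) \subsetneq \Side{U}(u)$, this is exactly the second alternative of the lemma.
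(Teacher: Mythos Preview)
Your proof is correct in substance but takes a much longer route than needed. The paper's argument is essentially your first paragraph: since $v \in \Side{\Small{}(u)}(u)$, the cut $\Small{}(u)$ is (\Rmnum{3},$t$)-small with respect to $v$, and Theorem~\ref{thm:small-side-cut} then gives $\Side{\Small{}(v)}(v) \subseteq \Side{\Small{}(u)}(v) = \Side{\Small{}(u)}(u)$. The paper stops there.

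The observation you already make---that a minimum $\kappa$-cut equals the external neighborhood of any one of its sides---finishes the proof directly, without any appeal to Theorem~\ref{thm1}. Every vertex of $W = \Small{}(v)$ has a neighbor in $\Side{W}(v) \subseteq A_1$, and every neighbor of a vertex of $A_1$ lies in $A_1 \cup U$; hence $W \subseteq A_1 \cup U$, i.e., $W$ is laminar in $U$ in side $A_1$ (and if the sides coincide then $W = U$). The case analysis ruling out the wheel, crossing matching, and small types is therefore unnecessary work. Incidentally, your crossing-matching paragraph has a small labeling slip: you invoke Theorem~\ref{thm1}'s ``up to relabeling'' convention while simultaneously using the fixed identities $A_1 = \Side{U}(u)$ and $B_1 = \Side{W}(v)$; with those fixed labels the guaranteed empty quadrant is $A_2 \cap B_1$, not $A_1 \cap B_2$. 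The rest of that paragraph still goes through once the indices are adjusted accordingly.
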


\begin{proof}
$\Small{}(u)$ is (\Rmnum{3}, $t$)-small w.r.t.~$v$. By Theorem~\ref{thm:small-side-cut},
$\Small{}(v)$ exists and 
$\Side{\Small{}(v)}(v)\subseteq \Side{\Small{}(u)}(v)$. 
\end{proof}

\begin{lemma}\label{lemma:cut-by-minimal-cuts}
Suppose 
$u$ and $v$ are not $(\kappa+1)-$connected, i.e., 
$\kappa(u, v)=\kappa$.
If $\{u,v\}\not\in E(G)$,
then they are disconnected
by $\Small{}(u)$ or $\Small{}(v)$, 
and if $\{u,v\}\in E(G)$,
then they are disconnected by
$\{(u,v)\}\cup \Small{}(u)\backslash \{v\}$
or
$\{(u,v)\}\cup \Small{}(v)\backslash \{u\}$.
\end{lemma}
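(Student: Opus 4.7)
The proof has two cases, \emph{non-edge} ($\{u,v\} \notin E$) and \emph{edge} ($\{u,v\} \in E$), unified by a single ``edge-across-the-cut'' contradiction that combines Lemma~\ref{lem1} with the containment guaranteed by Theorem~\ref{thm:small-side-cut}.

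\emph{Non-edge case.} Since $\kappa(u,v) = \kappa$, some $\kappa$-vertex-cut $U^*$ separates $u$ and $v$, and $|\Side{U^*}(u)| + |\Side{U^*}(v)| \leq n - \kappa \leq 2t$, so after swapping the roles of $u$ and $v$ if necessary I may assume $|\Side{U^*}(u)| \leq t$. Thus $U^*$ is $(\mathrm{III},t)$-small w.r.t.~$u$, and Theorem~\ref{thm:small-side-cut} yields $\Small{}(u)$ with $\Side{\Small{}(u)}(u) \subseteq \Side{U^*}(u)$. The crux is that $v \notin \Small{}(u)$: if it lay inside, Lemma~\ref{lem1} applied to the cut $\Small{}(u)$ with side $\Side{\Small{}(u)}(u)$ would produce a neighbor $w$ of $v$ in $\Side{\Small{}(u)}(u) \subseteq \Side{U^*}(u)$, and then the edge $\{v,w\}$ of $G\setminus U^*$ (both endpoints avoid $U^*$) would join the two distinct components $\Side{U^*}(v)$ and $\Side{U^*}(u)$, contradicting that $U^*$ is a vertex cut. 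Since also $v \notin \Side{\Small{}(u)}(u)$, the vertex $v$ lies in a side of $\Small{}(u)$ different from $u$'s, so $\Small{}(u)$ disconnects them.

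\emph{Edge case.} Pass to $G' := G - (u,v)$. One of the $\kappa$ internally disjoint $u$-$v$ paths in $G$ is the single edge $(u,v)$, so $\kappa_{G'}(u,v) = \kappa - 1$, witnessed by some $(\kappa{-}1)$-vertex-cut $Y^*$ of $G'$. A direct check gives $\lfloor(n-\kappa+1)/2\rfloor = \lceil(n-\kappa)/2\rceil = t$, so after swapping $u,v$ if necessary I may assume $|\Side{G',Y^*}(u)| \leq |\Side{G',Y^*}(v)|$ and thus $|\Side{G',Y^*}(u)| \leq t$; the hypothesis $n > 4\kappa$ rules out the degenerate possibility $|\Side{G',Y^*}(v)|=1$, so $U^* := Y^* \cup \{v\}$ is a genuine $\kappa$-vertex-cut of $G$ whose $u$-side matches $\Side{G',Y^*}(u)$ and has size $\leq t$. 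Consequently $\Small{}(u)$ exists with $\Side{\Small{}(u)}(u) \subseteq \Side{G',Y^*}(u)$. Now $v \in \Small{}(u)$ is forced: otherwise $v$ would lie in a side of $\Small{}(u)$ distinct from $u$'s (since $v \notin \Side{\Small{}(u)}(u) \subseteq \Side{G',Y^*}(u)$), but the edge $(u,v)$ would then bridge these two sides in $G \setminus \Small{}(u)$. Finally I replay the edge-crossing contradiction \emph{inside $G'$} rather than $G$: any neighbor $w \neq u$ of $v$ inside $\Side{\Small{}(u)}(u) \subseteq \Side{G',Y^*}(u)$ would be joined to $v$ by an edge of $G' \setminus Y^*$ spanning the two sides of $Y^*$, impossible. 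Therefore $N(v) \cap \Side{\Small{}(u)}(u) = \{u\}$, and deleting $\Small{}(u) \setminus \{v\}$ together with the edge $(u,v)$ severs $u$'s component (contained in $\Side{\Small{}(u)}(u)$) from $v$'s component (which escapes via $v$'s neighbors in the other sides of $\Small{}(u)$). Thus $\{(u,v)\} \cup \Small{}(u) \setminus \{v\}$ is the desired mixed $\kappa$-cut.

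The main obstacle is the edge case: one must verify the threshold identity $\lfloor(n-\kappa+1)/2\rfloor = t$ so that the same threshold $t$ is usable in both $G$ and $G'$, exclude the degenerate one-vertex-side scenario using $n > 4\kappa$, and---most importantly---carry out the final edge-crossing contradiction inside $G'$ rather than $G$, since the edge $(u,v)$ of $G$ itself would otherwise be a spurious ``witness'' and ruin the argument.
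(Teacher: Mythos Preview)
Your proof is correct and follows essentially the same route as the paper's: in both cases you pass (implicitly or explicitly) to a witness cut, use the size bound $\lfloor(n-\kappa+1)/2\rfloor=\lceil(n-\kappa)/2\rceil=t$ to invoke Theorem~\ref{thm:small-side-cut}, and then argue about the position of $v$ relative to $\Small{}(u)$. Your write-up is in fact more careful than the paper's on two points the paper leaves implicit: in the non-edge case you explicitly rule out $v\in\Small{}(u)$ via the edge-across-$U^*$ contradiction, and in the edge case you spell out $N(v)\cap\Side{\Small{}(u)}(u)=\{u\}$ and the degeneracy check $|\Side{G',Y^*}(v)|\ge 2$, where the paper simply asserts ``any path from $u$ to $W$ goes through $\Small{}(u)\setminus\{v\}$.''
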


\begin{proof}
First suppose $\{u, v\}\notin E(G)$ and let $X$ be any cut separating
$u$ and $v$.  When $t=\ceil{\frac{n-\kappa}{2}}$ either 
$\abs{\Side{X}(u)} \leq t$ or $\abs{\Side{X}(v)}\leq t$.
W.l.o.g.~suppose it is the former, then $\Small{}(u)$ exists
and by Theorem~\ref{thm:small-side-cut}, $\Side{\Small{}(u)}(u)\subseteq \Side{X}(u)$,
so $\Small{}(u)$ also separates $u$ and $v$.

If $\{u, v\}\in E(G)$, suppose $(\kappa-1)$ vertices 
$W=\{w_1, w_2, \ldots, w_{\kappa-1}\}$
and $\{u, v\}$ disconnect $u$ and $v$. 
After removing $W$ from the graph, $G\backslash W$ is still connected. 
By deleting the edge $\{u, v\}$, the graph breaks into exactly two 
connected components, say $A$ and $B$ with 
$u\in A$ and $v\in B$. 
Then $W\cup \{u\}$ forms a $\kappa$-cut
with $\Side{W\cup\{u\}}(v)=B$, and $W\cup \{v\}$ also forms a $\kappa$-cut 
with $\Side{W\cup \{v\}}(u)=A$.
Clearly we have
\[
n = \abs{W} + \abs{A} + \abs{B} = \kappa-1 + \abs{A} + \abs{B}.\]
W.l.o.g.~suppose $\abs{A}\leq \abs{B}$, then 
\[
\abs{A} \leq \floor{\frac{n-\kappa+1}{2}} = \ceil{\frac{n-\kappa}{2}} = t.
\]
Thus $\Small{}(u)$ exists, 
$\Side{\Small{}(u)}(u)\subseteq \Side{W\cup \{v\}}(u)$, 
and $\Small{}(u)$ is either $W\cup\{v\}$ or 
a laminar cut of $W\cup\{v\}$ in side $A$. 
Since $\{u, v\}\in E(G)$, we have $v\in \Small{}(u)$. 
If we remove $\{u, v\}$ from $G$, then any path from $u$ to $v$ goes through 
a vertex in $W$, but any path from $u$ to a vertex in $W$ goes through 
a vertex in $\Small{}(u)\backslash \{v\}$. 
Therefore, $\{\{u, v\}\}\cup \Small{}(u)\backslash \{v\}$ is a mixed cut separating $u,v$
as it blocks all $u$-$v$ paths.
\end{proof}

\begin{theorem}\label{thm:query-correctness}
The query algorithm correctly answers
$(\kappa+1)$-connectivity queries.
\end{theorem}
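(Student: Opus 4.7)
The plan is to verify the algorithm's correctness case by case, following Cases I through V. The main tools will be Theorem~\ref{thm:small-side-cut} (uniqueness of $\Small{}(\cdot)$), Lemma~\ref{lemma:minimal-cuts-laminar} (which constrains $\Small{}(v)$ when $v \in \Side{\Small{}(u)}(u)$), Lemma~\ref{lemma:cut-by-minimal-cuts} (if $\kappa(u,v)=\kappa$ then $\Small{}(u)$ or $\Small{}(v)$ supplies a separator, possibly a mixed cut when $\{u,v\}\in E(G)$), and for the hardest case the classification Theorem~\ref{thm1}.

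Cases I through IV are relatively quick. Case I is immediate: $u,v$ share a common small side, so no $\kappa$-cut can separate them. For Case II I would WLOG assume $\abs{\Side{\Small{}(u)}(u)}\leq \abs{\Side{\Small{}(v)}(v)}$ under the convention that $\perp$ has side-size $n$; the assumption $v\in \Side{\Small{}(u)}(u)$ would then force, via Lemma~\ref{lemma:minimal-cuts-laminar}, either $\Small{}(u)=\Small{}(v)$ with identical small sides (but then Case I would have fired) or $\Side{\Small{}(v)}(v)\subsetneq \Side{\Small{}(u)}(u)$ (contradicting the WLOG ordering), so $v$ sits in a different side of $\Small{}(u)$ and $\Small{}(u)$ properly separates. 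For Case III, the stored bit $b_{u,v}$ was defined precisely to encode whether $\{\{u,v\}\}\cup \Small{}(u)\setminus\{v\}$ separates $u,v$, and the edge-case direction of Lemma~\ref{lemma:cut-by-minimal-cuts} shows that when $\{u,v\}\in E(G)$ and $v\in \Small{}(u)\cap N(u)$, this is exactly the criterion for $\kappa(u,v)=\kappa$. For Case IV, the failure of Case III yields $\{u,v\}\notin E(G)$; with $v\in \Small{}(u)$ and $u\in \Small{}(v)$, neither cut properly separates $u$ from $v$, so Lemma~\ref{lemma:cut-by-minimal-cuts} forces $\kappa(u,v)\geq \kappa+1$.

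Case V is the heart of the proof. After the WLOG swap to $v\in \Small{}(u)$, $u\notin \Small{}(v)$, the failure of Case III again gives $\{u,v\}\notin E(G)$. The easy sub-cases come first: if $\Small{}(v)=\perp$ then neither cut supplies a separator and Lemma~\ref{lemma:cut-by-minimal-cuts} yields $\kappa(u,v)\geq \kappa+1$; if $\abs{\Side{\Small{}(v)}(v)}\leq \kappa-1$ the small side is stored explicitly and direct membership testing is correct. In the remaining sub-case $\abs{\Side{\Small{}(v)}(v)}\geq \kappa$, I first observe that $\Small{}(u)\neq \Small{}(v)$ (the former contains $v$, the latter does not), so Lemma~\ref{lemma:minimal-cuts-laminar} supplies one direction of the algorithm's size test: $u\in \Side{\Small{}(v)}(v)$ would force $\Side{\Small{}(u)}(u)\subsetneq \Side{\Small{}(v)}(v)$, so contrapositively $\abs{\Side{\Small{}(v)}(v)}\leq \abs{\Side{\Small{}(u)}(u)}$ certifies $u\notin \Side{\Small{}(v)}(v)$ and hence that $\Small{}(v)$ properly separates.

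The main obstacle is justifying the final branch: $\abs{\Side{\Small{}(v)}(v)}>\abs{\Side{\Small{}(u)}(u)}$ should imply $\kappa(u,v)\geq \kappa+1$. I would prove this by contradiction: assume $\kappa(u,v)=\kappa$; by Lemma~\ref{lemma:cut-by-minimal-cuts}, $\Small{}(v)$ must separate (since $\Small{}(u)$ cannot), so $u$ sits in some side $B_k\neq \Side{\Small{}(v)}(v)$ of $\Small{}(v)$. Theorem~\ref{thm:small-side-cut} applied to $\Small{}(v)$ and $u$ then forces $\Side{\Small{}(u)}(u)\subseteq B_k$, making $\Side{\Small{}(u)}(u)$ disjoint from both $\Side{\Small{}(v)}(v)$ and $\Small{}(v)$. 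Running $(\Small{}(u),\Small{}(v))$ through the four types of Theorem~\ref{thm1} then yields contradictions in every branch: the laminar type collapses to either $\Small{}(u)=\Small{}(v)$ or $\Side{\Small{}(u)}(u)=\emptyset$; the wheel type demands that the sector $\Side{\Small{}(u)}(u)\cap \Side{\Small{}(v)}(v)$ be non-empty, contradicting the derived disjointness; the crossing matching type insists on $\Small{}(v)\cap \Side{\Small{}(u)}(u)\neq \emptyset$ (one of the non-zero $W_i$ in the theorem's notation), again a contradiction; and the small type, combined with the standing assumption $n>4\kappa$, forces either $\Side{\Small{}(u)}(u)$ or $\Side{\Small{}(v)}(v)$ to exceed the threshold $t=\ceil{(n-\kappa)/2}$, contradicting that both are small-side sizes. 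This case-exhaustion is the delicate step of the whole theorem, requiring careful bookkeeping of which intersections must vanish and which must remain non-empty.
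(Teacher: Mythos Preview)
Your treatment of Cases I--IV and the easy sub-cases of Case V is sound and matches the paper's argument closely. The gap is in the final branch of Case V.

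You claim that Theorem~\ref{thm:small-side-cut} applied to the cut $\Small{}(v)$ and the vertex $u$ yields $\Side{\Small{}(u)}(u)\subseteq B_k$, and then use the consequences $A_1\cap B_1=\emptyset$ and $A_1\cap\Small{}(v)=\emptyset$ to knock out the wheel and crossing-matching types in one stroke. But Theorem~\ref{thm:small-side-cut}, as proved, only gives $\Side{\Small{t}(u)}(u)\subseteq\Side{U}(u)$ when $U$ is itself (\Rmnum{3},$t$)-small \emph{with respect to $u$}, i.e.\ when $|\Side{U}(u)|\le t$. Here $U=\Small{}(v)$ and $\Side{U}(u)=B_k$; nothing forces $|B_k|\le t$. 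In fact the containment you want is false in general: in the crossing-matching configuration the paper analyzes (with $u\in A_1\cap B_2$, $A_2\cap B_1=\emptyset$, and $|U\cap B_1|<|U\cap B_2|$), the quadrant $A_1\cap B_1$ is non-empty and $\Small{}(v)\cap A_1\neq\emptyset$, directly contradicting $A_1\subseteq B_2$. So your short-circuit through Theorem~\ref{thm:small-side-cut} does not go through, and with it your eliminations of the wheel and crossing-matching types collapse. (Your stated reason for the laminar case---``collapses to $\Small{}(u)=\Small{}(v)$ or $\Side{\Small{}(u)}(u)=\emptyset$''---is also not right even granting the containment; the actual obstruction is that $v\in\Small{}(u)$ would be forced into a side $A_{i^*}$ of $\Small{}(u)$.)

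The paper handles this branch differently: it works directly from the hypothesis $u\notin\Side{\Small{}(v)}(v)$ and runs the pair $(\Small{}(u),\Small{}(v))$ through Theorem~\ref{thm1}, using \emph{minimality} of $\Small{}(u)$ rather than any containment. For the wheel type, $u$ lies in some sector $S_1$, and the sector cut $C(1,2)$ has a strictly smaller $u$-side than $\Small{}(u)$, a contradiction. For the crossing-matching type, one first argues (via an intersection cut and minimality of $\Small{}(u)$) that the empty quadrant must be $A_2\cap B_1$; then the inequality $|U\cap B_2|\ge|U\cap B_1|$ from Theorem~\ref{thm1} feeds into a direct count giving $|B_1|<|A_1|$. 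This last size comparison is the genuine content you are missing---the crossing-matching type cannot simply be excluded, it must be shown to force the desired inequality.
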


\begin{proof}
Suppose the algorithm terminates in Case I.
It follows that $u\not\in\Small{}(v), v\not\in\Small{}(u)$, and neither
$\Small{}(u)$ nor $\Small{}(v)$ disconnect $u$ and $v$. 
Lemma~\ref{lemma:cut-by-minimal-cuts} implies that $\kappa(u,v)\geq \kappa+1$.

In Case II, if $\Small{}(u)\neq \perp$ 
but $\Small{}(v)=\perp$ then $\Small{}(u)$ is the cut 
separating $u,v$ and since 
$\abs{\Side{\Small{}(u)}(u)} < \abs{\Side{\Small{}(v)}(v)}=n$,
then the query is answered correctly.  
If both $\Small{}(u),\Small{}(v)\neq \perp$, 
then by Lemma~\ref{lemma:minimal-cuts-laminar}, 
$v\notin \Side{\Small{}(u)}(u)$ and once again the
query is answered correctly.

In Case III, by Lemma~\ref{lemma:cut-by-minimal-cuts}, if $u$ and $v$
are separated by a $\kappa$-cut, they are separated by
$\{\{u,v\}\}\cup \Small{}(u)\backslash\{v\}$
(if $\Small{}(u)\neq\perp$)
or 
$\{\{u,v\}\}\cup \Small{}(v)\backslash\{u\}$
(if $\Small{}(v)\neq \perp$), 
and this information is stored 
in the bit $b_{u,v},b_{v,u}$. 

If we get to Case IV then $\{u,v\}\not\in E(G)$ and neither
$\Small{}(u)$ nor $\Small{}(v)$ separate $u,v$, hence by 
Lemma~\ref{lemma:cut-by-minimal-cuts}, 
$\kappa(u,v)\geq \kappa+1$ and the query is answered correctly.

Case V is the most subtle.
Because $v\in \Small{}(u)$ and $\{u,v\}\not\in E(G)$, Lemma~\ref{lemma:cut-by-minimal-cuts} implies
that \underline{\emph{if}} $\kappa(u,v)=\kappa$, then 
$u,v$ must be separated by $\Small{}(v)$.
If $\Small{}(v)=\perp$ then $\kappa(u,v)\geq \kappa+1$ and 
the query is answered correctly.
If $|\Side{\Small{}(v)}(v)|\leq \kappa-1$ then the query
explicitly answers the query correctly by direct lookup.
Thus, we proceed under the assumption that
$\Small{}(v)\neq\perp$ exists and is not small.

If $u\in \Side{\Small{}(v)}(v)$ then 
$\Small{}(v)$ does not disconnect $u$ and $v$, 
and by Lemma~\ref{lemma:minimal-cuts-laminar}, 
$\abs{\Side{\Small{}(v)}(v)} > \abs{\Side{\Small{}(u)}(u)}$,
so the query is handled correctly in this case.

If $u\notin \Side{\Small{}(v)}(v)$ then $\Small{}(v)$
separates $u$ and $v$, so we must argue that
$\abs{\Side{\Small{}(v)}(v)}\leq \abs{\Side{\Small{}(u)}(u)}$ 
for the query algorithm 
to work correctly.
It cannot be that $\Small{}(v)$ and $\Small{}(u)$
have a laminar relation, so by Theorem~\ref{thm1}
they must have a crossing matching, wheel,
or small type relation.  If they have the small-type
relation then the small sides of $\Small{}(u)$ are
contained in $\Small{}(v)$ (contradicting $u\not\in \Small{}(v)$)
or the small sides of $\Small{}(v)$ are contained in $\Small{}(u)$,
but we have already rules out this case.
Thus, the remaining cases to consider are wheel 
and crossing matching type.

Suppose $\Small{}(u),\Small{}(v)$ form a 4-wheel
$(T; C_1,C_2,C_3,C_4)$.
Then $u\not\in \Small{}(v)$ 
appears in a sector of the wheel, say $S_1$.
Then $C(1,2)$ is a cut violating the minimality
of $\Small{}(u)=C(1,3)$.

Suppose $\Small{}(u),\Small{}(v)$ have a crossing
matching type relation. 
Let $A_1 = \Side{\Small{}(u)}(u)$ and $A_2$ be the other side of $\Small{}(u)$,
and $B_1 = \Side{\Small{}(v)}(v)$ and $B_2$ be the other side of $\Small{}(v)$.
Then $u\in A_1\cap B_2$, and it must be that the diagonal quadrant $A_2\cap B_1=\emptyset$.
Suppose otherwise, i.e., $A_2\cap B_1\neq \emptyset$, 
and let $X=(\Small{}(u)\cap B_2)\cup (\Small{}(v)\cap A_1)\cup (\Small{}(u)\cap \Small{}(v))$.
Then by Corollary~\ref{cor:CD}
$X$ is a $\kappa$-cut with $\Side{X}(u)=A_1\cap B_2$, contradicting the minimality of $\Small{}(u)$.
Thus, $\Small{}(v)$ is a crossing matching cut of $\Small{}(u)$ in side $A_2$ w.r.t.~
some $Q\subseteq \Small{}(u)\cap B_1$ with $v\in Q$.
By Theorem~\ref{thm1} 
and $u\in A_1\cap B_2\neq \emptyset$, we have
\[
|A_1\cap \Small{}(v)| = |B_2\cap \Small{}(u)| \geq |A_2\cap \Small{}(v)| = |B_1\cap \Small{}(u)|=|Q|.
\]
Thus,
\[
|\Side{\Small{}(u)}(u)|=|A_1| >|(A_1\cap B_1)\cup Q|=|\Side{\Small{}(v)}(v)|,
\]
establishing the correctness in the crossing matching case. (The strictness of the inequality is because $A_1\cap B_2\neq \emptyset$.)
\end{proof}

\subsection{Construction of the Data Structure}\label{subsect:construction}

We assume Nagamochi-Ibaraki sparsification~\cite{NagamochiI92} has already been applied,
so $G$ has arboricity $\kappa+1$ and $O(\kappa n)$ edges.
We use the recent Forster et al.~\cite{ForsterNYSY20} algorithm for computing 
the connectivity $\kappa=\kappa(G)$ in $\tilde{O}(\poly(\kappa) n)$ time and 
searching for $\kappa$-cuts.  

\begin{theorem}[Consequence of Forster, Nanongkai, Yang, Saranurak, and Yingchareonthawornchai~\cite{ForsterNYSY20}]\label{thm:Forstertal}
Given parameter $s\leq \ceil{\frac{n-\kappa}{2}}$ and a vertex $x$, 
there exists an algorithm that runs in time $O(s\kappa^3)$ with the following guarantee.
If the cut $\Small{s}(x)$ exists, it is reported with probability $\geq \frac{3}{4}$,
otherwise the algorithm returns $\perp$.
If the cut $\Small{s}(x)$ does not exist, the algorithm always returns $\perp$.
\end{theorem}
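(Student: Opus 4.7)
The plan is to apply the local vertex connectivity subroutine of Forster et al.~\cite{ForsterNYSY20} as a near black-box, wrapped by a cheap verification step and a local max-flow refinement that canonicalizes the output to exactly $\Small{s}(x)$. Their LocalVC subroutine, given a source vertex $x$ and a volume budget $\nu$, runs in time $\tilde{O}(\nu \kappa^2)$ and, with constant probability, returns a $\kappa$-cut $U$ whose $x$-side has incident-edge volume at most $\nu$ whenever such a cut exists. Because the graph has already been Nagamochi-Ibaraki sparsified to arboricity $\kappa+1$, any vertex set of cardinality at most $s$ has incident-edge volume $O(s\kappa)$. Invoking LocalVC with $\nu=\Theta(s\kappa)$ therefore runs in $O(s\kappa^3)$ time and, whenever $\Small{s}(x)$ exists, returns some $\kappa$-cut $U$ with $|\Side{U}(x)|\le s$ with constant probability; $O(1)$ independent repetitions boost this to $3/4$.

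Next, I would enforce one-sided error by verifying the returned $U$: run a bounded BFS from $x$ inside $G\setminus U$, aborting as soon as more than $s$ vertices are visited. If the BFS visits at most $s$ vertices and its frontier lies entirely within $U$, then $U$ is certified to be a genuine $\kappa$-cut (size is checked directly) with $|\Side{U}(x)|\le s$; otherwise return $\perp$. This verification costs $O(s\kappa)$ time and guarantees that a non-$\perp$ output is always correct, so the algorithm returns $\perp$ whenever $\Small{s}(x)$ does not exist.

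The remaining issue is that the $U$ returned by LocalVC need not be $\Small{s}(x)$: there could be several $\kappa$-cuts whose $x$-side has size at most $s$. Theorem~\ref{thm:small-side-cut} guarantees that the $x$-side-minimizing cut is unique and that its $x$-side is contained in $\Side{U}(x)$. To extract it, I would run a unit-capacity vertex max-flow from $x$ inside the subgraph $G|_{\Side{U}(x)\cup U}$, treating $U$ collectively as the sink, with capacity at most $\kappa$ (so at most $\kappa$ augmenting paths). By Menger/Ford-Fulkerson, the set of vertices reachable from $x$ in the residual graph, together with its vertex boundary, is the min-side cut of this local flow instance; by the uniqueness in Theorem~\ref{thm:small-side-cut} this coincides with $\Small{s}(x)$. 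The refinement costs $O(\kappa\cdot s\kappa)=O(s\kappa^2)$ time, comfortably within the $O(s\kappa^3)$ budget.

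The main obstacle is not a single difficult step but the combination of three correctness requirements: (i) the LocalVC invocation must be tuned so its running time scales with \emph{vertex} count $s$ rather than total edge count (handled by sparsification bounding the volume at $O(s\kappa)$); (ii) the verification step must cheaply convert LocalVC's two-sided guarantee into the one-sided guarantee the theorem demands; and (iii) the refinement step must canonicalize an arbitrary small-side cut into $\Small{s}(x)$, which relies essentially on the uniqueness from Theorem~\ref{thm:small-side-cut}. Once these pieces are in place, the stated running time and success probability follow immediately.
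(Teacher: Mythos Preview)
The paper does not give a proof of this theorem at all: it is stated as a black-box consequence of~\cite{ForsterNYSY20}, with no argument supplied. So there is no ``paper's proof'' to compare against; your proposal is a reconstruction of how one might derive the stated guarantee from the LocalVC primitive, plus verification and a residual-graph refinement to canonicalize the output to $\Small{s}(x)$. That reconstruction is sound in outline and goes well beyond what the paper itself does.

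There is one imprecision worth flagging. You assert that LocalVC, invoked with volume budget $\nu=\Theta(s\kappa)$, returns a cut $U$ with $|\Side{U}(x)|\le s$. LocalVC bounds the \emph{edge volume} of the returned side, not its vertex count; after Nagamochi--Ibaraki sparsification the minimum degree is $\ge\kappa$, so the returned side has at most $\nu/\kappa=\Theta(s)$ vertices, i.e.\ $|\Side{U}(x)|\le Cs$ for some constant $C>1$, not $\le s$. As written, your bounded BFS aborts at $s$ vertices and would then output~$\perp$, potentially on every run, even when $\Small{s}(x)$ exists. The fix is simple: run the BFS with budget $Cs$ to certify that $U$ is a genuine $\kappa$-cut with $|\Side{U}(x)|\le Cs$, then perform your max-flow refinement inside $\Side{U}(x)\cup U$ to obtain the unique minimal-$x$-side cut (this is where Theorem~\ref{thm:small-side-cut} is used, and it applies since $Cs\le t$ for $s$ bounded away from $t$; for $s$ within a constant factor of $t$ one can simply run at scale $t$), and only \emph{then} test whether the refined side has at most $s$ vertices. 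With that reordering, the one-sided error and the $3/4$ success probability both hold, and the time bound is unchanged.
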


\begin{corollary}\label{cor:Forsteretal}
Given $x\in V(G)$ and an integer $s\leq \ceil{\frac{n-\kappa}{2}}$, we can, with high probability~$1-1/\poly(n)$, 
compute $\Small{s}(x)$ in $\tilde{O}(|\Side{\Small{s}(x)}(x)|\cdot \kappa^3)$ time, 
or determine that $\Small{s}(x)$ does not exist
in $\tilde{O}(s\kappa^3)$ time.
\end{corollary}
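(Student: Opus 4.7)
The plan is to combine Theorem~\ref{thm:Forstertal} with the standard doubling-plus-amplification paradigm. Concretely, I would invoke the Forster et al.~algorithm on input $(x, s')$ for $s' = 1, 2, 4, \ldots, 2^{\lceil \log_2 s \rceil}$ (capping the last value at $s$). For each scale $s'$, I would run $c\log n$ independent trials for a sufficiently large constant $c$. As soon as some trial at some scale returns a non-$\perp$ cut, I halt and report that cut; if every trial at every scale returns $\perp$, I report $\perp$.

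For correctness, recall from Theorem~\ref{thm:Forstertal} that the underlying algorithm \emph{never} returns a spurious cut: a non-$\perp$ output is always the genuine unique minimum-side cut $\Small{s'}(x)$. Suppose $\Small{s}(x)$ exists and write $k = |\Side{\Small{s}(x)}(x)|$. By the uniqueness statement of Theorem~\ref{thm:small-side-cut}, $\Small{s'}(x) = \Small{s}(x)$ for every $s' \in [k, s]$ and $\Small{s'}(x)$ does not exist for $s' < k$. Hence the first scale $s' \in \{1,2,4,\ldots\}$ at which some trial can possibly succeed is the unique $s'$ with $k \leq s' < 2k$, and the probability that all $c\log n$ trials at that scale return $\perp$ is at most $(1/4)^{c\log n} = n^{-\Omega(c)}$. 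Taking a union bound over the $O(\log s) = O(\log n)$ scales (and over the polynomially many invocations needed elsewhere) yields the desired $1 - 1/\poly(n)$ guarantee. When $\Small{s}(x)$ does not exist, every call deterministically returns $\perp$, so we correctly report $\perp$.

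For running time, note that each individual call at scale $s'$ costs $O(s' \kappa^3)$ by Theorem~\ref{thm:Forstertal}, so one block of $c \log n$ trials at scale $s'$ takes $O(s' \kappa^3 \log n)$ time. If $\Small{s}(x)$ does not exist we exhaust all scales up to $s$, for a geometric sum of $\sum_{i=0}^{\lceil \log_2 s\rceil} O(2^i \kappa^3 \log n) = \tilde{O}(s \kappa^3)$. If $\Small{s}(x)$ exists, we halt at the first successful scale, which is at most $2k$ with high probability, giving $\sum_{i=0}^{\lceil \log_2 (2k)\rceil} O(2^i \kappa^3 \log n) = \tilde{O}(k \kappa^3) = \tilde{O}(|\Side{\Small{s}(x)}(x)| \cdot \kappa^3)$. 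There is no real obstacle here: the argument is almost entirely mechanical once Theorem~\ref{thm:Forstertal} is in hand, and the only substantive ingredient beyond doubling is the uniqueness of $\Small{s'}(x)$ (Theorem~\ref{thm:small-side-cut}), which guarantees that the cut reported at the first successful scale coincides with $\Small{s}(x)$ rather than some other unrelated small cut at a smaller scale.
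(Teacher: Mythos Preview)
Your proposal is correct and takes essentially the same approach as the paper: amplify Theorem~\ref{thm:Forstertal} by $O(\log n)$ repetitions and run a doubling search over scales $s' = 1, 2, 4, \ldots$ up to $s$. Your write-up is simply more explicit than the paper's two-sentence proof, in particular in invoking Theorem~\ref{thm:small-side-cut} to justify that the cut found at the first successful scale is indeed $\Small{s}(x)$.
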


\begin{proof}
High probability bounds can be accomplished by 
repeating the algorithm of Theorem~\ref{thm:Forstertal}
$O(\log n)$ times.  Use a doubling search to find
$\Small{s_0}(x)$ for each $s_0 = 2^i$, 
$i\leq \ceil{\log s}$.  When $s_0 \geq |\Side{\Small{s}(x)}(x)|$ the procedure will succeed w.h.p.
\end{proof}

We call the procedure of Corollary~\ref{cor:Forsteretal} $\FindSmall(x, s)$.

\begin{definition}
  Let $U$ be a cut, $A$ a side of $U$.  We use the notation 
  $\overline{A}=G\backslash (U\cup A)$ to be the region of all other sides of $U$.
  Define $G(U, \overline{A})$ to be the graph induced by $U\cup \overline{A}$,
  supplemented with a $\kappa$-clique on $U$.
If $W$ is a cut in $G(U,\overline{A})$, define $\Side{W}^{G(U,\overline{A})}(x)$ to be $\Side{W}(x)$ in the graph $G(U,\overline{A})$.
\end{definition}

\begin{lemma}\label{lemma:laminar-in-region}
  Let $U$ be a $\kappa$-cut, $A$ be a side of $U$, and $W$ be a 
  set of $\kappa$ vertices in $G(U,\overline{A})$.
  Then $W$ is a $\kappa$-cut in $G(U, \overline{A})$
  if and only if $W$ is a laminar cut of $U$ in one of the sides of $\overline{A}$.
  Moreover, when $W$ is such a cut, for any vertex $u\in U\backslash W$,
  \[
  \Side{W}(u) = \Side{W}^{G(U, \overline{A})}(u) \cup A.
  \]
\end{lemma}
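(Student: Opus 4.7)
The plan is to prove the biconditional by handling each direction separately, then deduce the identity via a rerouting argument that exploits the clique edges added on $U$.

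First, for the forward direction, I would assume $W$ is a laminar cut of $U$ in some side $A_i\subseteq\overline{A}$, so $W\subseteq U\cup A_i$ is a $\kappa$-cut of $G$. The laminar case of Theorem~\ref{thm1} yields a side $C_1\subseteq A_i$ of $W$ in $G$. Since $C_1\subseteq A_i\subseteq\overline{A}$ and $G(U,\overline{A})$ differs from $G|_{U\cup\overline{A}}$ only in the clique added on $U$, no new edges touch $C_1$, so $C_1$ remains a component of $G(U,\overline{A})\backslash W$; as $U\backslash W$ (or, in the degenerate case $W=U$, a second side of $U$ inside $\overline{A}$) lies outside $C_1$, $W$ is indeed a cut of $G(U,\overline{A})$.

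Next, for the reverse direction, I would assume $W$ is a cut of $G(U,\overline{A})$ with $|W|=\kappa$. The clique on $U$ confines $U\backslash W$ to a single component of $G(U,\overline{A})\backslash W$, so some other component $C_1$ lies inside $\overline{A}\backslash W$; because different sides of $U$ inside $\overline{A}$ are disconnected in $G|_{\overline{A}}$, $C_1$ must lie in exactly one side $A_i$. Since $C_1$ has no neighbor in $U\backslash W$, its only neighbors in $U$ lie in $W\cap U$, which, together with $W\cap A_i$ blocking escape inside $A_i$, shows $W$ separates $C_1$ from $V\backslash (C_1\cup W)$ in $G$. Thus $W$ is a minimum cut of $G$. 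Finally, $W':=W\cap (U\cup A_i)$ also separates $C_1$ from the rest of $G$, so $|W'|\geq\kappa$ by minimality, forcing $W\subseteq U\cup A_i$.

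For the identity, I fix $u\in U\backslash W$ and set $S=\Side{W}(u)$, $S'=\Side{W}^{G(U,\overline{A})}(u)$. Lemma~\ref{lem1} provides for every $a\in A$ an $a$-$u$ path with internal vertices in $A$, which avoids $W$ since $W\cap A=\emptyset$; hence $A\subseteq S$. To show $S\backslash A\subseteq S'$, I take a $G\backslash W$-walk from any $v\in S\backslash A$ to $u$ and replace every maximal excursion into $A$ (entering at some $v_{j-1}\in U\backslash W$ and exiting at some $v_{l+1}\in U\backslash W$) by the clique edge $\{v_{j-1},v_{l+1}\}$, producing a walk in $G(U,\overline{A})\backslash W$. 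Conversely, every clique edge $\{u_1,u_2\}$ on a $G(U,\overline{A})\backslash W$-walk can be swapped for a $u_1$-$u_2$ path inside the connected set $S$, since $u_1,u_2\in U\backslash W\subseteq S$; this gives $S'\subseteq S$. As $A$ and $S'\subseteq U\cup\overline{A}$ are disjoint, combining the containments yields $S=S'\cup A$.

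The main delicate point will be the identity, whose two containments hinge on the clique in $G(U,\overline{A})$ faithfully simulating the connectivity that $G$ achieves through $A$. The remainder of the proof reduces to standard applications of Theorem~\ref{thm1}, Lemma~\ref{lem1}, and the minimality of $\kappa$-cuts in $G$.
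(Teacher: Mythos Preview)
Your proof is correct and follows essentially the same strategy as the paper: use the clique on $U$ to keep $U\backslash W$ in one component, identify a component $C_1\subseteq\overline{A}$ that persists in both graphs, and invoke Lemma~\ref{lem1} to attach $A$ to the side containing $U\backslash W$. Two minor differences are worth noting: the paper obtains the identity $\Side{W}(u)=\Side{W}^{G(U,\overline{A})}(u)\cup A$ as an immediate byproduct of its reverse-direction argument (the sets $B$ and $D=B\cup A$ are constructed there), whereas you prove it independently via an explicit path-rerouting argument; and you explicitly verify $W\subseteq U\cup A_i$ using the minimality of $\kappa$-cuts, a step the paper leaves implicit.
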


\begin{proof}
  A laminar cut $W$ of $U$ in one of the sides of $\overline{A}$
  does not disconnect vertices of $U$, so it is a cut in $G(U, \overline{A})$. When $W$ is a cut in $G(U, \overline{A})$, all vertices of $U\backslash W$ should be in one side $B$ of $W$. 
  Let the other sides of $W$ form a region $C$. 
  Then in $G$, by Lemma~\ref{lem1}, there exists paths from vertices in $A$ to $U\backslash W$ that are not blocked by $W$, so vertices in $A$ and $B$ together form a side $D$ of $W$. But any path from $C$ to $A$ should pass through some vertex in $U$, while any path from $C$ to $B$ (involving $U\backslash W$) should pass through $W$, 
  so $W$ is a cut in $G$. This proves the first statement.
  
  Note $B$ here is exactly $\Side{W}^{G(U, \overline{A})(u)}$, and $D$ is exactly $\Side{W}(u)$, and we have that $D=B\cup A$. This proves the second statement.
\end{proof}

Lemma~\ref{lem:expand} shows how, beginning with a cut $X$ 
where $\Side{X}(u)$ is small, can find another cut $Y$ 
(if one exists) where $\Side{Y}(u)$ is about $M$, 
in $\tilde{O}(M\kappa^4)$ time.  The difficulty is that there could 
be an unbounded number of cuts ``between'' $X$ and $Y$ that would prevent
the algorithm of Theorem~\ref{thm:Forstertal} from finding $Y$ directly.

\begin{lemma}\label{lem:expand}
  Fix any integer $M\leq t/2$, vertex $u$, and
  cut $X$ with $A = \Side{X}(u), |A| \leq 2M$.
  There exists an algorithm $\Expand(u,A,M)$ that runs in time 
  $\tilde{O}(M \kappa^4)$ and, w.h.p., 
  returns a cut $Y$ satisfying the following properties.
\begin{itemize}
    \item $\Side{X}(u)\subseteq \Side{Y}(u)$.
    \item $\abs{\Side{Y}(u)}\leq 2M$.
    \item If there exists a cut $Z$ that is (\Rmnum{3}, $M$)-small 
    w.r.t.~$u$, then $\abs{\Side{Y}(u)}\geq \abs{\Side{Z}(u)}$.
\end{itemize}
\end{lemma}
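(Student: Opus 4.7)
The plan is to grow the $u$-side of the current cut monotonically, starting from $Y := X$ with $A_{\mathrm{cur}} := \Side{Y}(u) = A$, and stopping when no further expansion of size at most $2M$ is possible. The key subroutine builds an auxiliary graph $G'$ from $G$ by (i) contracting $A_{\mathrm{cur}}$ to a single vertex $s$, and (ii) installing a clique on the current cut $Y$. The clique ensures that $Y$ is no longer a cut in $G'$. Moreover, since every vertex of $Y$ lies on the boundary of $A_{\mathrm{cur}}$ in $G$ (every vertex of a minimum cut has neighbors on both sides, cf.~Lemma~\ref{lem1}), each vertex of $Y$ is adjacent to $s$ in $G'$. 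Hence any $\kappa$-cut $Y'$ of $G'$ with $s \notin Y'$ must have the connected set $Y\backslash Y'$ entirely on the $s$-side of $Y'$, so back in $G$ we obtain $A_{\mathrm{cur}} \cup (Y\backslash Y') \subseteq \Side{Y'}(u)$---a strict expansion whenever $Y' \neq Y$. Each round invokes $\FindSmall(s,\,2M - |A_{\mathrm{cur}}| + 1)$ on $G'$; any returned cut replaces $Y$, and a $\perp$ result terminates the loop.

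Invariants (i) $\Side{X}(u) \subseteq \Side{Y}(u)$ and (ii) $|\Side{Y}(u)| \leq 2M$ are maintained by construction. The technical core is property~(iii). Assume for contradiction that at termination there is a $(\Rmnum{3}, M)$-small cut $Z$ with $|\Side{Z}(u)| > |\Side{Y}(u)|$. Theorem~\ref{thm:small-side-cut} supplies $\Side{\Small{M}(u)}(u) \subseteq \Side{Y}(u)\cap \Side{Z}(u)$, so the two sides intersect, and Theorem~\ref{thm1} classifies the pair $(Y,Z)$. In the laminar case one splits on which side of $Y$ contains $Z$: if $Z \subseteq Y \cup \Side{Y}(u)$ then $\Side{Z}(u) \subseteq \Side{Y}(u)$, contradicting the strict size inequality; otherwise $Z \subseteq Y \cup (V\backslash(Y\cup\Side{Y}(u)))$, so the big side of $Z$ contains $\Side{Y}(u)$, making $Z$ itself a further expansion that the subroutine should have produced. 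The wheel, crossing-matching, and residual small subtypes are handled analogously: in each case one assembles, via the intersection or union rule of Lemma~\ref{lem2}, a $\kappa$-cut built from $Y$ and $Z$ whose $u$-side properly contains $\Side{Y}(u)$ and still has size at most $2M$, contradicting termination. The hypotheses $M \leq t/2$ and $n > 4\kappa$ ensure the far region has enough vertices for the union rule to apply.

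For the running time, $|\Side{Y}(u)|$ strictly increases each round, so there are at most $2M$ iterations. Using the doubling variant of Corollary~\ref{cor:Forsteretal}, each successful $\FindSmall$ call costs $\tilde{O}(|\Delta_i|\,\kappa^3)$ where $\Delta_i$ is the iteration's gain in $|\Side{Y}(u)|$, and these telescope to $\tilde{O}(M\kappa^3)$. The single final $\perp$-call costs another $\tilde{O}(M\kappa^3)$, and incrementally maintaining the contraction $s$, the clique on $Y$, and the affected boundary edges across rounds contributes an amortized $O(\kappa)$-factor overhead, yielding the claimed $\tilde{O}(M\kappa^4)$ total. The principal obstacle is property~(iii): the case analysis through Theorem~\ref{thm1} must be pushed through for every relationship type---in particular the laminar case where $\Side{Y}(u)$ and $\Side{Z}(u)$ need not be comparable---and one must repeatedly invoke Theorem~\ref{thm:small-side-cut} together with Lemma~\ref{lem2} to certify that any would-be counterexample $Z$ either \emph{is} a further expansion or can be converted into one, contradicting the termination of the $\FindSmall$ loop.
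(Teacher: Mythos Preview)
There is a concrete bug in your auxiliary-graph construction. In your $G'$ (contract $A_{\mathrm{cur}}=\Side{Y}(u)$ to $s$, add a clique on $Y$), the set $Y$ is \emph{still} a $\kappa$-cut: the neighborhood of $s$ in $G'$ is exactly $Y$ (every vertex of a minimum cut has a neighbor in each side, so after contraction each $y\in Y$ is adjacent to $s$ and to nothing else new), hence removing $Y$ isolates $s$. The clique edges live inside $Y$ and are deleted along with $Y$; they do nothing to prevent $Y$ from being a cut. Consequently $\Small{}^{G'}(s)$ has $s$-side $\{s\}$, realized uniquely by $Y$, and $\FindSmall(s,\cdot)$ returns $Y$ in every iteration. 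Your loop never produces a strict expansion (and if you add a ``$Y'\neq Y$'' check it halts immediately with $Y=X$, so property~(iii) cannot be argued). Your claim ``the clique ensures that $Y$ is no longer a cut in $G'$'' is simply false.

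The paper's algorithm avoids this by working in the \emph{outside} graph $G(Y,\overline{\Side{Y}(u)})$---that is, $\Side{Y}(u)$ is \emph{deleted}, not contracted---and by launching $\FindSmall(v,M)$ in parallel from each $v\in Y$ rather than from a contracted source. In that graph $Y$ cannot separate $v$ from anything (since $v\in Y$), and Lemma~\ref{lemma:laminar-in-region} certifies that any $\kappa$-cut found there is a laminar cut of $Y$ on the far side in $G$, hence a genuine strict expansion with $|\Side{W}(u)|\le |\Side{Y}(u)|+M<2M$. The paper's proof of property~(iii) is also simpler than the Theorem~\ref{thm1} case split you sketch: it applies Corollary~\ref{cor:CD} with $C=V\setminus(\Side{Y}(u)\cup\Side{Z}(u)\cup Y\cup Z)$ and $D=\{u\}$ to obtain, in one stroke, a laminar cut $U\neq Y$ with $|\Side{U}^{G(Y,\overline{\Side{Y}(u)})}(v)|\le M$ for some $v\in Y\setminus U$, which one of the parallel $\FindSmall$ calls must therefore discover---no wheel or crossing-matching subcases are needed.
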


\begin{proof}
  This algorithm uses Corollary~\ref{cor:Forsteretal}
  and Lemma~\ref{lemma:laminar-in-region}.
  \begin{enumerate}
      \item Initially $Y\gets X$. While $\abs{\Side{Y}(u)} < M$,
      \begin{enumerate}
        \item For each vertex $v\in Y$, in parallel,
              \begin{enumerate}
                  \item In the graph 
                  $G(Y, \overline{\Side{Y}(u)})$, 
                  run $\FindSmall(v, M)$.
              \end{enumerate}
          \item The moment any call to $\FindSmall$ halts in step 
          (i) with a cut $W$, stop all such calls and set $Y\gets W$.  If all $|Y|$ calls to $\FindSmall$ run to completion without finding a cut, halt and return $Y$.
      \end{enumerate}
  \end{enumerate}

  Throughout the algorithm, $Y$ is always a cut such that $\Side{X}(u)\subseteq \Side{Y}(u)$ and $\abs{\Side{Y}(u)}\leq 2M$. Furthermore, $\abs{\Side{Y}(u)}$ is strictly increasing, 
  so the algorithm terminates. 
  By Corollary~\ref{cor:Forsteretal} the running time 
  of (a) is $\tilde{O}(\Delta\kappa^4)$ where 
  $\Delta = |\Side{W}(u)| - |\Side{Y}(u)|$ if a cut $W$
  is found, and $\Delta=M$ otherwise.  (The extra factor $\kappa$
  is due to the parallel search in step (a).)
  The sum of the $\Delta$s telescopes to $O(M)$, so 
  the overall running time is $\tilde{O}(M\kappa^4)$.

  Suppose the cut $Z$ exists.  If it were not the case that $\abs{\Side{Y}(u)} \geq \abs{\Side{Z}(u)}$, then in the last iteration,
  $\abs{\Side{Y}(u)} < \abs{\Side{Z}(u)}\leq M$.
  We argue below that there is another cut to find, and therefore 
  that the probability the algorithm terminates prematurely 
  is $1/\poly(n)$, by Corollary~\ref{cor:Forsteretal}.

  Note that $\abs{\Side{Y}(u)} + \abs{\Side{Z}(u)} \leq 2M \leq t$, so $C\bydef G\backslash (\Side{Y}(u)\cup \Side{Z}(u)\cup Y\cup Z)\neq \emptyset$. Apply Corollary~\ref{cor:CD} to sets $C$ and $D=\{u\}$, 
  we may set $U=\MinCut{C; D}$. Because $\Region{U}(C)\subseteq \Region{Y}(C)\cap \Region{Z}(C)=C$, $C$ is actually a region of $U$, so $U$ is a laminar cut of $Y$, $\Side{U}(u)\subseteq \Side{Y}(u)\cup \Side{Z}(u)$. 
  Since $\abs{\Side{Y}(u)}<\abs{\Side{Z}(u)}$, $U\neq Y$. 
  Then for any vertex $v\in Y\backslash U$, 
  \[
  0<\abs{\Side{U}^{G(Y, \overline{\Side{Y}(u)})}(v)}\leq \abs{\Side{U}(u)}-\abs{\Side{Y}(u)}\leq \abs{\Side{Z}(u)}\leq M.
  \]
  Therefore, $U$ or some other cut should have been found in step (a)
  in the last iteration.
\end{proof}

\begin{theorem}[Consequence of Picard and Queyrenne~\cite{PicardQ80}]\label{thm:Picard-Queyrenne}
Let $H=(V,E,\operatorname{cap})$ be a capacitated 
$s$-$t$ flow network and $f$ be a maximum flow.
In $O(|E|)$ time we can compute a directed acyclic 
graph $H'=(V',E')$ with $|E'|\leq |E|$, 
and an embedding $\phi : V\to V'\cup\{\perp\}$, 
such that the downward closed sets of $H'$ 
are in 1-1 correspondence with the minimum $s$-$t$ cuts of $H$.
I.e., if $V''\subseteq V'$ is a vertex set with no arc
of $E'$ leaving $V''$, then 
$(\phi^{-1}(V''), \overline{\phi^{-1}(V'')})$ 
is a min $s$-$t$ cut in $H$, and all min $s$-$t$ cuts in $H$ 
can be expressed in this way.  Here $\phi(v)=\perp$
if $v$ appears on the ``$t$'' side of every min $s$-$t$ cut.
\end{theorem}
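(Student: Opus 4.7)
The plan is to execute the classical Picard--Queyrenne construction from the residual graph of $f$. First I would build, in $O(|E|)$ time, the residual graph $G_f=(V,E_f)$: each arc $(u,v)\in E$ with $f(u,v)<\operatorname{cap}(u,v)$ contributes a forward residual arc $u\to v$, and each $(u,v)$ with $f(u,v)>0$ contributes a backward residual arc $v\to u$. Next, I would compute the strongly connected components of $G_f$ in linear time (Tarjan/Kosaraju), obtaining a partition $\mathcal{S}$ and a DAG $D$ whose vertices are the SCCs and whose arcs are the inter-SCC residual arcs.

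To form $H'$ and $\phi$, let $\mathcal{R}\subseteq\mathcal{S}$ be the set of SCCs from which $t$ is reachable in $G_f$, identifiable by a single backward BFS from $t$'s SCC in $D$. Set $V'=\mathcal{S}\setminus\mathcal{R}$, take $E'$ to be the arcs of $D$ with both endpoints in $V'$ (deduplicated), and define $\phi(v)=C$ if $v$ lies in an SCC $C\in V'$ and $\phi(v)=\perp$ otherwise. The bound $|E'|\leq|E|$ follows from the standard observation that each original edge $(u,v)\in E$ contributes at most one inter-SCC arc: if $0<f(u,v)<\operatorname{cap}(u,v)$, both residual directions $u\to v$ and $v\to u$ are present, so $u,v$ fall into a common SCC and contribute nothing to $E'$; otherwise only one residual direction exists, yielding at most one contribution.

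For correctness I would invoke the sharp form of the max-flow/min-cut theorem: $(S,\overline{S})$ with $s\in S$ and $t\in\overline{S}$ is a min $s$-$t$ cut iff every arc from $S$ to $\overline{S}$ is saturated by $f$ and every arc from $\overline{S}$ to $S$ carries zero flow, equivalently, \emph{no residual arc of $G_f$ leaves $S$}. Any such $S$ must be a union of complete SCCs (otherwise an internal cycle of a split SCC would escape $S$), and must exclude every SCC in $\mathcal{R}$ (else a residual path from $S$ would reach $t\in\overline{S}$). Hence $V''=\phi(S)\subseteq V'$ is a subset with no outgoing arc in $E'$; conversely, any such $V''$ gives a cut $\phi^{-1}(V'')$ with no outgoing residual arcs, which by the sharp duality is a min $s$-$t$ cut. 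The identification $\phi(v)=\perp$ with ``$v$ is on the $t$-side of every min cut'' is exactly the statement that $v$ has a residual path to $t$ in $G_f$, i.e., $v$ lies in some SCC of $\mathcal{R}$.

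The main obstacle is purely bookkeeping. One must verify the $|E'|\leq|E|$ bound by the per-original-edge case analysis above, confirm that the SCC $S_s$ containing $s$ lies in $V'$ rather than in $\mathcal{R}$ (which holds because otherwise $G_f$ would contain an $s$-$t$ path, contradicting maximality of $f$), and hence that $s\in\phi^{-1}(V'')$ for every $V''$ that arises in the correspondence. The overall $O(|E|)$ running time is immediate from linear-time SCC decomposition and a single BFS.
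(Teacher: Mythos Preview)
The paper does not supply its own proof of this statement; it is quoted as a consequence of Picard and Queyrenne and used as a black box. Your reconstruction via the residual graph, SCC condensation, and removal of the SCCs that can reach $t$ is the standard Picard--Queyrenne argument, and your per-original-edge accounting for $|E'|\le |E|$ is correct.

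One point to tighten. In the reverse direction you assert that ``any such $V''$ gives a cut $\phi^{-1}(V'')$ with no outgoing residual arcs, which by the sharp duality is a min $s$--$t$ cut.'' Sharp duality also requires $s\in\phi^{-1}(V'')$, and with your $H'$ this is not automatic: $V''=\emptyset$ is downward closed, and more generally if $G_f$ contains an SCC that can neither reach $t$ nor $S_s$, then its downward closure in $H'$ need not contain $S_s$. Your closing remark only certifies $S_s\in V'$, not $S_s\in V''$ for every downward-closed $V''$. The usual fix is to additionally contract all SCCs reachable from $S_s$ in $G_f$ into a single sink node of $H'$ and take the bijection with the \emph{nonempty} downward-closed sets (equivalently, restrict to $V''\ni S_s$). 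This is a bookkeeping wrinkle, arguably already latent in how the theorem itself is phrased, and it does not affect the downstream application in the paper.
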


We use 
Corollary~\ref{cor:result-of-picard-queyrenne} to find $\Small{t}(u)$ for potentially many vertices 
$u$ in bulk.

\begin{corollary}\label{cor:result-of-picard-queyrenne}
Fix two disjoint, non-empty vertex sets $C$ and $D$.
In $O(\kappa^2(n-|C|-|D|))$ time, we can output a cut $S(v)$ for every $v\in V\backslash(C\cup D)$,
such that if $\Small{}(v)$ exists and 
$C\subseteq \Side{\Small{}(v)}(v)$, then $S(v)=\Small{}(v)$.
\end{corollary}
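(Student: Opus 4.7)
The plan is to encode $\Cuts{C;D}$ via a single auxiliary flow network and then read off, for each non-terminal $v$, the minimum such cut placing $v$ on the $C$-side by a short query in the Picard-Queyrenne DAG.

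First I would contract $C$ to a super-source $s$ and $D$ to a super-sink $t$, then perform the standard vertex-splitting construction: each $v\in V\setminus(C\cup D)$ becomes two vertices $v_{\text{in}},v_{\text{out}}$ joined by a unit-capacity arc, and every edge of the contracted underlying graph becomes a pair of infinite-capacity arcs in opposite directions. After a Nagamochi-Ibaraki style sparsification on the contracted multigraph (preserving all $\kappa'$-cuts for $\kappa'\le\kappa+1$), the resulting network $\vec G$ has $O(\kappa(n-|C|-|D|))$ arcs, and its min $s$-$t$ cuts biject with $\Cuts{C;D}$. I then compute a max $s$-$t$ flow by $\kappa$ rounds of BFS augmentation, aborting if the value would exceed $\kappa$; in that case no $\kappa$-cut in $\Cuts{C;D}$ exists, so outputting $S(v)=\perp$ everywhere trivially meets the guarantee. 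Otherwise, Theorem~\ref{thm:Picard-Queyrenne} yields in $O(m)$ time a DAG $H'$ and embedding $\phi$ whose closed subsets (those with no outgoing arcs) biject with the min $s$-$t$ cuts. For each non-terminal $v$ I would set $S(v)$ to be the cut associated with the smallest closed subset of $V(H')$ containing $\{\phi(v_{\text{in}}),\phi(v_{\text{out}})\}$, namely the forward reach of these two vertices.

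For correctness, suppose the premise holds, so $\Small{}(v)$ exists and $C\subseteq \Side{\Small{}(v)}(v)$. Then $v$ lies on the $C$-side of $\Small{}(v)$ and $\Small{}(v)\in \Cuts{C;D}$, hence $\Small{}(v)$ is among the min $s$-$t$ cuts encoded by $H'$. By Corollary~\ref{cor:CD} there is a unique cut in $\Cuts{C;D}$ whose $C$-side region is smallest under containment among those containing $v$, and by the definition of $\Small{}$ together with the uniqueness supplied by Theorem~\ref{thm:small-side-cut} this unique cut coincides with $\Small{}(v)$. Under the PQ bijection that minimizer corresponds exactly to the smallest closed subset of $V(H')$ containing $\phi(v_{\text{in}}),\phi(v_{\text{out}})$, which is precisely $S(v)$; hence $S(v)=\Small{}(v)$.

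The main obstacle is organizing the per-vertex extraction to run in $O(\kappa)$ amortized time so that the total cost matches the max-flow bound. I plan to topologically sort $H'$ once and propagate, in reverse topological order along each arc, compact representations (of size at most $\kappa$) of the unit-capacity arcs on the boundary of the forward reach of each DAG vertex, exploiting the fact that every closed subset has exactly $\kappa$ unit-capacity boundary arcs. Each per-vertex query then returns the $\kappa$ cut vertices of $S(v)$ in $O(\kappa)$ time. Combined with the $O(\kappa\cdot m)=O(\kappa^2(n-|C|-|D|))$ cost of max flow and the $O(m)$ cost of PQ, the overall running time is $O(\kappa^2(n-|C|-|D|))$.
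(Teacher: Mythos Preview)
Your proposal is correct and follows essentially the same approach as the paper: build the vertex-split flow network on the contracted graph, compute a max flow of value $\kappa$, invoke the Picard--Queyrenne representation, and define $S(v)$ as the cut corresponding to the minimal closed set containing the image of $v$. The paper relies on the global Nagamochi--Ibaraki sparsification already applied to $G$ rather than re-sparsifying after contraction, and it justifies correctness with a one-line ``clearly'' where you invoke Theorem~\ref{thm:small-side-cut}; your more explicit treatment of $\phi(v_{\text{in}}),\phi(v_{\text{out}})$ and the amortized boundary-propagation for per-vertex extraction fills in details the paper leaves implicit in its ``bottom-up traversal of $\vec{G}'$,'' but the underlying argument is the same.
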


\begin{proof}
Form a flow network $\vec{G}$ as in the proof of Corollary~\ref{cor:CD},
shrinking $C$ and $D$ to vertices $s$ and $t$, splitting each $v$ into $v_{\operatorname{in}},v_{\operatorname{out}}$, etc.
The minimum $s_{\operatorname{out}}$-$t_{\operatorname{in}}$ 
cuts in $\vec{G}$ are in 1-1 correspondence with the 
minimum vertex cuts separating $C,D$ in $G$.  We are only interested in these
cuts if they have size $\kappa$, so in this case a maximum flow $f$ can be computed in
$O(\kappa^2(n-|C|-|D|))$ time.  (Recall that the graph is assumed to have
arboricity $\kappa+1$, so every induced subgraph of $\vec{G}$ 
has density $O(\kappa)$.)

Given $\vec{G},f$, the Picard-Queyrenne representation $\vec{G}',\phi$ can be computed
in linear time.  For each $v\in G$, let $S(v)$ be the cut corresponding 
to the smallest downward-closed set containing 
$\phi(v)$ in $\vec{G}'$.
Thus, the set $\{S(v)\}_{v\in V\backslash(C\cup D)}$ can be enumerated 
by a bottom-up traversal of $\vec{G}'$ in 
$O(|E(\vec{G}')|)=O(\kappa(n-|C|-|D|))$ time.
If $C\subseteq \Side{\Small{}(v)}(v)$ then clearly $S(v)=\Small{}(v)$.
\end{proof}

\medskip

We are now ready to present the entire construction algorithm.

\paragraph{Preamble.} The algorithm maintains some $\kappa$-cut $T(u)$ for each $u$, which is initially $\perp$, and stores
$|\Side{T(u)}(u)|$.  If the algorithm makes no errors,
$T(u)=\Small{t}(u)=\Small{}(u)$ at the end of the computation.
The procedure $\Update(u,U)$ updates $T(u)\gets U$ if $U$ is a better cut, i.e., $|\Side{U}(u)| \leq \min\{|\Side{T(u)}(u)|-1,t\}$,
and does nothing otherwise.  $\Update(A,U)$ is short for $\Update(u,U)$ for all $u\in A$.

\paragraph{Step 1: very small cuts.} For each $u\in V$, 
let $U \gets \FindSmall (u, 100\kappa)$ and then $\Update (u,U)$.
This takes $\tilde{O}(n\kappa^4)$ time.

\paragraph{Step 2: unbalanced cuts.} For each index $i$
such that $100\kappa < 2^i \leq t$, let $\alpha=2^i$,
and pick a uniform sample 
$V_i\subset V$ of size $(n\log n)/\alpha$.\footnote{The Forster et al.~\cite{ForsterNYSY20} algorithm samples vertices 
proportional to their degree.  Note that after the Nagamochi-Ibaraki~\cite{NagamochiI92} sparsification,
the minimum degree is at least $\kappa$ and the density 
of every induced subgraph is at most $\kappa+1$, 
so it is equally effective to do vertex sampling.}
For each $u\in V_i$, compute $\Small{\alpha}(u) \gets \FindSmall(u,\alpha)$. 
If $\Small{\alpha}(u)=\Small{}(u)\neq \perp$, 
we first do an $\Update(\Side{\Small{}(u)}(u),\Small{}(u))$,
then compute $Y \gets \Expand(u,\Small{}(u),\alpha)$.
For each $v\in Y$ we compute $W_v \gets \FindSmall(w,\alpha)$
and then $\Update(v,W_v)$.  
We then run the algorithm of Corollary~\ref{cor:result-of-picard-queyrenne}
with $C=\Side{\Small{}(u)}(u)$ and $D=V\backslash (Y\cup \Side{Y}(u))$, which returns a set of cuts $\{S(v)\}_{v\in V\backslash(C\cup D)}$.  For each such $v\in \Side{Y}(u)\backslash \Side{\Small{}(u)}(u)$, do an $\Update(v,S(v))$.
For each index $i$, the running time is 
$\tilde{O}(|V_i|\cdot \alpha\kappa^4) = \tilde{O}(n\kappa^4)$,
which is $\tilde{O}(n\kappa^4)$ overall.

\paragraph{Step 3: balanced cuts.}
Sample $O(\log n)$ pairs $(x,y)\in V^2$.
For each such pair, compute $U\gets \FindSmall(x,t)$.
If $U\neq \perp$, apply the algorithm of Corollary~\ref{cor:result-of-picard-queyrenne}
to $C=\Side{\Small{}(x)}(x)$ and $D=\{y\}$,
which returns a set $\{S(v)\}$.  Then do
an $\Update(v,S(v))$ for every $v\in V\backslash (C\cup D)$.
By Corollary~\ref{cor:Forsteretal} 
this takes $\tilde{O}(\kappa^3 n)$ time.

\paragraph{Step 4: adjacent vertices.}
At this point it should be the case that 
$T(u) = \Small{}(u)$ for all $u$.  
For each $v\in T(u)\cap N(u)$ compute and 
set the bit $b_{u,v}$.
(This information can be extracted
from the calls to $\FindSmall$ and the 
algorithm of Corollary~\ref{cor:result-of-picard-queyrenne}
in the same time bounds.)

\ignore{
\begin{itemize}
    \item[(\Rmnum{1})]
    Maintain for every $u\in V$ a cut $T(u)$ and $\abs{\Side{T(u)}(u)}$, initialized as $\perp$ and $\infty$. As the algorithm terminates, output $\Small{}(u)\gets T(u)$.
    
    For vertex $u$ and cut $U$, operation $\Update(u, U)$ sets $T(u)\gets U$ if $\abs{\Side{U}(u)}\leq \min(\abs{\Side{T(u)}(u)}-1, t)$, and otherwise does nothing. For a set $A$, $\Update(A, U)$ does $Update(a, U)$ for all $a\in A$.
    
    \item[(\Rmnum{2})]
    For every $u\in V$, $\Update(u, \FindSmall(u, 100\kappa))$.
    \item[(\Rmnum{3})]
    For every $i\in \{\ceil{\log_2 \kappa}, \ceil{\log_2 \kappa} + 1, \ldots, \ceil{\log_2 t} \}$, do
    \begin{itemize}
        \item [(\rmnum{1})] randomly sample $O(\frac{n}{2^i}\log n)$ vertices, and for each $u$ among them,
        \begin{itemize}
            \item[(a)] Set $\alpha = \min(2^i, t)$. Run $\FindSmall(u, \alpha)$. If returned $\perp$, go back to (\rmnum{1}) for the next vertex. Otherwise, continue.
            \item[(b)] $\Update(\Side{\Small{}(u)}(u), \Small{\alpha}(u))$.
            \item[(c)] For every $v\in \Small{\alpha}(u)$, $\Update(v, \FindSmall(v, \alpha))$.
            \item[(d)] Run the algorithm of Corollary~\ref{cor:expand}, set $W\gets \Expand(u, \Small{\alpha}(u), 2^i)$.
            \item[(e)] For every $w\in W$, do $\Update(w, \FindSmall(w, \alpha))$.
            \item[(f)] Run the algorithm of Corollary~\ref{cor:result-of-picard-queyrenne} with setting $C=\Side{\Small{\alpha}(u)}(u)$ and $D=V\backslash (W\cup \Side{W}(u))$. Do $\Update(v, S(v))$ for every $v\in \Side{W}(u)\backslash \Side{\Small{}(u)}(u)$.
        \end{itemize}
    \end{itemize}
    \item [(\Rmnum{4})] Randomly sample $O(\log n)$ pairs of vertices $(x_i, y_i)$, do
    \begin{itemize}
        \item [(\rmnum{1})] $\FindSmall(x_i, t)$, if returned $\perp$, go back to (\Rmnum{4}) for the next pair. Otherwise, continue.
        \item [(\rmnum{2})] Run the algorithm of Corollary~\ref{cor:result-of-picard-queyrenne} with setting $C=\Side{\Small{}(x_i)}(x_i)$ and $D=\{y_i\}$. Do $\Update(v, S(v))$ for every $v\in V\backslash(C\cup D)$.
    \end{itemize}
    \item [(\Rmnum{5})] For every vertex $u$, calculate and store $b_{u, v}$ for every vertex $v\in T(u)\cap N(u)$.
\end{itemize}
}

Lemma~\ref{lem:cut-capture} 
is critical to proving the correctness
of the algorithm's search strategy.

\begin{lemma}\label{lem:cut-capture}
Suppose $\Small{}(u)$ and $\Small{}(v)$ exists, $u\in \Side{\Small{}(v)}(v)$, and 
suppose there is a cut $W$ such that 
\[
\kappa \leq \abs{\Side{\Small{}(v)}(v)} < \abs{\Side{W}(u)} \leq t.
\]
Then $v\in W\cup \Side{W}(u)$.
\end{lemma}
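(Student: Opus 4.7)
The plan is to argue by contradiction. Suppose $v\notin W\cup \Side{W}(u)$, so that $W$ separates $u$ and $v$ into distinct sides $B := \Side{W}(u)$ and $P := \Side{W}(v)$, with $B\cap P = \emptyset$. Setting $A := \Side{\Small{}(v)}(v)$, the hypothesis $u\in \Side{\Small{}(v)}(v)$ yields $u\in A\cap B$ and $v\in A\cap P$.

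The key step is to apply Lemma~\ref{lem2} (the intersection rule) to the cuts $\Small{}(v)$ and $W$, using the regions $A$ and $P$, both of which contain $v$. This produces a set
\[
X := (\Small{}(v)\cap P)\cup (\Small{}(v)\cap W)\cup (W\cap A)
\]
of size at most $\kappa$ that disconnects $A\cap P$ from the rest of $G$. Since $v\notin \Small{}(v)$ and $v\notin W$, we have $v\notin X$, so $\Side{X}(v)\subseteq A\cap P\subseteq A$ and $\abs{\Side{X}(v)}\leq \abs{A}\leq t$. Thus $X$ is a genuine cut (indeed a $\kappa$-cut by the $\kappa$-connectivity of $G$) that is $(\Rmnum{3},t)$-small with respect to $v$.

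Theorem~\ref{thm:small-side-cut} then asserts that $\Small{}(v)$ is the unique minimizer among such cuts, so $A = \Side{\Small{}(v)}(v) \subseteq \Side{X}(v) \subseteq A\cap P$, which forces $A\subseteq P$. But $u\in A\subseteq P$ together with $u\in B$ contradicts $B\cap P = \emptyset$, completing the proof.

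The only place requiring thought is the choice of regions to feed into the intersection rule. The naive attempt of intersecting the sides of $\Small{}(v)$ and $W$ that contain $u$ produces a cut whose small side $A\cap B$ does not interact usefully with the uniqueness of $\Small{}(v)$; the right move is to intersect the sides containing $v$, so that the derived cut is small with respect to $v$ and Theorem~\ref{thm:small-side-cut} can be brought to bear. Note that the strict inequality $\abs{\Side{\Small{}(v)}(v)} < \abs{\Side{W}(u)}$ in the hypothesis is not even used in the argument; only $\abs{A}\leq t$ is needed.
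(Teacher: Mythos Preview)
Your argument has a real gap at the step ``a set $X\ldots$ of size at most $\kappa$.'' The intersection rule (Lemma~\ref{lem2}) only tells you that $X=(\Small{}(v)\cap P)\cup(\Small{}(v)\cap W)\cup(W\cap A)$ separates $A\cap P$ from the rest of $G$; it says nothing about $|X|$. To invoke Theorem~\ref{thm:small-side-cut} you need $X$ to be a $\kappa$-cut, and that is not automatic. The standard way to bound $|X|$ is to pair with the union rule: the companion separator $Y=(\Small{}(v)\setminus P)\cup(W\setminus A)$ satisfies $|X|+|Y|=2\kappa$, so $|X|\le\kappa$ \emph{provided} $Y$ is a genuine cut, i.e.\ provided $V\setminus(\Small{}(v)\cup A\cup W\cup P)\neq\emptyset$. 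But this can fail. Concretely, if $\Small{}(v)$ and $W$ are in the crossing-matching configuration of Theorem~\ref{thm1} with the empty quadrant being $A'\cap B$ (where $A'$ is the other side of $\Small{}(v)$), then $V=\Small{}(v)\cup A\cup W\cup P$, and in the notation of Theorem~\ref{thm1} one gets $|X|=|U_1|+|T|+|W_2|=2|U_1|+|T|=\kappa+(|U_1|-|U_2|)$, which exceeds $\kappa$ whenever $|U_1|>|U_2|$.

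Your closing remark that the strict inequality $|\Side{\Small{}(v)}(v)|<|\Side{W}(u)|$ is unused is precisely the symptom of this gap: that inequality is exactly what rules out the bad configuration above. In that crossing-matching case one computes $|\Side{W}(u)|-|\Side{\Small{}(v)}(v)|=|U_2|-|W_2|-|A\cap P|<0$, contradicting the hypothesis. The paper's proof proceeds by this kind of case analysis (laminar/wheel/crossing matching via Theorem~\ref{thm1}) rather than by a single intersection-rule shot; your approach could be made to work, but only after you supply the missing argument that $|X|\le\kappa$, and that argument will necessarily use the strict inequality.
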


\begin{proof}
  Because $u\in \Side{\Small{}(v)}(v)$, by Lemma~\ref{lemma:minimal-cuts-laminar}, $\Small{}(u)\neq \perp$ is a laminar cut of $\Small{}(v)$ and 
  $\Side{\Small{}(u)}(u)\subseteq \Side{\Small{}(v)}(u)$.
  
  Consider the relationship between $W$ and $U=\Small{}(v)$. 
  We use the same notation $U_*$, $W_*$, $A_*$, $B_*$ and $T$ 
  from Theorem~\ref{thm1}. 
  Neither $U$ nor $W$ is (\Rmnum{1}, $\kappa-1$)-small, 
  so they may have laminar, wheel, or crossing matching type relation. 
  
  If they have laminar type relation, then 
  $u\in \Side{W}(u)\cap \Side{U}(u)$, so either $\Side{W}(u)$ is $A_{i^*}$ or $\Side{U}(u)$ is $B_{j^*}$, 
  but they cannot be both true as otherwise
  \[\abs{V}=n>2\ceil{\frac{n-\kappa}{2}}+\kappa - 2\geq \abs{A_{i^*}}+\abs{B_{j^*}}-\abs{A_{i^*}\cap B_{j^*}} + \abs{T}=\abs{V}.\]
  If $\Side{W}(u)$ is $A_{i^*}$ then $v\in \Side{W}(u)$. If $\Side{U}(u)$ is $B_{j^*}$ then $\abs{\Side{W}(u)}<\abs{\Side{\Small{}(v)}(u)}$, 
  which contradicts the definition of $W$.
  
  If they have wheel type relation, then $v$ cannot lie in a sector of the $4$-wheel formed by $W$ and $U$, 
  as this would violate the minimality of $\Small{}(v)$.
  Therefore $v \in W$.
  
  If they have crossing matching type relation, suppose $A_1\cap B_1\neq \emptyset$, $A_2\cap B_2\neq \emptyset$ and $A_1\cap B_2=\emptyset$. Then $v\in W$ (and the proof is done) 
  or $v\in A_2\cap B_1$, because for $v\in A_1\cap B_1$, $W_1\cup T\cup U_1$ is a cut with side smaller than $\Small{}(v)$, and for $v\in A_2\cap B_2$, $U_2\cup T\cup W_2$ is a cut with a side smaller than $\Small{}(v)$.
  
  If $v\in A_2\cap B_1$, then by Theorem~\ref{thm1}, $\abs{W_2}\geq \abs{U_2}$.
  Also, $A_2$ is $\Side{\Small{}(v)}(v)$ so $u\in A_2$. Thus
  \[\abs{\Side{W}(u)}-\abs{\Side{U}(v)}=\abs{B_2}-\abs{A_2} =\abs{U_2}-\abs{W_2}-\abs{A_2\cap B_1}<0,\]
  which contradicts the condition that $\abs{\Side{W}(u)}>\abs{\Side{\Small{}(v)}(v)}$.
\end{proof}

\begin{theorem}
The construction algorithm correctly computes $\{\Small{}(v)\}_{v\in V}$ and runs in time $\tilde{O}(n\kappa^4)$.
\end{theorem}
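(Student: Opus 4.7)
I would prove correctness case-by-case on $s(v):=|\Side{\Small{}(v)}(v)|$ and account for the running time scale-by-scale. Partition every vertex $v$ with $\Small{}(v)\neq\perp$ into (a) $s(v)\leq 100\kappa$, (b) $100\kappa<s(v)\leq t/2$, and (c) $t/2<s(v)\leq t$. Case (a) is immediate: $\FindSmall(v,100\kappa)$ in Step~1 returns $\Small{}(v)$ w.h.p.~by Corollary~\ref{cor:Forsteretal}, and the accompanying $\Update$ writes it into $T(v)$.

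For case~(b), fix the scale $\alpha=2^i$ with $s(v)\leq\alpha<2s(v)$, which lies in $(100\kappa,t]$. A Chernoff bound shows that the uniform sample $V_i$ of size $(n\log n)/\alpha$ hits $\Side{\Small{}(v)}(v)$ w.h.p.; let $u$ be such a sampled vertex. By Lemma~\ref{lemma:minimal-cuts-laminar}, $|\Side{\Small{}(u)}(u)|\leq s(v)\leq\alpha$, so $\FindSmall(u,\alpha)$ recovers $\Small{}(u)$. The subsequent $\Expand(u,\Small{}(u),\alpha)$ returns a cut $Y$ with $|\Side{Y}(u)|\leq 2\alpha$ and $|\Side{Y}(u)|\geq|\Side{Z}(u)|$ for every $(\Rmnum{3},\alpha)$-small $Z$ with $u$ in its small side; since $\Small{}(v)$ is such a $Z$, $|\Side{Y}(u)|\geq s(v)$. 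Lemma~\ref{lem:cut-capture} applied to $W=Y$ (with the edge case $|\Side{Y}(u)|=s(v)$ absorbed via the uniqueness of Theorem~\ref{thm:small-side-cut}) then forces $v\in Y\cup\Side{Y}(u)$. If $v\in Y$, one of the follow-up $\FindSmall(v,\alpha)$ calls in Step~2 succeeds directly. Otherwise $v\in\Side{Y}(u)\setminus\Side{\Small{}(u)}(u)$; combined with $\Side{\Small{}(u)}(u)\subseteq\Side{\Small{}(v)}(v)$ from Lemma~\ref{lemma:minimal-cuts-laminar}, Corollary~\ref{cor:result-of-picard-queyrenne} with $C=\Side{\Small{}(u)}(u)$ and $D=V\setminus(Y\cup\Side{Y}(u))$ outputs $S(v)=\Small{}(v)$, which $\Update$ then installs.

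For case~(c), both sides of $\Small{}(v)$ have size $\geq t/2$, so a uniform pair $(x,y)\in V^2$ straddles $\Small{}(v)$ with constant probability; combining $O(\log n)$ independent trials with a union bound over the at-most-$n$ distinct cuts $\{\Small{}(v)\}$ ensures that for every such $v$ some sampled pair has $x\in\Side{\Small{}(v)}(v)$ and $y\notin\Side{\Small{}(v)}(v)\cup\Small{}(v)$. Then $s(x)\leq s(v)\leq t$ by Lemma~\ref{lemma:minimal-cuts-laminar}, so $\FindSmall(x,t)$ finds $\Small{}(x)$, and since $\Side{\Small{}(x)}(x)\subseteq\Side{\Small{}(v)}(v)$, Corollary~\ref{cor:result-of-picard-queyrenne} yields $S(v)=\Small{}(v)$. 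Step~4 is then a mechanical sweep in $O(m)=O(n\kappa)$ time.

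For the running time: Step~1 makes $n$ calls to $\FindSmall(\cdot,100\kappa)$ at $\tilde{O}(\kappa^4)$ each; per scale $\alpha$ in Step~2, each of the $(n\log n)/\alpha$ sampled vertices triggers one $\FindSmall$, one $\Expand$, $O(\kappa)$ follow-up $\FindSmall$s on the vertices of $Y$, and one Picard--Queyrenne run on a flow network of $O(\alpha\kappa)$ edges (using the arboricity bound from Nagamochi--Ibaraki), for a dominant per-sample cost of $\tilde{O}(\alpha\kappa^4)$, hence $\tilde{O}(n\kappa^4)$ per scale and $\tilde{O}(n\kappa^4)$ across $O(\log n)$ scales; Step~3 adds $\tilde{O}(n\kappa^3)$ and Step~4 adds $O(n\kappa)$, giving the claimed $\tilde{O}(n\kappa^4)$ total. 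The main obstacle is the geometric chain in case~(b)---ensuring that $\Expand$ produces a $Y$ whose $u$-side captures $v$---for which Lemma~\ref{lem:cut-capture} does the heavy lifting, with the size-equality corner case $|\Side{Y}(u)|=s(v)$ requiring a careful reinspection of that lemma's proof to confirm that the laminar and crossing-matching subcases still go through.
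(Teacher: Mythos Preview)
Your proposal is correct and follows essentially the same approach as the paper's own proof: the same three-case split on $s(v)$, the same use of sampling plus $\Expand$ plus Lemma~\ref{lem:cut-capture} in the unbalanced case, and the same Picard--Queyrenne argument (Corollary~\ref{cor:result-of-picard-queyrenne}) in the balanced case. You are in fact more careful than the paper in two respects---you give an explicit per-scale running-time accounting, and you flag the equality edge case $|\Side{Y}(u)|=s(v)$ (Lemma~\ref{lem:cut-capture} as stated requires strict inequality), which the paper's proof glosses over.
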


\begin{proof}
  If $|\Side{\Small{}(v)}(v)|\leq 100\kappa$, 
  then $T(v) = \Small{}(v)$
  after Step 1, with high probability.
  
  Suppose that $|\Side{\Small{}(v)}(v)| \in [2^j,2^{j+1}]$ and $2^{j+1}\leq t$.
    Then with high probability, at least one vertex $x\in V_j$
    is sampled in Step 2 such that $x\in \Side{\Small{}(v)}(v)$.
    Step 2 ($\Expand$) computes a cut $Y$ such that
    $|\Side{Y}(x)|\geq |\Side{\Small{}(v)}(v)|$, 
    so by Lemma~\ref{lem:cut-capture}, either 
    $v\in \Side{Y}(x)$ or $v\in Y$.  In the former case
    $\Small{}(v)$ is computed using the Corollary~\ref{cor:result-of-picard-queyrenne} algorithm.
    In the latter case $\Small{}(v)$ is computed directly
    using $\FindSmall$.
  
  Finally, if $\Small{}(v)$ is balanced, say $|\Side{\Small{}(v)}(v)|\geq t/4$, then w.h.p.~we would
  pick a pair $(x,y)$ in Step 3 such that $x\in \Side{\Small{}(v)}(v)$
  and $y\in V\backslash (\Small{}(v)\cup \Side{\Small{}(v)}(v))$.
  If this holds the algorithm of Corollary~\ref{cor:result-of-picard-queyrenne} 
  correctly computes $\Small{}(v)$.
\end{proof}

\section{Conclusion}\label{sect:conclusion}

This paper was directly inspired by the extended abstract of Cohen, Di Battista, Kanevsky, and Tamassia~\cite{cohen1993reinventing}.
Our goal was to substantiate the main claims of this paper,
and to simplify and improve 
the data structure that answers 
$(\kappa+1)$-connectivity queries.

We believe that our structural theorems can, ultimately,
be used to develop even 
more versatile vertex-cut data structures.  For example, is it possible to answer the following 
more general queries in $O(\kappa)$ time using $O(\kappa n)$ space?
\begin{description}
\item[\textsf{Is-it-a-cut?}$(u_1,\ldots,u_\kappa)$:] Return \emph{true} iff $\{u_1,\ldots,u_\kappa\}$ forms a $\kappa$-cut.
\item[\textsf{Part-of-a-cut?}$(u_1,\ldots,u_{\kappa-g})$:]
Return \emph{true} iff the input can be extended to 
a $\kappa$-cut 
$\{u_1,\ldots,u_{\kappa-g}\}\cup \{u_{\kappa-g+1},\ldots,u_\kappa\}$.
\end{description}

We assumed throughout the paper that $\kappa$ was not too large, specifically $\kappa<n/4$.
When $n<2\kappa$, \emph{all} cuts are (I,$\kappa$)-small by our classification, and the classification theorem (Theorem~\ref{thm1}) says very little about the structure of such cuts.
Understanding the structure of minimum vertex cuts when $\kappa$ is large, 
relative to $n$, 
is an interesting open problem.


\end{document}